\numberwithin{equation}{section}  
\theoremstyle{plain}
\newtheorem{theorem}{\protect\theoremname}
\theoremstyle{plain}
\theoremstyle{plain}
\newtheorem{corollary}{\protect\corollaryname}
\theoremstyle{plain}
\newtheorem{lemma}{\protect\lemmaname}
\theoremstyle{definition}
\newtheorem{example}{\protect\examplename}
\theoremstyle{definition}
\newtheorem{definition}{\protect\definitionname}
\theoremstyle{definition}
\newtheorem{remark}{\protect\remarkname}
\theoremstyle{definition}
  \providecommand{\corollaryname}{Corollary}
  \providecommand{\examplename}{Example}
  \providecommand{\lemmaname}{Lemma}
  \providecommand{\propositionname}{Proposition}
  \providecommand{\theoremname}{Theorem}
  \providecommand{\definitionname}{Definition}
  \providecommand{\remarkname}{Remark}
  \providecommand{\acknowledgementsname}{Acknowledgements}
\journal{Finite Fields and Their Applications}
\newcommand{\Bk}[1]{{\Big( {#1} \Big) }}
\newcommand{\bk}[1]{{\big( {#1} \big) }}
\begin{document}

\begin{frontmatter}

\title{Linear Codes Constructed From Two Weakly Regular Plateaued Functions with Index $ (p-1)/2 $ \tnoteref{mytitlenote}
	}

\author[mymainaddress]{Shudi Yang\corref{mycorrespondingauthor}}
\cortext[mycorrespondingauthor]{Corresponding author}
\ead{yangshudi@qfnu.edu.cn}

\author[mysecondaryaddress,mymainaddress]{Tonghui Zhang}
\ead{zhangthvvs@126.com}

\author[mythirdaddress]{Zheng-An Yao}
\ead{mcsyao@mail.sysu.edu.cn}

\address[mymainaddress]{School of Mathematical
	Sciences, Qufu Normal University, Shandong 273165, P.R.China}

\address[mysecondaryaddress]{School of Mathematics and
	Statistics,Fujian Normal University, Fujian 350117, P.R.China}
	
\address[mythirdaddress]{School of Mathematics, Sun Yat-Sen University, Guangzhou 510275,  P.R.China}

\begin{abstract}
Linear codes are the most important family of codes in cryptography and coding theory. Some codes have only a few weights and are widely used in many areas, such as authentication codes, secret sharing schemes and strongly regular graphs. By setting $ p\equiv 1 \pmod 4 $, we construct an infinite family of linear codes using two distinct weakly regular unbalanced (and balanced) plateaued functions with index $ (p-1)/2 $. Their weight distributions are completely determined by applying exponential sums and Walsh transform. As a result, most of our constructed codes have a few nonzero weights and are minimal.

\end{abstract}

\begin{keyword}
linear code \sep plateaued function \sep  the weight distribution \sep Walsh transform.
\MSC[2010] 94B15 \sep  11T71
\end{keyword}

\end{frontmatter}


\section{Introduction}\label{sec:intro}

Throughout the paper, we always let $p$ be an odd prime. We will use the symbol $ \mathbb{F}_{p} $ to denote the finite field with $p$ elements. A linear code $C$ over $\mathbb{F}_{p}$ with length $ n $, dimension $ k $ and minimum distance $ d $ is said to have parameters $ [n,k,d] $, which means that $ C $ is a $k$-dimensional linear subspace of $\mathbb{F}_{p}^n$
 with Hamming distance $d$. The code $ C $ is called
 projective if its dual code has minimum distance larger than $ 2 $.
 The Hamming weight of a codeword $\textbf{c}$, denoted by $\texttt{wt}(\textbf{c})$, is defined as the number of nonzero entries in $ \textbf{c} $. 
 Let $A_w = \# \{\textbf{c} \in C : \texttt{wt}(\textbf{c})=w \}$ for $ 0\leqslant w \leqslant n $. Then the sequence $\left(A_0,A_1,A_2,\dots,A_n \right)$ stands for the weight distribution of $ C $, where $ A_0=1 $. The code $ C $ is called $t$-weight if the number of nonzero $A_w$ for $1\leqslant w \leqslant n $ equals $t$. The weight distribution is of vital importance since it contains the information of computing the error probability of error detection and correction. In recent decades, a large number of linear codes have been investigated, most of which have a few weights and good parameters \cite{booleanchen2021,cao2021,heng2018,heng2021,kong2019,lu2020,mes2020,Sinak2021,yang2021,zhang2022,zheng2021,bent2022}. The construction of linear codes is usually based on different functions, such as, Boolean functions \cite{booleanchen2021}, bent functions \cite{bent2016tang,bent2022}, square functions \cite{squaretang2016} and weakly regular plateaued functions \cite{cao2021,mes2020,Sinak2021}.
 
 Let us introduce an efficient way to construct linear codes, which was proposed by Ding \emph{et al.} \cite{ding2007}. Let $ q=p^m $ for an integer $ m $, and $D$ be a subset of $ \mathbb{F}_q^*$ of size $ n $. Define
 \begin{align*}
 C_D=\left\{\textbf{c}(a)=\left(\textup{Tr}(ax) \right)_{x \in D}:a\in\mathbb{F}_q \right \},
 \end{align*}
 where $\textup{Tr}$ is the absolute trace function. It can be checked that $ C_D $ is a linear code of length $ n $. The set $ D $ is called the defining set of $ C_D $. This approach is soon generalized by Li \emph{et al.} \cite{li2017}, who defined a class of codes by
 \begin{align}\label{eq:code}
 C_D=\left\{\textbf{c}(a,b)=\left(\textup{Tr}(ax+by)\right)_{(x,y)\in D}: a,b \in \mathbb{F}_q\right \},
 \end{align}
 where the defining set $ D $ is a subset of $ \mathbb{F}_q^2 \backslash \{(0,0)\} $ of size $ n $. Based on this method, Wu \emph{et al.} \cite{wu2020} offered a new approach to linear codes using the defining set 
 \begin{align}\label{def:D}
 D= \left \{ (x,y) \in \mathbb{F}_q^2\backslash \{(0,0)\}:f(x)+g(y) =0 \right \},
 \end{align}
 where $f$ and $g$ are weakly regular bent functions from $ \mathbb{F}_q $ to $ \mathbb{F}_p $.
  Later, Cheng \emph{et al.} in \cite{cao2021} introduced several linear codes $ C_D $ of \eqref{eq:code} with a few weights by considering $f$ and $g$ to be weakly regular unbalanced $s$-plateaued functions in the defining set \eqref{def:D}, where $ 0\leqslant s \leqslant m $. In 2022, S{\i}nak \cite{Sinak2022} went deeper by choosing weakly regular unbalanced and balanced $s_f$-plateaued function $f$ and $s_g $-plateaued function $g$ in the defining set \eqref{def:D}, where $ 0\leqslant s_f,\,s_g \leqslant m $. All of them studied the indexes of functions $ f $ and $ g $ among the set $ \{2,p-1\} $, that is, $  l_f,l_g \in \{2,p-1\} $.
  
  Along the research line studied in \cite{cao2021,Sinak2022,wu2020}, we further consider the index of $ \frac{p-1}{2} $, where $ p \equiv 1 \pmod 4 $. Now the defining set is denoted by
 \begin{align}\label{def:D-fg}
 D_{f,g}= \left \{ (x,y) \in \mathbb{F}_q^2 \backslash \{(0,0)\}:f(x)+g(y) =0 \right \},  
 \end{align}
 where $f$ and $g$ are weakly regular unbalanced and balanced $s$-plateaued and $t$-plateaued functions, respectively, for $ 0\leqslant s,\,t \leqslant m $. For clarity, we only concentrate on the case $ l_g=\frac{p-1}{2} $ and $ l_f \in \{2,\frac{p-1}{2},p-1\} $, where $ p \equiv 1 \pmod 4 $. 
 In this paper, we consider the constructed codes $ C_{D_{f,g}} $ of \eqref{eq:code} and \eqref{def:D-fg}. In details, we will determine their parameters and their weight distributions using Walsh transform.
Their punctured codes are also determined. As we will show later, they are projective, and some of them are optimal since they meet the Griesmer bound.
 
 The rest of this paper is arranged as follows. We firstly present in
 Section \ref{sec:math} an introduction to the mathematical foundations, including cyclotomic fields and weakly regular plateaued functions.  
 Section 3 gives necessary results for our computation. Our main results are proposed in Section 4, where we study the weight distributions and the parameters of our constructed codes and their punctured ones. 
 Section 5 shows the minimality and applications of these codes.
 Finally, we conclude the whole paper in Section 6.

 \section{Mathematical background}\label{sec:math}
 
 In this section, we give a brief exposition of exponential sums, cyclotomic fields, cyclotomic classes and weakly regular plateaued functions. 
 First of all, we set up notation and terminology.
Let $ q=p^m $ for an integer $ m \geqslant 2 $.
The set of square (resp. non-square) elements in $\mathbb{F}_{p}^*$ is denoted by $ S_q$ (resp. $ N_{sq}$). Let $ \eta $ be the quadratic character of $\mathbb{F}_{p} $, that is,
\begin{align*}
\eta(x)=\begin{cases}
0, & \textup{if } x =0,\\
1, & \textup{if } x \in S_q,\\
-1, & \textup{if } x \in N_{sq}.
\end{cases}
\end{align*}

 \subsection{Cyclotomic classes and cyclotomic fields}
 
 Let $ \theta $ be a fixed primitive element of $ \mathbb{F}_{q} $ and $ N \geqslant 2 $ be a divisor of $ q-1 $. The
 cyclotomic classes of order $  N $ in $ \mathbb{F}_{q} $ are defined by $ C_i^{(N,q)} = \theta^i\langle \theta^N  \rangle  $
  for $ i = 0, 1, \ldots ,N - 1 $,
 where $ \langle \theta^N  \rangle $ stands for the subgroup of $ \mathbb{F}_{q}^* $ generated by $ \theta^N $. Obviously, we have $ \mathbb{F}_{q}^*=  \bigcup_{i=0}^{N-1} C_i^{(N,q)} $ and every cyclotomic class has the same number of elements, that is $ \# C_i^{(N,q)} = \frac{q-1}{N} $. 
 
The $p$-th cyclotomic field is denoted by $K=\mathbb{Q}(\zeta_p)$, where $ \zeta_p = \exp\bk{\frac{2\pi \sqrt{-1}}{p}} $. The following lemma enunciates useful properties of this field.

\begin{lemma}\label{lem:cyclo}(\cite{ire2013})
	The following assertions hold for $K=\mathbb{Q}(\zeta_p)$. 
	
$ (1) $ The ring of integers in $ K $ is $ \mathbb{Z}[\zeta_p]$, 
and $\{ \zeta_p^i:1\leqslant i \leqslant p-1\} $ is an integer basis of $ \mathbb{Z}[\zeta_p]$. 

$ (2) $ The field extension  $K/\mathbb{Q}$ is Galois of degree $p-1$, and the Galois group $\textup{Gal}(K/\mathbb{Q})=\{\sigma_z: z \in \mathbb{F}_p^*\}$, where the automorphism $ \sigma_z $ of $ K $ is defined by $ \sigma_z(\zeta_p)=\zeta_p^z $.

$ (3) $ The cyclotomic field $ K $ has a unique quadratic subfield $\mathbb{Q}(\sqrt{p^*})$, where $ p^*= \eta(-1)p $. For $ z \in \mathbb{F}_p^* $,  $\sigma_z(\sqrt{p^*})= \eta(z)\sqrt{p^*} $. 
\end{lemma}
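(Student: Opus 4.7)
The plan is to treat the three parts separately, drawing on the standard Gauss sum and discriminant machinery; all three statements are classical, so the proof amounts to stringing together the right facts from the theory of cyclotomic fields.

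For part (1), I would first identify the minimal polynomial of $\zeta_p$ over $\mathbb{Q}$ as the cyclotomic polynomial $\Phi_p(x) = 1 + x + \cdots + x^{p-1}$. Irreducibility follows from applying Eisenstein's criterion at $p$ to $\Phi_p(x+1)$, so $[K:\mathbb{Q}] = p-1$ and $\{1,\zeta_p,\ldots,\zeta_p^{p-2}\}$ is a $\mathbb{Q}$-basis. To transport this to the claimed basis $\{\zeta_p^i : 1 \leqslant i \leqslant p-1\}$, use the relation $1 = -(\zeta_p + \zeta_p^2 + \cdots + \zeta_p^{p-1})$, which shows the two sets span the same $\mathbb{Z}$-module. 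The equality $\mathcal{O}_K = \mathbb{Z}[\zeta_p]$ is the subtle point; I would follow the standard route of computing $\operatorname{disc}(\Phi_p) = \pm p^{p-2}$ (using $\operatorname{disc}(f) = (-1)^{\binom{n}{2}}\operatorname{Res}(f,f')/\text{lc}(f)$), observing that only the prime $p$ can divide the index $[\mathcal{O}_K : \mathbb{Z}[\zeta_p]]$, and then ruling out this possibility through the local analysis at $p$ — namely that $1-\zeta_p$ is a uniformizer with $(1-\zeta_p)^{p-1} = (p)$ up to a unit, forcing the index to be $1$.

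For part (2), I would argue that $K$ contains all $p$ roots of $x^p - 1$, so $K$ is the splitting field of this separable polynomial and hence $K/\mathbb{Q}$ is Galois. Any $\sigma \in \textup{Gal}(K/\mathbb{Q})$ is determined by $\sigma(\zeta_p)$, which must be a primitive $p$-th root of unity and therefore of the form $\zeta_p^z$ with $z \in \mathbb{F}_p^*$. The map $\sigma \mapsto z$ is an injective group homomorphism into $\mathbb{F}_p^*$, and a comparison of cardinalities (both sides have order $p-1$) shows it is a bijection.

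For part (3), the uniqueness of the quadratic subfield follows from Galois correspondence: since $\textup{Gal}(K/\mathbb{Q}) \cong \mathbb{F}_p^*$ is cyclic of even order $p-1$, it admits a unique subgroup of index $2$, namely the squares, which corresponds to a unique quadratic subfield. To identify it concretely, I would introduce the quadratic Gauss sum $g = \sum_{x \in \mathbb{F}_p^*} \eta(x)\zeta_p^x$ and establish the classical identity $g^2 = \eta(-1)p = p^*$; this identifies the unique quadratic subfield with $\mathbb{Q}(\sqrt{p^*})$. The Galois action follows by a direct change of variable:
\begin{align*}
\sigma_z(g) = \sum_{x\in\mathbb{F}_p^*} \eta(x)\zeta_p^{zx} = \sum_{y\in\mathbb{F}_p^*} \eta(z^{-1}y)\zeta_p^y = \eta(z^{-1})g = \eta(z)g,
\end{align*}
using $\eta(z^{-1}) = \eta(z)$ since $\eta$ takes values in $\{\pm 1\}$. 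Hence $\sigma_z(\sqrt{p^*}) = \eta(z)\sqrt{p^*}$.

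The main obstacle is the verification that $\mathcal{O}_K = \mathbb{Z}[\zeta_p]$ and the evaluation $g^2 = p^*$; both are textbook computations but require care with signs. Everything else is a packaging exercise, and for a published paper the cleanest presentation is simply to cite \cite{ire2013} for these standard facts, as the authors have done.
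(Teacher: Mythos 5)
The paper offers no proof of this lemma at all --- it is quoted verbatim from the Ireland--Rosen reference --- so there is no argument of the authors' to compare against. Your outline is the standard textbook derivation and it is correct: Eisenstein on $\Phi_p(x+1)$ for the degree, the discriminant-plus-total-ramification argument for $\mathcal{O}_K=\mathbb{Z}[\zeta_p]$ (together with the unimodular change of basis from $\{1,\zeta_p,\dots,\zeta_p^{p-2}\}$ to $\{\zeta_p,\dots,\zeta_p^{p-1}\}$ via $1=-\sum_{i=1}^{p-1}\zeta_p^i$), the splitting-field argument for part (2), and the Gauss-sum identity $g^2=\eta(-1)p$ with the substitution $x\mapsto z^{-1}y$ for part (3). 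The only place where you assert rather than argue is the step ``forcing the index to be $1$'' at the prime $p$; that requires the standard lemma that total ramification of $p$ with uniformizer $1-\zeta_p$ implies $\mathbb{Z}[\zeta_p]$ is $p$-maximal, but this is exactly the content of the cited source, and your closing remark that a citation is the appropriate presentation here is the right call.
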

It is easily proved from Lemma \ref{lem:cyclo} that $\sigma_z(\sqrt{p^*}^m)=\eta^m(z)\sqrt{p^*}^m$ and $\sigma_z(\zeta_p^a)=\zeta_p^{za}$ for all $a \in \mathbb{F}_p$.

 \subsection{Exponential sums}
 In this subsection, we briefly sketch the concept of exponential sums.
 Let $\eta_m $ denote the quadratic character of $\mathbb{F}_q $, where $ q=p^m $. 
 The quadratic Gauss sum over $\mathbb{F}_q $ is defined by
 \begin{align*}
 G(\eta_m )=\sum_{x\in\mathbb{F}_q^*}\eta_m (x)\chi_1(x),
 \end{align*}
where $ \chi_1(x)=\zeta_p^{\text{Tr}(x)}$ is the canonical additive character  of $\mathbb{F}_q $, and $\text{Tr}$ is the absolute trace function. From Theorem 5.15 in \cite{lidl1997},	
 $ G(\eta_m)=(-1)^{m-1}\sqrt{p^*}^m $ and
 $  G(\eta )= \sqrt{p^*} $.
 
 For $ n \in \mathbb{N} $ and $ a \in \mathbb{F}_{q}^* $, the Jacobsthal sum is defined by 
 \begin{align*}
 H_n(a) = \sum_{x\in \mathbb{F}_{q}} \eta_m(x^{n+1}+ax) =  \sum_{x\in \mathbb{F}_{q}} \eta_m (x)\eta_m(x^{n}+a).
 \end{align*}
 Define 
  \begin{align*}
 I_n(a) = \sum_{x\in \mathbb{F}_{q}} \eta_m(x^{n}+a).
  \end{align*}
It is a companion sum related to Jacobsthal sums because $ I_{2n}(a)=I_n(a) + H_n(a) $, which is due to Theorem 5.50 in \cite{lidl1997}. We can evaluate easily that $ I_1(a)=0 $ and $ I_2(a)=-1 $ for all $ a \in \mathbb{F}_{q}^* $.
In general, the sums $  I_n(a) $ can be described in terms of Jacobi sums.

 \begin{lemma}[Theorem 5.51, \cite{lidl1997}]\label{lm:companion}
 For all $ a \in \mathbb{F}_{q}^* $ and $ n \in \mathbb{N} $, we have
  	\begin{equation*}
 I_n(a) = \eta_m(a)\sum_{j=1}^{d-1} \lambda^j(-a)J(\lambda^j,\eta_m),
 	\end{equation*} 
 where $ \lambda $ is a multiplicative character of $ \mathbb{F}_{q} $ of order $ d=\gcd(n,q-1) $, and $ J(\lambda^j,\eta_m) $ is a Jacobi sum in $\mathbb{F}_q $. 
  \end{lemma}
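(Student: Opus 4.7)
The plan is to expand $I_n(a)$ by interchanging the outer sum over $x$ with an inner decomposition of the map $x \mapsto x^n$ via multiplicative characters, and then to recognize a Jacobi sum after a linear change of variable. Since the statement is classical, the proof should be purely formal and rely on two standard ingredients: the orthogonality-style identity counting $n$-th power preimages, and the translation $\sum_{y \in \mathbb{F}_q}\eta_m(y) = 0$.

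First I would separate off $x = 0$, writing $I_n(a) = \eta_m(a) + \sum_{x \in \mathbb{F}_q^*} \eta_m(x^n + a)$. Next, letting $\lambda$ be a multiplicative character of $\mathbb{F}_q^*$ of order $d = \gcd(n, q-1)$, I would use the fact that, for $y \in \mathbb{F}_q^*$, the number of $x \in \mathbb{F}_q^*$ with $x^n = y$ equals $\sum_{j=0}^{d-1}\lambda^j(y)$. Substituting $y = x^n$ gives
\begin{equation*}
\sum_{x \in \mathbb{F}_q^*} \eta_m(x^n + a) = \sum_{j=0}^{d-1} \sum_{y \in \mathbb{F}_q^*} \lambda^j(y)\, \eta_m(y + a).
\end{equation*}

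The $j = 0$ term is $\sum_{y \in \mathbb{F}_q^*} \eta_m(y + a) = \left(\sum_{z \in \mathbb{F}_q} \eta_m(z)\right) - \eta_m(a) = -\eta_m(a)$, which cancels the isolated $\eta_m(a)$ coming from $x = 0$. For each $j \geqslant 1$, since $a \neq 0$, I would substitute $y = -a z$ with $z \in \mathbb{F}_q^*$. Then $y + a = a(1 - z)$, so $\eta_m(y + a) = \eta_m(a)\eta_m(1-z)$ and $\lambda^j(y) = \lambda^j(-a)\lambda^j(z)$. This yields
\begin{equation*}
\sum_{y \in \mathbb{F}_q^*} \lambda^j(y)\,\eta_m(y + a) = \lambda^j(-a)\,\eta_m(a)\, \sum_{z \in \mathbb{F}_q^*} \lambda^j(z)\,\eta_m(1-z) = \lambda^j(-a)\,\eta_m(a)\, J(\lambda^j, \eta_m),
\end{equation*}
by the definition of the Jacobi sum (using the usual conventions $\lambda^j(0) = \eta_m(0) = 0$). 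Summing over $j = 1, \ldots, d-1$ and factoring out $\eta_m(a)$ gives the claimed identity.

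The only genuinely delicate point is the counting formula $\#\{x \in \mathbb{F}_q^* : x^n = y\} = \sum_{j=0}^{d-1}\lambda^j(y)$, which should be justified by noting that $x^n = y$ has the same solutions in $\mathbb{F}_q^*$ as $x^d = y$ and applying orthogonality of the $d$-element subgroup of characters killed by the $d$-th power map; everything else is routine substitution. The verification that the $j = 0$ contribution cancels exactly with the $x = 0$ term is the small but essential bookkeeping that produces the clean final formula.
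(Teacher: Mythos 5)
The paper offers no proof of this lemma; it is quoted verbatim as Theorem 5.51 of Lidl and Niederreiter, so there is no internal argument to compare against. Your derivation is correct and is essentially the standard textbook proof of that theorem: the preimage-counting identity $\#\{x \in \mathbb{F}_q^* : x^n = y\} = \sum_{j=0}^{d-1}\lambda^j(y)$ is the right key step, the $j=0$ term does cancel the isolated $\eta_m(a)$ from $x=0$ exactly as you say, and the substitution $y=-az$ correctly produces $\lambda^j(-a)\,\eta_m(a)\,J(\lambda^j,\eta_m)$ under the convention that nontrivial multiplicative characters vanish at $0$. The only point worth stating explicitly is the one you already flag: $x^n=y$ and $x^d=y$ have the same solution sets in $\mathbb{F}_q^*$ because the $n$-th and $d$-th power maps have the same image and the same kernel size $d$, after which orthogonality of the $d$ characters of order dividing $d$ gives the counting formula.
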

 
  \begin{lemma}[Theorem 5.33, \cite{lidl1997}]\label{lem:expo sum}
  	Let $ q =p^m $ be odd and $ f(x)=a_2 x^2 +a_1x +a_0\in \mathbb{F}_{q}[x] $ with
  	$ a_2\neq 0 $.	Then
  	\begin{equation*}
  	\sum_{x\in
  		\mathbb{F}_{q}}\chi_1(f(x))= \chi_1 (a_0- a_1^2(4a_2)^{-1})\eta_m (a_2)G(\eta_m).
  	\end{equation*} 
  \end{lemma}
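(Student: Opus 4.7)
The plan is to reduce the general quadratic exponential sum to the pure quadratic Gauss sum $G(\eta_m)$ by completing the square, a trick available because $p$ (hence $2a_2$) is invertible in $\mathbb{F}_q$.

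First I would write
\begin{equation*}
a_2 x^2 + a_1 x + a_0 = a_2\bigl(x + a_1(2a_2)^{-1}\bigr)^2 + \bigl(a_0 - a_1^2 (4a_2)^{-1}\bigr),
\end{equation*}
and substitute $y = x + a_1(2a_2)^{-1}$. Since this shift is a bijection of $\mathbb{F}_q$, the summation range is unchanged, so the sum factors as
\begin{equation*}
\sum_{x\in\mathbb{F}_q}\chi_1(f(x)) = \chi_1\bigl(a_0 - a_1^2(4a_2)^{-1}\bigr) \sum_{y\in\mathbb{F}_q}\chi_1(a_2 y^2).
\end{equation*}
All the difficulty is now concentrated in the inner sum $S := \sum_{y\in\mathbb{F}_q}\chi_1(a_2 y^2)$.

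Next I would handle $S$ by sorting $y$ according to the value $z = y^2$. For $z\in\mathbb{F}_q^*$ the number of preimages under squaring equals $1 + \eta_m(z)$, while $z=0$ contributes once; hence
\begin{equation*}
S = 1 + \sum_{z\in\mathbb{F}_q^*}\bigl(1 + \eta_m(z)\bigr)\chi_1(a_2 z) = \sum_{z\in\mathbb{F}_q}\chi_1(a_2 z) + \sum_{z\in\mathbb{F}_q^*}\eta_m(z)\chi_1(a_2 z).
\end{equation*}
The first sum vanishes because $a_2 \neq 0$. In the second, the substitution $w = a_2 z$ (another bijection of $\mathbb{F}_q^*$) together with the multiplicativity $\eta_m(a_2^{-1}w) = \eta_m(a_2^{-1})\eta_m(w) = \eta_m(a_2)\eta_m(w)$ (using that $\eta_m$ is $\pm 1$-valued on $\mathbb{F}_q^*$) gives
\begin{equation*}
S = \eta_m(a_2)\sum_{w\in\mathbb{F}_q^*}\eta_m(w)\chi_1(w) = \eta_m(a_2)\,G(\eta_m).
\end{equation*}

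Combining with the prefactor yields the claimed identity. The only real obstacle is the bookkeeping around $z=0$ in the squaring count and verifying that the quadratic character behaves correctly under the rescaling $w = a_2 z$; everything else is just completing the square. Since the result is quoted from Lidl--Niederreiter, in the paper it would simply be cited rather than proved in full.
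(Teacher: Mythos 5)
Your argument is correct and complete: completing the square (valid since $q$ is odd, so $2a_2$ is invertible), factoring out $\chi_1(a_0 - a_1^2(4a_2)^{-1})$ by additivity of $\chi_1$, counting square roots via $1+\eta_m(z)$, and rescaling by $a_2$ using $\eta_m(a_2^{-1})=\eta_m(a_2)$ is exactly the standard derivation of Theorem 5.33 in Lidl--Niederreiter. The paper itself offers no proof of this lemma --- it is quoted directly from that reference, as you correctly anticipated --- so there is nothing to compare beyond noting that your reconstruction matches the textbook argument.
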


 \subsection{Weakly regular plateaued functions}
 
We now introduce weakly regular plateaued functions and review some basic facts about them. 
 Let $f:\mathbb{F}_q  \rightarrow \mathbb{F}_p$ be a $ p $-ary function. For $ \beta \in \mathbb{F}_q  $, the Walsh transform of $ f $ is defined as a complex-valued function $ \widehat{\chi}_f $ on $ \mathbb{F}_q  $, 
 \begin{align*}
 \widehat{\chi}_f(\beta)=\sum_{x\in \mathbb{F}_{q}}\zeta_p^{f(x)-\textup{Tr}(\beta x)}.
 \end{align*}
A $ p $-ary function $ f $ is called to be balanced if it satisfies $ \widehat{\chi}_f(0)=0 $; otherwise, it is called unbalanced over $ \mathbb{F}_q $.
 
 
As a natural extension of bent functions, Zheng \emph{et al.} firstly set up the concept of plateaued functions in characteristic $ 2 $ in \cite{zhang1999}, and later Mesnager \cite{Mes2014} gave a general version of any characteristic $ p $. Several years ago, Mesnager \emph{et al.} presented the notion of (non)-weakly regular plateaued functions in their work \cite{Mes2019}. We follow the notation used in \cite{Mes2019}. For each $\beta \in \mathbb{F}_{q}$, a function $ f $ is called $ s $-plateaued if $|\widehat{\chi}_f(\beta)|^2 \in \{0,  p^{m+s}\}$, where $0 \leqslant s \leqslant m $. It is worth noting that every bent function is $0$-plateaued. The Walsh support of function $ f $, denoted by $ \mathcal{S}_f $, is defined by
 \begin{align*}
 \mathcal{S}_f =\{\beta \in \mathbb{F}_{q}:|\widehat{\chi}_f(\beta)|^2=p^{m+s}\}.
 \end{align*}
Applying the Parseval identity, one gets the absolute Walsh distribution of plateaued functions.

 \begin{lemma}\label{lem:car}(Lemma 1, \cite{Mes2014})
 	Let $f $ be an $s$-plateaued function. Then for $ \beta \in \mathbb{F}_{q}$, $|\widehat{\chi}_f(\beta)|^2 $ takes the value $p^{m+s}$ for $p^{m-s}$ times and the value $0$ for $p^{m}-p^{m-s}$ times.
 \end{lemma}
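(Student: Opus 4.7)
\noindent The plan is to obtain the count by combining the defining dichotomy $|\widehat{\chi}_f(\beta)|^2 \in \{0,p^{m+s}\}$ with a single global identity, namely Parseval's identity. Let $N$ denote the cardinality of the Walsh support $\mathcal{S}_f$. Since every $\beta \in \mathbb{F}_q$ contributes either $p^{m+s}$ (if $\beta \in \mathcal{S}_f$) or $0$ (otherwise) to $|\widehat{\chi}_f(\beta)|^2$, once $N$ is determined the full distribution statement follows, because the remaining $p^m - N$ points of $\mathbb{F}_q$ must account for all zero values.

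\noindent First I would compute $\sum_{\beta \in \mathbb{F}_q} |\widehat{\chi}_f(\beta)|^2$ directly from the definition of the Walsh transform. Expanding $|\widehat{\chi}_f(\beta)|^2 = \widehat{\chi}_f(\beta)\overline{\widehat{\chi}_f(\beta)}$ as a double sum over $(x,y) \in \mathbb{F}_q^2$ with factor $\zeta_p^{f(x)-f(y)}\zeta_p^{-\textup{Tr}(\beta(x-y))}$, and then swapping the order of summation so that the sum over $\beta$ is taken first, the standard orthogonality relation for the canonical additive character of $\mathbb{F}_q$ collapses the inner sum to $p^m$ when $x=y$ and $0$ otherwise. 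This yields
\begin{equation*}
\sum_{\beta \in \mathbb{F}_q} |\widehat{\chi}_f(\beta)|^2 = p^m \cdot \#\{x \in \mathbb{F}_q\} = p^{2m}.
\end{equation*}

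\noindent Next I would exploit the plateaued hypothesis: the left-hand side equals $N \cdot p^{m+s}$, since only $\beta \in \mathcal{S}_f$ contribute and each such $\beta$ contributes exactly $p^{m+s}$. Equating the two expressions gives $N = p^{m-s}$, and hence $|\widehat{\chi}_f(\beta)|^2$ attains the value $0$ exactly $p^m - p^{m-s}$ times, finishing the proof.

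\noindent I do not anticipate a substantive obstacle here; the only slight subtlety is to make sure the Parseval computation uses the additive-character orthogonality in $\mathbb{F}_q$ (not in $\mathbb{F}_p$), so that $\sum_\beta \zeta_p^{-\textup{Tr}(\beta(x-y))}$ correctly evaluates to $p^m \cdot \mathbf{1}_{x=y}$ via the fact that $\textup{Tr}$ is a surjective $\mathbb{F}_p$-linear map. Everything else is a direct substitution.
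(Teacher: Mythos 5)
Your proof is correct and follows exactly the route the paper indicates: it states just before the lemma that the result is obtained by "applying the Parseval identity," and then cites Lemma 1 of Mesnager (2014) rather than giving details. Your computation of $\sum_{\beta}|\widehat{\chi}_f(\beta)|^2 = p^{2m}$ via additive-character orthogonality over $\mathbb{F}_q$, combined with the dichotomy $|\widehat{\chi}_f(\beta)|^2\in\{0,p^{m+s}\}$, is the standard argument and fills in precisely what the citation leaves implicit.
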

 
 From Lemma \ref{lem:car}, the cardinality of $ \mathcal{S}_f $ is given by $\# \mathcal{S}_f=p^{m-s}$.

 \begin{definition}\label{defa}(\cite{Mes2019})
 	Let $ f $ be an $s$-plateaued function, where $ 0 \leqslant s \leqslant m $. Then, $f$ is called weakly regular $s$-plateaued if there exists a complex number $ u $ having unit magnitude such that
 	\begin{align*}
 	\widehat{\chi}_f(\beta) \in \{0,up^{\frac{m+s}{2}}\zeta_p^{g(\beta)}\}
 	\end{align*}
 	for all $\beta \in \mathbb{F}_{q}$, where $g$ is a $p$-ary function over $\mathbb{F}_q$ satisfying $ g(\beta)=0 $ for all $\beta \in \mathbb{F}_q \setminus \mathcal{S}_f $. Otherwise, if $ u $ depends on $ \beta $ with $ |u|=1 $, then $f$ is called non-weakly regular $s$-plateaued. Particularly, a weakly regular plateaued function $f$ is said to be regular plateaued if $u=1$.
 \end{definition}
 \begin{lemma}\label{lem:Walsh-f}(Lemma 5, \cite{Mes2019})
 	Let $\beta \in \mathbb{F}_{q}$ and  $f $ be a weakly regular $s$-plateaued function. For every  $\beta \in \mathcal{S}_f $ we have	
 	\begin{align*}
 	\widehat{\chi}_f(\beta)=\varepsilon_f \sqrt{p^*}^{m+s}\zeta_p^{f^{\star}(\beta)},
 	\end{align*}
 	where $\varepsilon_f \in \{ \pm 1 \} $ is the sign of $\widehat{\chi}_f $ and $f^{\star}$ is a $p$-ary function over $\mathbb{F}_q $ with $ f^{\star}(\beta)=0 $ for all $\beta \in \mathbb{F}_q \setminus \mathcal{S}_f $. We call $ f^{\star} $ the dual function of $ f $.	
 \end{lemma}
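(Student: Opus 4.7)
The plan is to refine the information supplied by Definition \ref{defa} using the quadratic subfield structure of $K = \mathbb{Q}(\zeta_p)$ recorded in Lemma \ref{lem:cyclo}, plus a little algebraic number theory in $\mathbb{Z}[\zeta_p]$. Definition \ref{defa} already yields, for each $\beta \in \mathcal{S}_f$,
$$\widehat{\chi}_f(\beta) = u\, p^{(m+s)/2}\,\zeta_p^{g(\beta)},$$
where $u \in \mathbb{C}$ is fixed with $|u|=1$ and is independent of $\beta$. Because $\widehat{\chi}_f(\beta)$ lies in $\mathbb{Z}[\zeta_p]$ and $\zeta_p$ is a unit there, the element $\alpha := u\, p^{(m+s)/2}$ itself lies in $\mathbb{Z}[\zeta_p]$ and satisfies $|\alpha|^2 = p^{m+s}$; the target is to show $\alpha = \pm\sqrt{p^*}^{m+s}\zeta_p^{c}$ for some $c \in \mathbb{F}_p$.

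First I would show that every Galois conjugate of $\alpha/\sqrt{p^*}^{m+s}$ has complex modulus $1$. Applying any $\sigma_a \in \textup{Gal}(K/\mathbb{Q})$ to the identity $\widehat{\chi}_f(\beta)\,\overline{\widehat{\chi}_f(\beta)} = p^{m+s}$ and using that complex conjugation on $\mathbb{Z}[\zeta_p]$ coincides with $\sigma_{-1}$ (so $\sigma_a$ commutes with it) yields $|\sigma_a(\alpha)| = p^{(m+s)/2}$. On the other side, part $(3)$ of Lemma \ref{lem:cyclo} gives $\sigma_a(\sqrt{p^*}^{m+s}) = \eta^{m+s}(a)\sqrt{p^*}^{m+s}$, which has the same modulus $p^{(m+s)/2}$. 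Next I would check that the quotient is an algebraic integer. Since $p$ is totally ramified in $\mathbb{Z}[\zeta_p]$ with $(p) = (\pi)^{p-1}$ for the unique prime $\pi = 1-\zeta_p$ above $p$, and since $(\sqrt{p^*})^2 = p^*$ generates $(p)$, one has $(\sqrt{p^*}) = (\pi)^{(p-1)/2}$. The relation $(\alpha)(\overline{\alpha}) = (p)^{m+s}$ together with $(\overline{\pi})=(\pi)$ forces $(\alpha) = (\pi)^{(p-1)(m+s)/2} = (\sqrt{p^*}^{m+s})$, so $\alpha/\sqrt{p^*}^{m+s}$ is a unit in $\mathbb{Z}[\zeta_p]$ whose every Galois conjugate lies on the unit circle.

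Kronecker's theorem then forces $\alpha/\sqrt{p^*}^{m+s}$ to be a root of unity in $\mathbb{Q}(\zeta_p)$, and the only such roots are $\pm\zeta_p^i$. This delivers $\alpha = \varepsilon_f\sqrt{p^*}^{m+s}\zeta_p^{c}$ with a unique $\varepsilon_f \in \{\pm 1\}$ independent of $\beta$ (since $u$ is) and some $c \in \mathbb{F}_p$; setting $f^\star(\beta) := c + g(\beta)$ on $\mathcal{S}_f$ and $f^\star(\beta) := 0$ elsewhere reproduces the claimed formula. The main technical obstacle is the ideal-factorisation step that identifies $(\alpha)$ with $(\sqrt{p^*}^{m+s})$; it hinges on total ramification of $p$ in the cyclotomic ring and on the uniqueness of the prime above it. Once that is in place, the rest is a routine combination of Galois action, the quadratic subfield of Lemma \ref{lem:cyclo}, and Kronecker's theorem.
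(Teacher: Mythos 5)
Your argument is sound, but note that the paper does not prove this statement at all: it is imported verbatim as Lemma~5 of \cite{Mes2019}, so there is no in-paper proof to compare against. What you have written is a legitimate self-contained derivation along the standard algebraic-number-theoretic route (the same circle of ideas used for weakly regular bent functions going back to Helleseth--Kholosha). The chain of steps checks out: $\alpha=u\,p^{(m+s)/2}$ lies in $\mathbb{Z}[\zeta_p]$ because $\widehat{\chi}_f(\beta)\in\mathbb{Z}[\zeta_p]$ and $\zeta_p^{g(\beta)}$ is a unit (this tacitly uses $\mathcal{S}_f\neq\emptyset$, which is guaranteed by Lemma~\ref{lem:car} since $\#\mathcal{S}_f=p^{m-s}>0$); the identity $\sigma_a(\overline{\alpha})=\overline{\sigma_a(\alpha)}$ holds because $\mathrm{Gal}(K/\mathbb{Q})$ is abelian and complex conjugation is $\sigma_{-1}$; unique factorisation of ideals together with $(\alpha)(\overline{\alpha})=(\pi)^{(p-1)(m+s)}$ and $(\overline{\pi})=(\pi)$ (as $\overline{\pi}=-\zeta_p^{-1}\pi$) pins down $(\alpha)=(\pi)^{(p-1)(m+s)/2}=(\sqrt{p^*}^{\,m+s})$, where the exponent is an integer since $p-1$ is even; and Kronecker's theorem applied to the resulting unit, all of whose conjugates lie on the unit circle, yields a root of unity, which in $\mathbb{Q}(\zeta_p)$ must be of the form $\pm\zeta_p^{i}$. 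The constants $\varepsilon_f$ and $c$ are independent of $\beta$ because $u$ is, so setting $f^{\star}=c+g$ on $\mathcal{S}_f$ and $0$ elsewhere recovers the stated form. The only cosmetic caveat is that with this construction $f^{\star}$ need not vanish off $\mathcal{S}_f$ ``automatically''--- you simply define it to be $0$ there, exactly as the lemma permits.
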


 We now introduce two non-trivial subclasses of weakly regular plateaued functions. Let $f $ be a weakly regular unbalanced (resp. balanced) $s$-plateaued function with $ 0 \leqslant s \leqslant m $. We denote by \rm{WRP} (resp. \rm{WRPB}) the subclass of the unbalanced (resp. balanced) functions $ f $ that meet the following homogeneous conditions simultaneously:
 \begin{enumerate}
 	\item $f(0)=0$;
 	\item There exists a positive integer $h$, such that $ 2\mid h $, $ \gcd(h-1, p-1)=1$ and $f(zx)=z^hf(x)$ for any $z \in \mathbb{F}_p^*$ and $x \in \mathbb{F}_q$.
 \end{enumerate}
 \begin{remark}
 	For every $f \in \rm{WRP}$ (resp. $f \in \rm{WRPB}$), we have $0 \in \mathcal{S}_f$ (resp. $0 \notin \mathcal{S}_f$).
 \end{remark}

The following lemmas, due to \cite{mes2020} and \cite{Sinak2022}, play a significant role in calculating the parameters of our
constructed codes.

 \begin{lemma}\label{lemc}(Lemma 6, \cite{mes2020})
 	Let $f \in \rm{WRP}$ or $f \in \rm{WRPB}$ with $\widehat{\chi}_f(\beta)=\varepsilon_f \sqrt{p^*}^{m+s}\zeta_p^{f^{\star}(\beta)}$ for every $\beta \in  \mathcal{S}_f$. Then, for every $z \in \mathbb{F}_p^*$, $z\beta \in \mathcal{S}_f$ if $\beta \in \mathcal{S}_f$, and otherwise, $z\beta \in \mathbb{F}_q  \backslash \mathcal{S}_f $.
 \end{lemma}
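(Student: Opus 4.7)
The plan is to reduce the statement to computing $\widehat{\chi}_f(z\beta)$ by a judicious change of variable in the defining sum, and then transfer information about its magnitude through a Galois automorphism of $K = \mathbb{Q}(\zeta_p)$. The key ingredient is the homogeneity $f(wy) = w^h f(y)$ for all $w \in \mathbb{F}_p^*$, combined with the coprimality $\gcd(h-1,p-1) = 1$ coming from the defining conditions of $\rm{WRP}$ and $\rm{WRPB}$.

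First I would pick a nonnegative integer $c$ with $c(h-1) \equiv 1 \pmod{p-1}$, which exists by $\gcd(h-1,p-1)=1$. Setting $w = z^{c} \in \mathbb{F}_p^*$ and substituting $x = wy$ (a bijection on $\mathbb{F}_q$) in the Walsh sum yields
\begin{align*}
\widehat{\chi}_f(z\beta) = \sum_{y \in \mathbb{F}_q} \zeta_p^{f(wy) - \textup{Tr}(z\beta \cdot wy)} = \sum_{y \in \mathbb{F}_q} \zeta_p^{z^{ch} f(y) - z^{c+1}\textup{Tr}(\beta y)}.
\end{align*}
The choice of $c$ forces $z^{ch} = z^{c+1}$ inside $\mathbb{F}_p^*$; writing $z' := z^{c+1}$ and pulling this common scalar out of the exponent gives
\begin{align*}
\widehat{\chi}_f(z\beta) = \sum_{y \in \mathbb{F}_q} \zeta_p^{z'\bk{f(y) - \textup{Tr}(\beta y)}} = \sigma_{z'}\bk{\widehat{\chi}_f(\beta)},
\end{align*}
where $\sigma_{z'}$ is the automorphism of $K$ from Lemma~\ref{lem:cyclo}(2).

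Next I would invoke the general fact that every element of $\textup{Gal}(K/\mathbb{Q})$ commutes with complex conjugation, since $\sigma_{z'}(\zeta_p^{-k}) = \zeta_p^{-z'k} = \overline{\sigma_{z'}(\zeta_p^k)}$, and hence $|\sigma_{z'}(\alpha)|^2 = \sigma_{z'}(|\alpha|^2)$ for every $\alpha \in K$. Taking $\alpha = \widehat{\chi}_f(\beta)$, whose squared modulus lies in $\{0, p^{m+s}\} \subset \mathbb{Q}$ and is therefore fixed by the Galois action, we conclude $|\widehat{\chi}_f(z\beta)|^2 = |\widehat{\chi}_f(\beta)|^2$. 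The dichotomy $z\beta \in \mathcal{S}_f \Longleftrightarrow \beta \in \mathcal{S}_f$ follows at once.

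The only delicate point is arranging the substitution so that the two a priori different powers $z^{ch}$ and $z^{c+1}$ occurring in the exponent collapse to a single scalar $z'$; this is precisely where the homogeneity condition $\gcd(h-1,p-1) = 1$ in the definition of $\rm{WRP}$/$\rm{WRPB}$ is used in an essential way. The remainder of the argument is a routine Galois-theoretic invariance statement for absolute values in $K$.
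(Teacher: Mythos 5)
Your proof is correct: the substitution $x=z^{c}y$ with $c(h-1)\equiv 1\pmod{p-1}$ collapses the two exponents to a single scalar and yields $\widehat{\chi}_f(z\beta)=\sigma_{z^{c+1}}\bk{\widehat{\chi}_f(\beta)}$, and since $|\widehat{\chi}_f(\beta)|^2\in\{0,p^{m+s}\}$ is rational and the abelian Galois group commutes with complex conjugation, the squared modulus is preserved. The paper itself gives no proof (Lemma~\ref{lemc} is imported verbatim from the cited reference \cite{mes2020}), and your argument is essentially the standard one used there, so there is nothing to compare against beyond noting that the identity $\widehat{\chi}_f(z\beta)=\sigma_{z^{c+1}}(\widehat{\chi}_f(\beta))$ you derive is exactly the mechanism the paper relies on implicitly in \eqref{eq:det2_1}.
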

 
 \begin{lemma}\label{lem:dual}(Propositions 2 and 3, \cite{mes2020})
 	Let $f \in \rm{WRP}$ or $f \in \rm{WRPB}$ with $\widehat{\chi}_f(\beta)=\varepsilon_f \sqrt{p^*}^{m+s}\zeta_p^{f^{\star}(\beta)}$ for every $\beta \in  \mathcal{S}_f$. Then $f^{\star}(0)=0$ and $f^{\star}(z \beta)=z^{l_f} f^{\star}( \beta)$ for all $z \in \mathbb{F}_p^*$ and $\beta \in  \mathcal{S}_f$, where $l_f$ is an even positive integer with $\gcd(l_f-1, p-1)=1$. We call $l_f$ the index of $ f $. 
 \end{lemma}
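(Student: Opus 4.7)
The plan is to exploit the interplay between the change of variable $x=uy$ in the Walsh sum, the $p$-ary homogeneity $f(uy)=u^{h}f(y)$, and the Galois action on the cyclotomic field $K=\mathbb{Q}(\zeta_p)$. These three ingredients will together identify $\widehat{\chi}_f(z\beta)$ with $\sigma_{u^h}(\widehat{\chi}_f(\beta))$ for a specific $u \in \mathbb{F}_p^*$, from which both the index $l_f$ and the scaling rule for $f^{\star}$ fall out.

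First I would fix $\beta \in \mathcal{S}_f$ and $z \in \mathbb{F}_p^*$. By Lemma~\ref{lemc}, $z\beta \in \mathcal{S}_f$, so Lemma~\ref{lem:Walsh-f} is applicable on both sides. Expanding
\[
\widehat{\chi}_f(z\beta) = \sum_{x \in \mathbb{F}_q} \zeta_p^{f(x)-\textup{Tr}(z\beta x)}
\]
and substituting $x = uy$ with $u \in \mathbb{F}_p^*$ to be chosen, the sum becomes $\sum_y \zeta_p^{u^h f(y) - zu\,\textup{Tr}(\beta y)}$. The condition $u^{h-1}=z$, uniquely solvable in $\mathbb{F}_p^*$ because $\gcd(h-1,p-1)=1$, collapses the two coefficients into a common factor $u^h$, producing
\[
\widehat{\chi}_f(z\beta) = \sigma_{u^h}\bigl(\widehat{\chi}_f(\beta)\bigr).
\]
Applying Lemma~\ref{lem:Walsh-f} to both sides and using $\sigma_{u^h}(\sqrt{p^*}^{m+s}) = \eta(u^h)^{m+s}\sqrt{p^*}^{m+s} = \sqrt{p^*}^{m+s}$ (since $h$ is even, so $\eta(u^h)=1$) together with $\sigma_{u^h}(\zeta_p^{f^\star(\beta)}) = \zeta_p^{u^h f^\star(\beta)}$ leaves the congruence $f^{\star}(z\beta) \equiv u^{h}f^{\star}(\beta) \pmod{p}$. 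Setting $l_f \equiv h(h-1)^{-1} \pmod{p-1}$ gives $u = z^{(h-1)^{-1}}$ and hence $u^h \equiv z^{l_f} \pmod{p}$, yielding the claimed scaling. The properties of $l_f$ are then algebraic consequences: $l_f - 1 \equiv (h-1)^{-1} \pmod{p-1}$ is a unit, so $\gcd(l_f-1,p-1)=1$; and since $(h-1)^{-1} \bmod (p-1)$ must be odd (it inverts the odd $h-1$ modulo the even $p-1$), $l_f = h \cdot (h-1)^{-1}$ is even times odd, hence even.

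For the vanishing $f^{\star}(0)=0$ I would separate the two subclasses. If $f \in \rm{WRPB}$, then balancedness gives $\widehat{\chi}_f(0)=0$, so $0 \notin \mathcal{S}_f$ and $f^{\star}(0)=0$ by the convention in Lemma~\ref{lem:Walsh-f}. If $f \in \rm{WRP}$, then $0 \in \mathcal{S}_f$; the change of variable $x \to zx$ in $\widehat{\chi}_f(0)=\sum_x \zeta_p^{f(x)}$ combined with $f(zx)=z^{h}f(x)$ shows $\widehat{\chi}_f(0)=\sigma_{z^{h}}(\widehat{\chi}_f(0))$ for every $z \in \mathbb{F}_p^*$, which after identifying coefficients via Lemma~\ref{lem:Walsh-f} forces $z^h f^\star(0) \equiv f^\star(0) \pmod p$. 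The main technical obstacle lies exactly here: this congruence rules out $f^{\star}(0)\neq 0$ only when some $z \in \mathbb{F}_p^*$ satisfies $z^{h}\not\equiv 1 \pmod p$, i.e.\ when $p-1 \nmid h$. In the exceptional scenario $h \equiv 0 \pmod{p-1}$, where $f$ would be constant on $\mathbb{F}_p^*$-orbits of $\mathbb{F}_q^*$, one instead evaluates $\widehat{\chi}_f(0)=1+\sum_{x \in \mathbb{F}_q^*}\zeta_p^{f(x)}$ orbit by orbit and uses the quadratic Gauss sum (Lemma~\ref{lem:expo sum}) to exhibit it as a real multiple of $\sqrt{p^*}^{m+s}$, which forces $\zeta_p^{f^{\star}(0)}=1$ and thus $f^{\star}(0)=0$.
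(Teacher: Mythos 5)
The paper does not actually prove this lemma; it is quoted verbatim from Propositions 2 and 3 of \cite{mes2020}, so your proposal can only be measured against the standard argument there --- and your main computation is exactly that argument, carried out correctly. The substitution $x=uy$ with $u^{h-1}=z$ (solvable uniquely because $\gcd(h-1,p-1)=1$), the identification $\widehat{\chi}_f(z\beta)=\sigma_{u^h}(\widehat{\chi}_f(\beta))$, the use of $\eta(u^h)=1$ for even $h$ to kill the Galois action on $\sqrt{p^*}^{m+s}$, and the bookkeeping $l_f\equiv h(h-1)^{-1}$, $l_f-1\equiv(h-1)^{-1}\pmod{p-1}$ with the parity argument for $(h-1)^{-1}$ are all correct and complete. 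The WRPB half of $f^{\star}(0)=0$ is also fine, since $0\notin\mathcal{S}_f$ and the convention of Lemma \ref{lem:Walsh-f} applies.

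The one genuine soft spot is your treatment of the degenerate case $p-1\mid h$ in the WRP half of $f^{\star}(0)=0$ (which you correctly recognize is the only case where $z^h\not\equiv 1$ fails for every $z$, so the Galois-invariance argument gives nothing). Your fix as written does not work: Lemma \ref{lem:expo sum} evaluates $\sum_x\chi_1(a_2x^2+a_1x+a_0)$ for a quadratic polynomial and has no bearing on a general plateaued $f$, and the orbit decomposition $\widehat{\chi}_f(0)=1+(p-1)\sum_{O}\zeta_p^{f(O)}$ does not by itself ``exhibit a real multiple of $\sqrt{p^*}^{m+s}$.'' What that decomposition does give is the congruences $\#f^{-1}(c)\equiv 0\pmod{p-1}$ for $c\ne 0$ and $\#f^{-1}(0)\equiv 1\pmod{p-1}$; writing $\widehat{\chi}_f(0)=\sum_c \#f^{-1}(c)\,\zeta_p^c$, eliminating $\zeta_p^0$ via $\sum_c\zeta_p^c=0$, and comparing coefficients in the integral basis $\{\zeta_p^c\}_{c=1}^{p-1}$ against $\varepsilon_f\sqrt{p^*}^{m+s}\zeta_p^{j}$ forces $\#f^{-1}(c)=\#f^{-1}(0)$ for at least one $c\ne 0$ whenever $j\ne 0$, contradicting the congruences; that is the argument you need to close this case. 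Alternatively you could simply flag that the degenerate case is excluded for the functions under consideration, but as the proof stands, the last three lines assert rather than establish the conclusion.
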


 
 \begin{lemma}\label{lem:N_f}(Lemma 10, \cite{mes2020})
 	Let $f \in \rm{WRP}$ or $f \in \rm{WRPB}$ with $\widehat{\chi}_f(\beta)=\varepsilon_f \sqrt{p^*}^{m+s}\zeta_p^{f^{\star}(\beta)}$ for every $\beta \in \mathcal{S}_f$. For $c \in \mathbb{F}_p$, define
 	\begin{align*}
 	\mathcal{N}_{f}(c)=\# \{\beta \in \mathcal{S}_f: f^{\star}(\beta)=c\}.
 	\end{align*}
 	When $ m-s $ is even, we have
 	\begin{align*}
 	\mathcal{N}_{f}(c)=\begin{cases}
 	p^{m-s-1}+(p-1)\eta^{m+1}(-1) \varepsilon_f \sqrt{p^*} ^{m-s-2}, &\textup{if }  c=0,\\
 	p^{m-s-1}-\eta^{m+1}(-1)\varepsilon_f \sqrt{p^*}^{m-s-2}, & \textup{if }    c \in \mathbb{F}_p^*.
 	\end{cases}
 	\end{align*}
 	Otherwise,
 	\begin{align*}
 	\mathcal{N}_{f}(c)=\begin{cases}
 	p^{m-s-1}, & \textup{if }  c=0,\\
 	p^{m-s-1}+ \eta(c)\eta^{m}(-1) \varepsilon_f \sqrt{p^*}^{m-s-1}, & \textup{if }    c  \in \mathbb{F}_p^*. 
 	\end{cases}
 	\end{align*}
 	
 \end{lemma}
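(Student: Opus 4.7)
The plan is to compute $\mathcal{N}_f(c)$ by Fourier inversion over $\mathbb{F}_p$. Writing
$$p\,\mathcal{N}_f(c) \;=\; \sum_{\beta \in \mathcal{S}_f}\sum_{z \in \mathbb{F}_p}\zeta_p^{z(f^\star(\beta)-c)} \;=\; \#\mathcal{S}_f \;+\; \sum_{z \in \mathbb{F}_p^*}\zeta_p^{-zc}\,S(z),$$
where $S(z):=\sum_{\beta\in\mathcal{S}_f}\zeta_p^{zf^\star(\beta)}$, the term $\#\mathcal{S}_f=p^{m-s}$ immediately produces the $p^{m-s-1}$ summand that appears in every case of the statement. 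Everything else reduces to a closed-form evaluation of $S(z)$ for $z\in\mathbb{F}_p^*$.

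The key step is to evaluate $S(z)$ by applying the Galois automorphism $\sigma_z\in\textup{Gal}(K/\mathbb{Q})$ from Lemma \ref{lem:cyclo}. Using $\sigma_z(\zeta_p^{a})=\zeta_p^{za}$, $\sigma_z(\sqrt{p^*}^{m+s})=\eta^{m+s}(z)\sqrt{p^*}^{m+s}$, and $\sigma_z(\varepsilon_f)=\varepsilon_f$, the identity of Lemma \ref{lem:Walsh-f} transforms on $\mathcal{S}_f$ into
$$\sigma_z(\widehat{\chi}_f(\beta))\;=\;\varepsilon_f\,\eta^{m+s}(z)\sqrt{p^*}^{m+s}\,\zeta_p^{zf^\star(\beta)},$$
which extracts the desired $\zeta_p^{zf^\star(\beta)}$. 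Summing over $\beta\in\mathcal{S}_f$ (outside of which $\widehat{\chi}_f$ vanishes) and using the orthogonality identity $\sum_{\beta\in\mathbb{F}_q}\widehat{\chi}_f(\beta)=p^m\zeta_p^{f(0)}=p^m$ (valid because $f(0)=0$ by the definition of \textrm{WRP}/\textrm{WRPB}), I obtain
$$S(z)\;=\;\frac{\sigma_z(p^m)}{\varepsilon_f\,\eta^{m+s}(z)\sqrt{p^*}^{m+s}}\;=\;\varepsilon_f\,\eta^m(-1)\,\eta^{m+s}(z)\,\sqrt{p^*}^{m-s},$$
after using $p^m=\eta^m(-1)\sqrt{p^*}^{2m}$ to simplify the quotient.

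Plugging this back and noting that $m+s$ and $m-s$ have the same parity splits the argument into two cases. When $m-s$ is even, $\eta^{m+s}(z)\equiv 1$, so $S(z)$ is constant in $z$, and $\sum_{z\neq 0}\zeta_p^{-zc}$ equals $p-1$ if $c=0$ and $-1$ otherwise. When $m-s$ is odd, $\eta^{m+s}(z)=\eta(z)$, and one uses $\sum_{z\neq 0}\eta(z)\zeta_p^{-zc}=\eta(-c)\,G(\eta)=\eta(-c)\sqrt{p^*}$ for $c\neq 0$ (and $0$ when $c=0$). A routine simplification via $1/p=\eta(-1)/\sqrt{p^*}^{2}$ then produces the four formulas in the statement.

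The main obstacle is the Galois step: recognizing that the cleanest way to pull the coefficient $z$ inside the exponent of $\zeta_p^{f^\star(\beta)}$ is to view it as the action of $\sigma_z$ on $K=\mathbb{Q}(\zeta_p)$. This simultaneously converts Lemma \ref{lem:Walsh-f} into an expression for $\zeta_p^{zf^\star(\beta)}$ and turns the aggregated sum $\sum_\beta\widehat{\chi}_f(\beta)=p^m$ into something on which $\sigma_z$ acts trivially, so that the parity-dependent factor $\eta^{m+s}(z)$ is the only thing tracking $z$. Once this is in place, only standard Gauss-sum bookkeeping remains.
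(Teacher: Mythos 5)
Your proof is correct: the Fourier inversion over $\mathbb{F}_p$, the evaluation of $S(z)$ via $\sigma_z$ acting on $\sum_{\beta\in\mathcal{S}_f}\widehat{\chi}_f(\beta)=p^m\zeta_p^{f(0)}=p^m$ (using $f(0)=0$ from the WRP/WRPB definition), and the final parity split all check out against the four stated formulas. The paper itself gives no proof of this lemma --- it is imported verbatim as Lemma 10 of \cite{mes2020} --- and your argument is essentially the standard one used in that reference, so there is nothing to flag.
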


  \begin{lemma}(Lemma 3.12, \cite{Sinak2022})\label{lem:T}
  Let $f,\,g \in \rm{WRP}$ or $f,\,g \in \rm{WRPB}$, with $\widehat{\chi}_f(\alpha)=\varepsilon_f\sqrt{p^*}^{m+s}\zeta_p^{f^{\star}(\alpha)}$ and $\widehat{\chi}_g(\beta)=\varepsilon_g\sqrt{p^*}^{m+t}\zeta_p^{g^{\star}(\beta)}$ for $\alpha \in \mathcal{S}_f$ and $\beta \in \mathcal{S}_g$, respectively. Define
  	\begin{align*}
  	&\mathcal{T}(0) = \# \{(a,  b) \in  \mathcal{S}_f \times \mathcal{S}_g  :  f^{\star}(a)+g^{\star}(b)=0\} ,\\
  	&\mathcal{T}(c) = \# \{(a,  b) \in  \mathcal{S}_f \times \mathcal{S}_g :  f^{\star}(a)+g^{\star}(b)=c\}, \textup{ where } c \in \mathbb{F}_p^* .
  	\end{align*}
  	Then we have
  	\begin{align*}
  	&\mathcal{T}(0)=\begin{cases}
  	p^{2m-s-t-1}+(p-1)p^{-1}  \varepsilon_f \varepsilon_g\sqrt{p^*}^{2m-s-t}, & \textup{if }   s+t  \textup{ is even},\\
  	p^{2m-s-t-1},&  \textup{if }    s+t  \textup{ is odd},
  	\end{cases}\\
  	&\mathcal{T}(c)=\begin{cases}
  	p^{2m-s-t-1}-p^{-1} \varepsilon_f \varepsilon_g\sqrt{p^*}^{2m-s-t}, & \textup{if }   s+t  \textup{ is even},\\
  	p^{2m-s-t-1}+  \eta(c) \varepsilon_f \varepsilon_g\sqrt{p^*}^{2m-s-t-1}, & \textup{if }   s+t  \textup{ is odd}.
  	\end{cases}
  	\end{align*}
  \end{lemma}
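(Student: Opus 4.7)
The plan is to apply discrete Fourier analysis on $\mathbb{F}_p$. Using the orthogonality relation $\sum_{z\in\mathbb{F}_p}\zeta_p^{zx}=p\,\mathbf{1}_{x=0}$ to enforce the constraint $f^{\star}(a)+g^{\star}(b)=c$, I would rewrite
\begin{align*}
\mathcal{T}(c)\;=\;\frac{1}{p}\sum_{z\in\mathbb{F}_p}\zeta_p^{-zc}\,S_f(z)\,S_g(z),\qquad S_f(z):=\sum_{\beta\in\mathcal{S}_f}\zeta_p^{z f^{\star}(\beta)},
\end{align*}
and similarly for $S_g(z)$. The $z=0$ contribution is $p^{-1}\,\#\mathcal{S}_f\cdot\#\mathcal{S}_g=p^{2m-s-t-1}$, which already supplies the ``main term'' in every case.

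For $z\in\mathbb{F}_p^{*}$ I would evaluate $S_f(z)$ by grouping the inner sum over level sets of $f^{\star}$, writing $S_f(z)=\sum_{c\in\mathbb{F}_p}\mathcal{N}_f(c)\zeta_p^{zc}$, and then inserting the fibre counts from Lemma \ref{lem:N_f}. The constant $p^{m-s-1}$ part of $\mathcal{N}_f(c)$ is killed by $\sum_{c\in\mathbb{F}_p}\zeta_p^{zc}=0$. For $m-s$ even, the remaining contribution simplifies via $\sum_{c\neq 0}\zeta_p^{zc}=-1$ and $p=\eta(-1)(\sqrt{p^*})^{2}$ to $S_f(z)=\eta^{m}(-1)\varepsilon_f(\sqrt{p^*})^{m-s}$. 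For $m-s$ odd, the Gauss-sum identity $\sum_{c\neq 0}\eta(c)\zeta_p^{zc}=\eta(z)G(\eta)=\eta(z)\sqrt{p^*}$ yields $S_f(z)=\eta^{m}(-1)\eta(z)\varepsilon_f(\sqrt{p^*})^{m-s}$. Using $\eta^{2}\equiv 1$, both can be unified as $S_f(z)=\eta^{m}(-1)\eta^{m-s}(z)\varepsilon_f(\sqrt{p^*})^{m-s}$.

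Multiplying by the analogous formula for $S_g(z)$ collapses the mixed parity information into
\begin{align*}
S_f(z)S_g(z)=\eta^{s+t}(z)\,\varepsilon_f\varepsilon_g\,(\sqrt{p^*})^{2m-s-t},
\end{align*}
which depends only on the parity of $s+t$. Plugging back and summing over $z\in\mathbb{F}_p^{*}$ reduces everything to $\sum_{z\ne 0}\zeta_p^{-zc}$ when $s+t$ is even (equal to $p-1$ if $c=0$ and $-1$ otherwise) and to $\sum_{z\ne 0}\eta(z)\zeta_p^{-zc}$ when $s+t$ is odd (equal to $0$ if $c=0$ and to $\eta(-1)\eta(c)\sqrt{p^*}$ otherwise). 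A final application of $(\sqrt{p^*})^{2}=\eta(-1)p$ then converts the odd-$s+t$, $c\neq 0$ term into the claimed $\eta(c)\varepsilon_f\varepsilon_g(\sqrt{p^*})^{2m-s-t-1}$, closing all four sub-cases at once.

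The main obstacle will be signs and bookkeeping: Lemma \ref{lem:N_f} introduces $\eta^{m+1}(-1)$ in the $m-s$ even case but $\eta^{m}(-1)$ in the $m-s$ odd case, with different powers of $\sqrt{p^*}$; one must apply $(\sqrt{p^*})^{2}=\eta(-1)p$ at the right moment to force both cases into the single expression above. Once that unification is in place, the remaining character-sum evaluations are routine.
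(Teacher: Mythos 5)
The paper does not prove this lemma at all: it is imported verbatim as Lemma 3.12 of \cite{Sinak2022}, so there is no internal proof to compare against. Your argument is a correct and essentially self-contained derivation from material the paper does supply. Writing $\mathcal{T}(c)=\frac{1}{p}\sum_{z}\zeta_p^{-zc}S_f(z)S_g(z)$ and expanding $S_f(z)=\sum_{c'}\mathcal{N}_f(c')\zeta_p^{zc'}$ via Lemma \ref{lem:N_f} is just the Fourier-dual form of the direct convolution $\mathcal{T}(c)=\sum_{u+v=c}\mathcal{N}_f(u)\mathcal{N}_g(v)$, and it handles the four parity sub-cases uniformly. I checked the delicate points: in the $m-s$ even case, $\eta^{m+1}(-1)\varepsilon_f\sqrt{p^*}^{\,m-s-2}\cdot p=\eta^{m}(-1)\varepsilon_f\sqrt{p^*}^{\,m-s}$ via $p=\eta(-1)(\sqrt{p^*})^{2}$; in the $m-s$ odd case the Gauss sum gives the extra $\eta(z)$; the product $S_f(z)S_g(z)=\eta^{s+t}(z)\varepsilon_f\varepsilon_g\sqrt{p^*}^{\,2m-s-t}$ follows since $\eta^{2m}(-1)=1$ and $\eta^{2m-s-t}(z)=\eta^{s+t}(z)$; and for $s+t$ odd, $c\neq 0$, the factor $\frac{1}{p}\eta(-c)G(\eta)\sqrt{p^*}^{\,2m-s-t}=\eta(c)\sqrt{p^*}^{\,2m-s-t-1}$ as claimed. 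All four stated values are recovered, so the proposal is sound; the only thing left to a full write-up is spelling out these routine character-sum evaluations.
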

 
  \begin{lemma}\label{lem:length}(Lemma 3.7, \cite{Sinak2022})
  Let $n=\#D_{f,g}$, where $ D_{f,g} $ is defined by \eqref{def:D-fg} with $ f$ and $g $ be as in Lemma \ref{lem:T}. If $ f,\,g \in \rm{WRPB} $, then $ n=p^{2m-1}-1 $. If $ f,\,g \in \rm{WRP} $, then 
  	\begin{align*}
  	n =\begin{cases}
  	p^{2m-1}-1, &\textup{if }  2\nmid s+t,\\
  	p^{2m-1}-1+(p-1)p^{-1} \varepsilon_f \varepsilon_g  \sqrt{p^*}^{2m+s+t}, & \textup{if }  2\mid s+t  .
  	\end{cases}
  	\end{align*}
  \end{lemma}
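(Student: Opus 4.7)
Since $f(0)=g(0)=0$ whenever $f,g\in\mathrm{WRP}\cup\mathrm{WRPB}$, the pair $(0,0)$ always lies in the underlying solution set, so my plan is to start from
$$n+1=\#\{(x,y)\in\mathbb{F}_q^2:f(x)+g(y)=0\}=\frac{1}{p}\sum_{z\in\mathbb{F}_p}T_f(z)\,T_g(z),\qquad T_h(z):=\sum_{x\in\mathbb{F}_q}\zeta_p^{zh(x)},$$
which I obtain by the usual additive-character orthogonality in $\mathbb{F}_p$. The $z=0$ term contributes $p^{2m-1}$, so the whole task reduces to evaluating $T_f(z)$ and $T_g(z)$ for $z\in\mathbb{F}_p^*$.

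The key observation is that the Galois automorphism $\sigma_z$ of Lemma \ref{lem:cyclo}(2) acts termwise as $\zeta_p^{a}\mapsto\zeta_p^{za}$, so $T_f(z)=\sigma_z(\widehat{\chi}_f(0))$ and likewise for $g$. By Lemma \ref{lem:Walsh-f} together with $f^{\star}(0)=0$ from Lemma \ref{lem:dual}, I have $\widehat{\chi}_f(0)=\varepsilon_f\sqrt{p^*}^{m+s}$ if $f\in\mathrm{WRP}$ (because $0\in\mathcal{S}_f$), and $\widehat{\chi}_f(0)=0$ if $f\in\mathrm{WRPB}$ (because $0\notin\mathcal{S}_f$). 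Combining this with the identity $\sigma_z(\sqrt{p^*}^{k})=\eta^{k}(z)\sqrt{p^*}^{k}$ coming from Lemma \ref{lem:cyclo}(3) produces
$$T_f(z)=\begin{cases}\varepsilon_f\,\eta^{m+s}(z)\sqrt{p^*}^{m+s},&f\in\mathrm{WRP},\\ 0,&f\in\mathrm{WRPB},\end{cases}$$
and the analogous formula for $T_g(z)$ with $(s,\varepsilon_f)$ replaced by $(t,\varepsilon_g)$.

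Plugging these in finishes both cases. If $f,g\in\mathrm{WRPB}$, every nonzero-$z$ term vanishes and $n+1=p^{2m-1}$. If $f,g\in\mathrm{WRP}$, using $\eta(z)^{2m}=1$ one obtains $T_f(z)T_g(z)=\varepsilon_f\varepsilon_g\,\eta^{s+t}(z)\sqrt{p^*}^{2m+s+t}$, and the elementary character sum
$$\sum_{z\in\mathbb{F}_p^*}\eta^{s+t}(z)=\begin{cases}p-1,&2\mid s+t,\\ 0,&2\nmid s+t,\end{cases}$$
yields, after dividing by $p$ and subtracting $1$, exactly the two subcases of the stated formula.

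The only point requiring genuine care is the Galois identity $T_h(z)=\sigma_z(\widehat{\chi}_h(0))$ together with $\sigma_z(\sqrt{p^*}^{k})=\eta^{k}(z)\sqrt{p^*}^{k}$; once these two identities are in place the rest is routine bookkeeping with the quadratic Gauss sum, so I expect that transfer of the $z$ inside the exponent via $\sigma_z$ to be the only conceptual obstacle.
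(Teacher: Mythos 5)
Your proof is correct. The paper does not actually prove this lemma (it is quoted as Lemma 3.7 of \cite{Sinak2022}), but your argument is the standard one and matches the paper's own technique exactly: your sum $\frac1p\sum_{z}T_f(z)T_g(z)$ is precisely the evaluation of $p^{2m-1}+p^{-1}\Delta_1$ with $\Delta_1=\sum_{z\in \mathbb{F}_{p}^*}\sigma_z\bk{\widehat{\chi}_f(0)\widehat{\chi}_g (0)}$ that the paper carries out inside the proof of Lemma \ref{lem:N0_odd}, and the two identities you single out, $T_h(z)=\sigma_z(\widehat{\chi}_h(0))$ and $\sigma_z(\sqrt{p^*}^{k})=\eta^{k}(z)\sqrt{p^*}^{k}$, are exactly the facts the paper records after Lemma \ref{lem:cyclo}.
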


 \section{Auxiliary results}
To get the frequencies of codewords in the constructed codes, we will need several lemmas which are depicted and proved in the sequel.
\begin{lemma}\label{lem:eta}
Let $ p\equiv 1 \pmod 2 $. For the quadratic character $ \eta $ over $\mathbb{F}_{p}$, we have
\begin{align*}
\sum_{u\in S_q} \sum_{\substack{v\in S_q\\ v \ne \pm u} } 
\eta(u+v) & = -\frac{p-1}{2} (\eta(2)+1),\\
\sum_{u\in N_{sq}} \sum_{\substack{v\in N_{sq}\\ v \ne \pm u} } 
\eta(u+v) & = \frac{p-1}{2} (\eta(2)+1).
\end{align*}	
\end{lemma}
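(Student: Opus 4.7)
The plan is to fix a nonzero $u$, use the indicators $\mathbb{1}_{v\in S_q}=\tfrac{1+\eta(v)}{2}$ and $\mathbb{1}_{v\in N_{sq}}=\tfrac{1-\eta(v)}{2}$ (valid for $v\ne 0$) to rewrite the inner sums as sums over $\mathbb{F}_p^*$, and thereby reduce everything to the two auxiliary character sums
$$
\Sigma_1(u)=\sum_{v\in\mathbb{F}_p^*}\eta(u+v),\qquad
\Sigma_2(u)=\sum_{v\in\mathbb{F}_p^*}\eta(v)\eta(u+v).
$$
The sum $\Sigma_1(u)$ is handled by the substitution $w=u+v$: this turns it into $\sum_{w\ne u}\eta(w)=-\eta(u)$, using $\sum_{w\in\mathbb{F}_p}\eta(w)=0$. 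For $\Sigma_2(u)$, the term $v=-u$ is zero, so on the remaining range one writes $\eta(v)\eta(u+v)=\eta(v^2)\eta(1+u/v)=\eta(1+u/v)$; the substitutions $s=u/v$ and then $r=1+s$ map this to $\sum_{r\in\mathbb{F}_p^*,\,r\ne 1}\eta(r)=-\eta(1)=-1$.

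Combining these yields the closed forms
$$
\sum_{v\in S_q}\eta(u+v)=\tfrac12\bigl(\Sigma_1(u)+\Sigma_2(u)\bigr)=-\tfrac{\eta(u)+1}{2},\qquad
\sum_{v\in N_{sq}}\eta(u+v)=\tfrac12\bigl(\Sigma_1(u)-\Sigma_2(u)\bigr)=\tfrac{1-\eta(u)}{2},
$$
still without the restriction $v\ne\pm u$. I then subtract the two excluded terms: $v=u$ contributes $\eta(2u)=\eta(2)\eta(u)$, while $v=-u$ contributes $\eta(0)=0$, so removing the latter is free.

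Specializing $u\in S_q$ (so $\eta(u)=1$) gives inner sum $-1-\eta(2)$, and specializing $u\in N_{sq}$ (so $\eta(u)=-1$) gives inner sum $1+\eta(2)$. Multiplying by $\#S_q=\#N_{sq}=(p-1)/2$ produces the two identities of the lemma. The only conceptual point that needs care is the exclusion $v\ne -u$: whether $-u$ actually lies in the same set as $u$ depends on $\eta(-1)$ and hence on $p\bmod 4$, but because $\eta(0)=0$ the $v=-u$ term contributes nothing in either regime, so the argument is uniform in $p$ and no case split is required.
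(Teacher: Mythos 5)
Your proof is correct, and it takes a genuinely different (though equally elementary) route from the paper's. The paper exploits homogeneity: it factors $\eta(u+v)=\eta(u)\eta(1+v/u)$, substitutes $v\mapsto uv$ so the inner sum becomes independent of $u$, and then reduces $\sum_{v\in S_q}\eta(1+v)$ to the companion sum $I_2(1)=\sum_{x}\eta(1+x^2)=-1$ quoted from Lidl--Niederreiter. You instead expand the indicator of $S_q$ (resp.\ $N_{sq}$) as $\tfrac{1\pm\eta(v)}{2}$ and evaluate the two resulting complete sums $\Sigma_1(u)=-\eta(u)$ and $\Sigma_2(u)=-1$ by direct substitution; the second of these is the Jacobsthal-type identity that underlies $I_2(1)=-1$ anyway (via $I_2=I_1+H_1$), so the arithmetic core is the same fact, but you prove it from scratch rather than citing it. Your approach has the advantage of being self-contained and of producing the closed forms $\sum_{v\in S_q}\eta(u+v)=-\tfrac{\eta(u)+1}{2}$ and $\sum_{v\in N_{sq}}\eta(u+v)=\tfrac{1-\eta(u)}{2}$ for arbitrary $u\ne 0$, which yield both identities of the lemma (and would yield the mixed square/non-square variants) in one stroke; the paper's substitution trick is slightly shorter once $I_2(1)=-1$ is taken as known. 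Your handling of the excluded terms is also right: $v=u$ removes $\eta(2)\eta(u)$, while $v=-u$ contributes $\eta(0)=0$ whether or not $-u$ lies in the set being summed over, so no case split on $p\bmod 4$ is needed --- the paper's opening remark about the location of $-1$ is, as you observe, ultimately immaterial.
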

\begin{proof}
Notation that $ - 1 \in S_q$ when $ p\equiv 1 \pmod 4 $, and $ - 1 \in N_{sq}$ when $ p\equiv 3 \pmod 4 $. It follows that
\begin{align*}
\sum_{u\in S_q} \sum_{\substack{v\in S_q\\ v \ne \pm u} } 
\eta(u+v) & =  \sum_{u\in S_q} \eta(u)\sum_{\substack{v\in S_q\\ v \ne \pm u} } 
\eta(1+\frac{v}{u})\\
& =  \sum_{u\in S_q}  \sum_{\substack{v\in S_q\\ v \ne \pm 1} } 
\eta(1+v)\\
& =  \frac{p-1}{2}  \Bk{  \sum_{ v\in S_q } \eta(1+v) -\eta(2) }\\
& =  \frac{p-1}{2}  \Bk{ \frac{1}{2} \sum_{ x\in \mathbb{F}_p  } \eta(1+x^2) -\frac{1}{2}  -\eta(2) } \\
& =  \frac{p-1}{2}  \Bk{ \frac{1}{2} I_2(1) -\frac{1}{2}  -\eta(2) } .
\end{align*}
The first assertion then follows from the fact that $ I_2(1)=-1 $. The second one is analogously proved and is omitted here.	
\end{proof}

\begin{lemma}\label{lem:S_q-N_{sq}} 
Let $ p\equiv 1 \pmod 4 $ and $f,\,g \in \rm{WRP}$ or $f,\,g \in \rm{WRPB}$, with $\widehat{\chi}_f(\alpha)=\varepsilon_f\sqrt{p^*}^{m+s}\zeta_p^{f^{\star}(\alpha)}$ and $\widehat{\chi}_g(\beta)=\varepsilon_g\sqrt{p^*}^{m+t}\zeta_p^{g^{\star}(\beta)}$ for every $\alpha \in \mathcal{S}_f $ and every $\beta \in \mathcal{S}_g$, respectively. Suppose that $ s+t $ is odd. Write
	\begin{align*}
	B_{S_q}& = \# \{(a,  b) \in  \mathcal{S}_f \times \mathcal{S}_g  :  f^{\star}(a)+g^{\star}(b)\in S_q,f^{\star}(a)-g^{\star}(b)\in S_q\} ,\\
	B_{N_{sq}}& = \# \{(a,  b) \in  \mathcal{S}_f \times \mathcal{S}_g  :  f^{\star}(a)+g^{\star}(b)\in N_{sq},f^{\star}(a)-g^{\star}(b)\in N_{sq}\} .
	\end{align*}
Then if $ 2\nmid m-s $ and  $ 2\mid m-t $, we have 
	\begin{align*}
	B_{S_q}& = \frac{p-1}{2}\sqrt{p}^{2m-s-t-3}  \Big( \frac{p-1}{2} \sqrt{p}^{2m-s-t-1} - \eta(2)\varepsilon_f  \sqrt{p}^{m-t}\\
		& \phantom{==} + \frac{p+1}{2} \varepsilon_g \sqrt{p}^{m-s-1} +(\eta(2)+p)\varepsilon_f \varepsilon_g 	\Big),\\
	B_{N_{sq}}& = \frac{p-1}{2}\sqrt{p}^{2m-s-t-3}  \Big( \frac{p-1}{2} \sqrt{p}^{2m-s-t-1} + \eta(2)\varepsilon_f  \sqrt{p}^{m-t}\\
     	& \phantom{==} + \frac{p+1}{2} \varepsilon_g \sqrt{p}^{m-s-1} - (\eta(2)+p)\varepsilon_f \varepsilon_g	\Big).
	\end{align*}
Otherwise if $ 2\mid m-s $ and  $ 2\nmid m-t $, we have  
\begin{align*}
B_{S_q}& = \frac{p-1}{2}\sqrt{p}^{2m-s-t-3}  \Big( \frac{p-1}{2} \sqrt{p}^{2m-s-t-1} - \eta(2)\varepsilon_g  \sqrt{p}^{m-s}\\
& \phantom{==} + \frac{p+1}{2} \varepsilon_f \sqrt{p}^{m-t-1} +(\eta(2)+p)\varepsilon_f \varepsilon_g 	\Big),\\
B_{N_{sq}}& = \frac{p-1}{2}\sqrt{p}^{2m-s-t-3}  \Big( \frac{p-1}{2} \sqrt{p}^{2m-s-t-1} + \eta(2)\varepsilon_g  \sqrt{p}^{m-s}\\
& \phantom{==} + \frac{p+1}{2} \varepsilon_f \sqrt{p}^{m-t-1} - (\eta(2)+p)\varepsilon_f \varepsilon_g	\Big).
\end{align*}
\end{lemma}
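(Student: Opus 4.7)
The plan is to parameterize the constraints $f^{\star}(a)+g^{\star}(b)=c_1$ and $f^{\star}(a)-g^{\star}(b)=c_2$ by pairs $(c_1,c_2)\in S_q\times S_q$ (respectively $N_{sq}\times N_{sq}$) and count the fibers via Lemma \ref{lem:N_f}. Setting $u=(c_1+c_2)/2$ and $v=(c_1-c_2)/2$, one has $\#\{(a,b)\in\mathcal{S}_f\times\mathcal{S}_g:f^{\star}(a)=u,\,g^{\star}(b)=v\}=\mathcal{N}_f(u)\mathcal{N}_g(v)$, so that
\[
B_{S_q}=\sum_{c_1,c_2\in S_q}\mathcal{N}_f\bk{\tfrac{c_1+c_2}{2}}\,\mathcal{N}_g\bk{\tfrac{c_1-c_2}{2}}.
\]
Since $p\equiv 1\pmod 4$ we have $\eta(-1)=1$ and $\sqrt{p^*}=\sqrt{p}$, and Lemma \ref{lem:N_f} specializes, in the subcase $2\nmid m-s$ and $2\mid m-t$, to the uniform expressions
\[
\mathcal{N}_f(u)=\alpha_f+\beta_f\,\eta(u),\qquad \mathcal{N}_g(v)=\alpha_g+\gamma_g\,\delta_{v,0},
\]
with $\alpha_f=p^{m-s-1}$, $\beta_f=\varepsilon_f\sqrt{p}^{m-s-1}$, $\alpha_g=p^{m-t-1}-\varepsilon_g\sqrt{p}^{m-t-2}$ and $\gamma_g=\varepsilon_g\sqrt{p}^{m-t}$; the formula for $\mathcal{N}_f$ remains valid at $u=0$ because $\eta(0)=0$.

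Expanding $\mathcal{N}_f\mathcal{N}_g$ produces four summands. The $\alpha_f\alpha_g$ piece contributes $\alpha_f\alpha_g(p-1)^2/4$. The two $\gamma_g$-supported pieces $\alpha_f\gamma_g\delta_{v,0}$ and $\beta_f\gamma_g\eta(u)\delta_{v,0}$ localize on the diagonal $c_1=c_2$, which has size $(p-1)/2$ in $S_q\times S_q$, and together contribute $\gamma_g(\alpha_f+\beta_f)(p-1)/2$, using $\eta((c_1+c_1)/2)=\eta(c_1)=1$ for $c_1\in S_q$. The remaining $\beta_f\alpha_g\eta(u)=\beta_f\alpha_g\eta(2)\eta(c_1+c_2)$ piece requires the character sum $\sum_{c_1,c_2\in S_q}\eta(c_1+c_2)$, which is where Lemma \ref{lem:eta} enters: it supplies $\sum_{u,v\in S_q,\,v\ne\pm u}\eta(u+v)=-\tfrac{p-1}{2}(\eta(2)+1)$, and adding back the $v=u$ contribution $\sum_{u\in S_q}\eta(2u)=\eta(2)(p-1)/2$ (while the $v=-u$ diagonal contributes $\eta(0)=0$) yields $\sum_{c_1,c_2\in S_q}\eta(c_1+c_2)=-\tfrac{p-1}{2}$. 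Assembling everything gives
\[
B_{S_q}=\tfrac{p-1}{4}\bk{(p-1)\alpha_f\alpha_g+2\alpha_f\gamma_g-2\eta(2)\beta_f\alpha_g+2\beta_f\gamma_g},
\]
and substituting the values of $\alpha_f,\beta_f,\alpha_g,\gamma_g$ and extracting the common factor $\tfrac{p-1}{2}\sqrt{p}^{2m-s-t-3}$ reproduces the first asserted formula.

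The computation of $B_{N_{sq}}$ follows the same template: Lemma \ref{lem:eta} now gives $\sum_{c_1,c_2\in N_{sq}}\eta(c_1+c_2)=\tfrac{p-1}{2}$, while on the diagonal $c_1=c_2\in N_{sq}$ one has $\eta(c_1)=-1$ so that $\mathcal{N}_f(c_1)=\alpha_f-\beta_f$. Both sign changes flip exactly the $\beta_f\alpha_g$ and $\beta_f\gamma_g$ contributions, producing the second asserted formula. The subcase $2\mid m-s$ and $2\nmid m-t$ then follows from the first by the symmetry $(a,b)\leftrightarrow(b,a)$, which interchanges the roles of $f$ and $g$ and, since $-1\in S_q$, preserves both defining events; consequently the formulas in this case are obtained from the first by swapping $(\varepsilon_f,s)\leftrightarrow(\varepsilon_g,t)$, matching the statement.

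The main obstacle is the careful separation of the Kronecker slice $v=0$ (the diagonal $c_1=c_2$) from the bulk character sum before Lemma \ref{lem:eta} is applied, together with the reintroduction of the $v=u$ contribution that the lemma excludes. It is precisely this $v=u$ correction $\eta(2)(p-1)/2$ which produces the $\eta(2)$-dependent terms in the final answer, and tracking the cancellations between its $\eta(2)$ and the $\beta_f\gamma_g$ piece of order $p$ to obtain the combined coefficient $\eta(2)+p$ is the most delicate bookkeeping step.
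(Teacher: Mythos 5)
Your proposal is correct and follows essentially the same route as the paper: the same change of variables reducing $B_{S_q}$ to $\sum_{c_1,c_2\in S_q}\mathcal{N}_f\bk{\frac{c_1+c_2}{2}}\mathcal{N}_g\bk{\frac{c_1-c_2}{2}}$, the same separation of the diagonal slices $c_2=\pm c_1$ from the bulk, and the same invocation of Lemmas \ref{lem:N_f} and \ref{lem:eta} to evaluate the remaining character sum; your four-piece bilinear expansion and the explicit $\eta(2)\cdot\frac{p-1}{2}$ diagonal correction are just a tidier bookkeeping of the computation the paper performs. The only addition is your symmetry argument $(a,b)\leftrightarrow(b,a)$ for the subcase $2\mid m-s$, $2\nmid m-t$, which the paper simply omits as analogous.
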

\begin{proof}
We only calculate $ B_{S_q} $ and omit the other. Now suppose that $ 2\nmid m-s $ and  $ 2\mid m-t $. Let $  f^{\star}(a)+g^{\star}(b)=u$, $ f^{\star}(a)-g^{\star}(b)=v $, where $ u,v\in \mathbb{F}_p^* $. Then we have $  f^{\star}(a)=\frac{u+v}{2} $,  $ g^{\star}(b)= \frac{u-v}{2} $ and so
	\begin{align*} 
	B_{S_q} = \sum_{u\in S_q} \sum_{v\in S_q} \mathcal{N}_{f}(\frac{u+v}{2})	\mathcal{N}_{g}(\frac{u-v}{2}),
	\end{align*}
where $ \mathcal{N}_{f} $ and $ \mathcal{N}_{g} $ are computed in Lemma \ref{lem:N_f}. It follows that
	\begin{align*} 
	B_{S_q} & = \sum_{u\in S_q} \mathcal{N}_{f}(u) \mathcal{N}_{g}(0) + \sum_{u\in S_q} \mathcal{N}_{f}(0) \mathcal{N}_{g}(u) + S,
	\end{align*}
where
\begin{equation}\label{eq:S1}
S=\sum_{u\in S_q} \sum_{\substack{v\in S_q\\ v \ne \pm u} } \mathcal{N}_{f}(\frac{u+v}{2})	\mathcal{N}_{g}(\frac{u-v}{2}).
\end{equation}
Observe that $ \frac{u-v}{2} \ne 0 $ in \eqref{eq:S1}. If we write $ c=\frac{u-v}{2} \ne 0  $, then from Lemma \ref{lem:N_f},
\begin{align*} 
S &= \mathcal{N}_{g}(c) \sum_{u\in S_q} \sum_{\substack{v\in S_q\\ v \ne \pm u} } \mathcal{N}_{f}(\frac{u+v}{2}) \\
&= \mathcal{N}_{g}(c) \sum_{u\in S_q} \sum_{\substack{v\in S_q\\ v \ne \pm u} } \Bk{
	p^{m-s-1} + \eta(\frac{u+v}{2})\varepsilon_f  \sqrt{p}^{m-s-1}}	\\
&= \mathcal{N}_{g}(c) \Bk{  \frac{p-1}{2}\cdot \frac{p-5}{2} p^{m-s-1} + \eta(2)\varepsilon_f  \sqrt{p}^{m-s-1}\sum_{u\in S_q} \sum_{\substack{v\in S_q\\ v \ne \pm u} } 
	\eta(u+v) }.
\end{align*}
The desired assertion then follows from Lemmas \ref{lem:N_f} and \ref{lem:eta}.	
\end{proof}
 
 \section{Main results}
 In this section, we fix $ p \equiv 1 \pmod 4 $. Let $f,\,g \in \rm{WRP}$ or $f,\,g \in \rm{WRPB}$. For $ \alpha,\,\beta \in \mathbb{F}_{q} $, we may assume from Lemma \ref{lem:Walsh-f} that $\widehat{\chi}_f(\alpha)=\varepsilon_f\sqrt{p}^{m+s}\zeta_p^{f^{\star}(\alpha)}$ and $\widehat{\chi}_g(\beta)=\varepsilon_g\sqrt{p}^{m+t}\zeta_p^{g^{\star}(\beta)}$, where $\alpha \in \mathcal{S}_f$, $\beta \in \mathcal{S}_g$, $\varepsilon_f,\varepsilon_g  \in \{\pm 1\}$ and $0 \leqslant s,t \leqslant m$. The dual functions $f^{\star}$ and $g^{\star}$ are given in Lemma \ref{lem:dual} with $l_f \in \{2,\frac{p-1}{2},p-1\}$ and $ l_g= \frac{p-1}{2} $.
  For $(a,b) \in \mathbb{F}_q^2 \backslash \{(0,0)\}$ , we define
 \begin{align} \label{N0}
 N_0 = \# \left \{(x,y) \in  \mathbb{F}_{q}^2 :
 f(x) + g(y) = 0  ,\textup{Tr}(ax + by) = 0 \right \}.
 \end{align}
 \subsection{The calculation of $ N_0 $}
To compute the weights of codewords in our codes, it suffices to determine the values of $ N_0 $ in \eqref{N0}, which are stated in Lemmas \ref{lem:N0_odd}, \ref{lem:N0_even} and \ref{lem:N0_e2}.

 \begin{lemma} \label{lem:N0_odd}
 Let $f,\,g \in \rm{WRP}$ or $f,\,g \in \rm{WRPB}$ with $ l_g= \frac{p-1}{2} $. Suppose that $s+t$ is odd and $(a,b) \in \mathbb{F}_q^2 \backslash \{(0,0)\} $. 
 	For $ (a,b)  \notin  \mathcal{S}_f  \times \mathcal{S}_g $, we have $ N_0 = p^{2m-2} $, and for $ (a,b)  \in  \mathcal{S}_f  \times \mathcal{S}_g $, we have the following assertions. 
 	When $l_f=\frac{p-1}{2}$, we have 
 		\begin{align*}
 	N_0=\begin{cases}
 	p^{2m-2},  & \textup{if }   f^{\star}(a)+g^{\star}(b)=0,\\
 	p^{2m-2}+(p-1)\varepsilon_f \varepsilon_g  \sqrt{p}^{2m+s+t-3},  & \textup{if }   f^{\star}(a)+g^{\star}(b)  \in S_q,\\
 	 p^{2m-2}-(p-1)\varepsilon_f \varepsilon_g  \sqrt{p}^{2m+s+t-3},  & \textup{if }   f^{\star}(a)+g^{\star}(b)  \in N_{sq}.
 	\end{cases}
 	\end{align*}
 	When $l_f= p-1 $, we have 	
 	\begin{align*}
 	N_0 =\begin{cases}
  p^{2m-2} +	\frac{p-1}{2} \varepsilon_f \varepsilon_g  \eta( 2   ) \sqrt{p }^{2m+s+t-3} , & \textup{if }   f^{\star}(a)\in S_q,g^{\star}(b)=\pm f^{\star}(a) ,\\
 p^{2m-2} -	\frac{p-1}{2} \varepsilon_f \varepsilon_g  \eta( 2   ) \sqrt{p }^{2m+s+t-3} , & \textup{if }   f^{\star}(a)\in N_{sq},g^{\star}(b)=\pm f^{\star}(a) ,\\
  p^{2m-2}+	(p-1)\varepsilon_f \varepsilon_g  \sqrt{p }^{2m+s+t-3}, &
  \textup{if } f^{\star}(a)+ g^{\star}(b)\in S_q,f^{\star}(a)- g^{\star}(b)\in S_q,\\
  p^{2m-2}-	(p-1)\varepsilon_f \varepsilon_g  \sqrt{p }^{2m+s+t-3}, &
  \textup{if } f^{\star}(a)+ g^{\star}(b)\in N_{sq},f^{\star}(a)- g^{\star}(b)\in N_{sq},\\  
  	p^{2m-2}, &  	\textup{otherwise}. 
 	\end{cases}
 	\end{align*}
 	When $l_f= 2 $ and $ p \equiv 1 \pmod 8 $, we have 	
 	\begin{align*}
 	N_0 
   =\begin{cases}
p^{2m-2}+	(p-1) \varepsilon_f \varepsilon_g   \sqrt{p }^{2m+s+t-3},  & \textup{if }   f^{\star}(a)= 0,g^{\star}(b)\in S_q\\
&\textup{or }   g^{\star}(b)= 0,f^{\star}(a)\in S_q,\\
p^{2m-2}-	(p-1) \varepsilon_f \varepsilon_g   \sqrt{p }^{2m+s+t-3},  & \textup{if }   f^{\star}(a)= 0,g^{\star}(b)\in N_{sq}\\
&\textup{or }   g^{\star}(b)= 0,f^{\star}(a)\in N_{sq},\\
p^{2m-2}	-2 (p-1)\varepsilon_f \varepsilon_g  \sqrt{p }^{2m+s+t-3}, & \textup{if }   f^{\star}(a)\in  S_q,g^{\star}(b)\in  S_q,\\
p^{2m-2}	+2 (p-1)\varepsilon_f \varepsilon_g  \sqrt{p }^{2m+s+t-3}, & \textup{if }   f^{\star}(a)\in  N_{sq},g^{\star}(b)\in  N_{sq},\\
p^{2m-2},&	\textup{otherwise}.
\end{cases}	
\end{align*} 
 	When $l_f= 2 $ and $ p \equiv 5 \pmod 8 $, we have 	
 	\begin{align*}
 	N_0 
 	=\begin{cases}
 p^{2m-2},  & \textup{if }   f^{\star}(a)=g^{\star}(b)=0,\\
 p^{2m-2}+	(p-1) \varepsilon_f \varepsilon_g   \sqrt{p }^{2m+s+t-3},  & \textup{if }   f^{\star}(a)= 0,g^{\star}(b)\in S_q\\
 &\textup{or }   g^{\star}(b)= 0,f^{\star}(a)\in S_q,\\
 p^{2m-2}-	(p-1) \varepsilon_f \varepsilon_g   \sqrt{p }^{2m+s+t-3},  & \textup{if }   f^{\star}(a)= 0,g^{\star}(b)\in N_{sq}\\
 &\textup{or }   g^{\star}(b)= 0,f^{\star}(a)\in N_{sq},\\
 p^{2m-2}+  \varepsilon_f \varepsilon_g \sqrt{p}^{2m+s+t-3}\eta(f^{\star}(a))\Bk{I_4\Bk{\frac{g^{\star}(b)}{f^{\star}(a)}}-\eta\Bk{\frac{g^{\star}(b)}{f^{\star}(a)}}}, &	\textup{otherwise},
 \end{cases}	
 \end{align*} 
 where $ I_4 $ is a companion sum determined in Lemma \ref{lm:companion}.
  \end{lemma}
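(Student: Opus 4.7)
The plan is to apply orthogonality of additive characters of $\mathbb{F}_p$ to rewrite
\begin{equation*}
N_0 = \frac{1}{p^{2}}\sum_{x,y\in\mathbb{F}_q}\sum_{z_1,z_2\in\mathbb{F}_p}\zeta_p^{z_1(f(x)+g(y))+z_2\textup{Tr}(ax+by)}
\end{equation*}
and then decompose the double sum over $(z_1,z_2)$ according to which coordinates vanish. The $(0,0)$ term contributes $p^{2m-2}$. The slice with $z_1=0,\,z_2\ne0$ factors as $\sum_x\zeta_p^{\textup{Tr}(z_2ax)}\cdot\sum_y\zeta_p^{\textup{Tr}(z_2by)}$ and is zero since $(a,b)\ne(0,0)$. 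The slice with $z_1\ne0,\,z_2=0$ equals $\sigma_{z_1}(\widehat{\chi}_f(0))\sigma_{z_1}(\widehat{\chi}_g(0))$, which vanishes in the WRPB case; in the WRP case it is a scalar multiple of $\sum_{z_1\in\mathbb{F}_p^*}\eta^{s+t}(z_1)$, which is $0$ because $s+t$ is odd.

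What remains is the main slice with both $z_1,z_2\in\mathbb{F}_p^*$. Using the identity $\sum_x\zeta_p^{z_1f(x)+\textup{Tr}(z_2ax)}=\sigma_{z_1}\bigl(\widehat{\chi}_f(-z_1^{-1}z_2a)\bigr)$ together with Lemmas \ref{lemc} and \ref{lem:Walsh-f}, non-vanishing of the inner sum forces $(a,b)\in\mathcal{S}_f\times\mathcal{S}_g$. Invoking $\sqrt{p^*}=\sqrt{p}$ (since $p\equiv 1\pmod 4$), the parity of $l_f,l_g$, and $f^{\star}(z\alpha)=z^{l_f}f^{\star}(\alpha)$ from Lemma \ref{lem:dual}, the inner sum simplifies to $\varepsilon_f\,\eta^{m+s}(z_1)\sqrt{p}^{m+s}\zeta_p^{z_1^{1-l_f}z_2^{l_f}f^{\star}(a)}$, with the analogous expression for $g$. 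The main-slice contribution is therefore
\begin{equation*}
\frac{\varepsilon_f\varepsilon_g\sqrt{p}^{2m+s+t}}{p^{2}}\,\Sigma,\qquad \Sigma:=\sum_{z_1,z_2\in\mathbb{F}_p^*}\eta(z_1)\,\zeta_p^{z_1^{1-l_f}z_2^{l_f}f^{\star}(a)+z_1^{1-l_g}z_2^{l_g}g^{\star}(b)}.
\end{equation*}

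Since $l_g=(p-1)/2$ and $p\equiv 1\pmod 4$ give $z_2^{l_g}=\eta(z_2)$ and $\gcd\bigl(1-(p-1)/2,\,p-1\bigr)=1$, the map $z_1\mapsto z_1^{1-(p-1)/2}=z_1\eta(z_1)$ is an $\eta$-preserving bijection of $\mathbb{F}_p^*$. For $l_f=(p-1)/2$, the exponent of $\Sigma$ factors as $z_1\eta(z_1)\eta(z_2)\bigl(f^{\star}(a)+g^{\star}(b)\bigr)$; a bijective change of variable reduces $\Sigma$ to a quadratic Gauss sum, giving $(p-1)\,\eta(f^{\star}(a)+g^{\star}(b))\,\sqrt{p}$ and yielding the first claimed formula. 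For $l_f=p-1$ we have $z_1^{1-l_f}z_2^{l_f}=z_1$, so the exponent becomes $z_1\bigl(f^{\star}(a)+\eta(z_1)\eta(z_2)g^{\star}(b)\bigr)$; splitting by the sign $\eta(z_1)\eta(z_2)\in\{\pm1\}$ reduces $\Sigma$ to a linear combination of $\sum_{z_1}\eta(z_1)\zeta_p^{z_1(f^{\star}(a)\pm g^{\star}(b))}=\eta(f^{\star}(a)\pm g^{\star}(b))\sqrt{p}$, while the degenerate cases $g^{\star}(b)=\pm f^{\star}(a)$ produce a single Gauss sum and the factor $\eta(2)\eta(f^{\star}(a))$ as stated.

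The main obstacle is the subcase $l_f=2$, where $z_1^{1-l_f}z_2^{l_f}=z_1^{-1}z_2^{2}$ brings in a quartic dependence on $z_2$. Here I would introduce a multiplicative character $\chi$ of $\mathbb{F}_p^*$ of order $4$ (which exists since $4\mid p-1$) and use $\eta(z_2)=\chi(z_2^{2})$ to rewrite $\Sigma$ as a linear combination of Gauss-type sums $\sum_{v\in\mathbb{F}_p^*}\chi^{j}(v)\zeta_p^{cv}$. After regrouping and invoking Lemma \ref{lm:companion}, $\Sigma$ is expressed in terms of values of $I_4$. The residue $p\pmod{8}$ enters because $2$ is a fourth power in $\mathbb{F}_p^*$ exactly when $p\equiv 1\pmod{8}$: in that case $\chi(2)\in\{\pm1\}$ and the $I_4$-values collapse to rational multiples of $\sqrt{p}$, yielding the closed forms of the lemma; for $p\equiv 5\pmod{8}$, $\chi(2)\in\{\pm i\}$ prevents such simplification, so the answer retains the explicit $I_4$ factor as stated.
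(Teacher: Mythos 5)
Your decomposition and your treatment of the cases $l_f=\tfrac{p-1}{2}$ and $l_f=p-1$ match the paper's proof: the same character-orthogonality split into four slices, the same vanishing of the three degenerate slices (the $z_1\ne0,\,z_2=0$ slice dies because $\eta^{s+t}=\eta$ for odd $s+t$), and the same reduction of the main slice to your $\Sigma$ --- the paper writes it, after the substitution $h=z_2/z_1$, as $\sum_{z}\eta(z)\sum_{h}\zeta_p^{z(h^{l_f}f^{\star}(a)+h^{l_g}g^{\star}(b))}$ --- followed by evaluation through quadratic Gauss sums exactly as you describe.

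The gap is in the case $l_f=2$, which is the substantive part of the lemma. Your stated mechanism for the $p\bmod 8$ dichotomy, namely that ``$2$ is a fourth power in $\mathbb{F}_p^*$ exactly when $p\equiv 1\pmod 8$,'' is false (for $p=17$ the fourth powers are $\{1,4,13,16\}$, which do not contain $2$), and in fact no factor of $2$ enters this case at all. The correct pivot, which the paper uses, is the biquadratic character of $-1$. After the substitution the inner sum is $\sum_{h\in S_q}\zeta_p^{z(h^2f^{\star}(a)+g^{\star}(b))}+\sum_{h\in N_{sq}}\zeta_p^{z(h^2f^{\star}(a)-g^{\star}(b))}$, and since $\{h^2:h\in N_{sq}\}$ runs over $C_2^{(4,p)}$, everything hinges on whether $-1\in C_0^{(4,p)}$ (iff $p\equiv 1\pmod 8$), in which case $-C_2^{(4,p)}=C_2^{(4,p)}$ and the two pieces merge (after $z\mapsto -z$, using $\eta(-1)=1$) into the complete quadratic Gauss sum $\sum_{z}\eta(z)\zeta_p^{zg^{\star}(b)}\sum_{h\in\mathbb{F}_p^*}\zeta_p^{zh^2f^{\star}(a)}$ --- so $I_4$ never appears and there is nothing to ``collapse'' --- or $-1\in C_2^{(4,p)}$ (iff $p\equiv 5\pmod 8$), in which case the $N_{sq}$-piece duplicates the $S_q$-piece and one is left with $2\sum_{h\in S_q}\eta(h^2f^{\star}(a)+g^{\star}(b))=\sum_{u\in\mathbb{F}_p^*}\eta(u^4f^{\star}(a)+g^{\star}(b))$, which is where Lemma \ref{lm:companion} and $I_4$ genuinely enter. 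Relatedly, the claim that for $p\equiv 1\pmod 8$ ``the $I_4$-values collapse to rational multiples of $\sqrt{p}$'' is not right: $I_4(a)$ is an integer depending on $a$ and does not simplify for such $p$. A quartic-character route could be repaired, but the controlling quantity is $\chi(-1)=(-1)^{(p-1)/4}$, not $\chi(2)$; as written your argument would not produce the stated case distinction. (You also leave the degenerate subcases $f^{\star}(a)=0$ or $g^{\star}(b)=0$ untreated, though those are routine.)
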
 	
 	\begin{proof}
 Let $ 2\nmid s+t $. By definition in \eqref{N0} and the orthogonal property of group characters,	 		
 		\begin{align}
 		N_0 &=\dfrac{1}{p^2}\sum_{x,y\in \mathbb{F}_{q}}\sum_{z\in \mathbb{F}_{p}}{\zeta_p}^{z(f(x)+g(y))}\sum_{h\in \mathbb{F}_{p}}{\zeta_p}^{h \textup{Tr}(ax+by)}  \nonumber \\ 
 		&=\dfrac{1}{p^2}\sum_{x,y\in \mathbb{F}_{q}}
 		\Bk{1+\sum_{z\in \mathbb{F}_{p}^*}{\zeta_p}^{z(f(x)+g(y))}}
 		\Bk{ 1+ \sum_{h\in \mathbb{F}_{p}^*}{\zeta_p}^{h \textup{Tr}(ax+by)}} \nonumber\\
 		&=p^{2m-2}+\dfrac{1}{p^2}\sum_{z\in \mathbb{F}_{p}^*}\sum_{x,y\in \mathbb{F}_{q}} {\zeta_p}^{z(f(x)+g(y)) } \nonumber \\
 		& \phantom{==}+\dfrac{1}{p^2}\sum_{x,y\in \mathbb{F}_{q}}\sum_{z\in \mathbb{F}_{p}^*}\sum_{h\in \mathbb{F}_{p}^*}{\zeta_p}^{z(f(x)+g(y))+h \textup{Tr}(ax+by)} \nonumber \\
 		&= p^{2m-2} + p^{-2}(\Delta_1+\Delta_2)  , \label{eq:N0}
 		\end{align}
 	where we write
 	\begin{align*}
 	\Delta_1 &=	\sum_{z\in \mathbb{F}_{p}^*}\sum_{x,y\in \mathbb{F}_{q}} {\zeta_p}^{z(f(x)+g(y)) }, \\
 	\Delta_2&= \sum_{x,y\in \mathbb{F}_{q}}\sum_{z\in \mathbb{F}_{p}^*}\sum_{h\in \mathbb{F}_{p}^*}{\zeta_p}^{z(f(x)+g(y))+h \textup{Tr}(ax+by)}.
 	\end{align*}
 	It follows that	
 	\begin{align*}
 	\Delta_1 &=	\sum_{z\in \mathbb{F}_{p}^*}\sigma_z\bk{\widehat{\chi}_f(0)\widehat{\chi}_g (0)}  \\
    &=\begin{cases}
    0 , & \textup{if }  f,g  \in \rm{WRPB},\\
    \varepsilon_f \varepsilon_g \sqrt{p}^{2m+s+t} \sum_{z\in \mathbb{F}_{p}^*} \eta^{s+t}(z) , & \textup{if }  f,g\in   \rm{WRP}.   
    	\end{cases}   
 	\end{align*} 
 	So we always have $ \Delta_1=0 $ when $ 2\nmid s+t $.	
Now it is sufficient to determine $ \Delta_2 $. We observe from its definition that	
 		\begin{align} 
 		\Delta_2&=\sum_{z\in \mathbb{F}_{p}^*}\sum_{h\in \mathbb{F}_{p}^*}\sum_{x\in \mathbb{F}_{q}}{\zeta_p}^{zf(x)-\textup{Tr}(hax)}\sum_{y\in \mathbb{F}_{q}}{\zeta_p}^{zg(y)-\textup{Tr}(hby)}\nonumber\\
 		&=\sum_{z\in \mathbb{F}_{p}^*}\sum_{h\in \mathbb{F}_{p}^*}\sum_{x\in \mathbb{F}_{q}}{\zeta_p}^{z(f(x)-\textup{Tr}(\frac{h}{z}ax))}\sum_{y\in \mathbb{F}_{q}}{\zeta_p}^{z(g(y)-\textup{Tr}(\frac{h}{z}by))} \nonumber \\
 		&=\sum_{z\in \mathbb{F}_{p}^*}\sum_{h\in \mathbb{F}_{p}^*} \sigma_z\Bk{\widehat{\chi}_f(ha)\widehat{\chi}_g(hb)}. \label{eq:det2}
 		\end{align}
 	Obviously, when $ (a,b)  \notin  \mathcal{S}_f  \times \mathcal{S}_g $, then from Lemma \ref{lemc} $ (ha,hb)  \notin  \mathcal{S}_f  \times \mathcal{S}_g $ for every  $ h\in \mathbb{F}_{p}^* $. Hence $ \widehat{\chi}_f(ha)=0 $ or $ \widehat{\chi}_g(hb)=0 $, and consequently by \eqref{eq:det2}
 	\[
 	\Delta_2=0.\]
 	When $ (a,  b) \in  \mathcal{S}_f \times \mathcal{S}_g  $, then $ (ha,hb)  \in  \mathcal{S}_f  \times \mathcal{S}_g $ for every  $ h\in \mathbb{F}_{p}^* $. By \eqref{eq:det2}, Lemmas \ref{lem:cyclo}, \ref{lem:Walsh-f} and \ref{lem:dual}, we obtain
	\begin{align} 
	\Delta_2 
	&=\sum_{z\in \mathbb{F}_{p}^*} \sigma_z\Bk{\sum_{h\in \mathbb{F}_{p}^*} \varepsilon_f \varepsilon_g \sqrt{p}^{2m+s+t} 
		\zeta_p^{h^{l_f}f^{\star}(a)+h^{l_g} g^{\star}(b))}		 }\nonumber\\
	&=\varepsilon_f \varepsilon_g \sqrt{p}^{2m+s+t} \sum_{z\in \mathbb{F}_{p}^*} \eta^{s+t}(z) \sigma_z\Bk{\sum_{h\in \mathbb{F}_{p}^*}  
			\zeta_p^{h^{l_f}f^{\star}(a)+h^{l_g} g^{\star}(b)}	}	.  \label{eq:det2_1}
	\end{align} 	
 	In the following, we will apply Lemmas \ref{lem:cyclo} and \ref{lem:expo sum} to determine $ \Delta_2 $ in \eqref{eq:det2_1} by considering the cases of $l_f =2 $, $l_f= \frac{p-1}{2} $ and $l_f= p-1 $, separately.
 		
 	$ (1) $	The first case we consider is $l_f= \frac{p-1}{2} $.  	
 		
 	For $ h\in \mathbb{F}_{p}^*$, clearly $h ^{\frac{p-1}{2}}=1$ if $h \in S_q$, and $ -1 $ otherwise. So we have from \eqref{eq:det2_1} that 
 	\begin{align*}
 	\Delta_2 &=\varepsilon_f \varepsilon_g \sqrt{p}^{2m+s+t} \sum_{z\in \mathbb{F}_{p}^*} \eta(z) \sigma_z\Bk{ \sum_{h\in S_q} \zeta_p^{ f^{\star}(a)+g^{\star}(b)} + \sum_{h\in N_{sq}} \zeta_p^{- f^{\star}(a)-g^{\star}(b)} }\\
 	&=\frac{p-1}{2} \varepsilon_f \varepsilon_g \sqrt{p}^{2m+s+t} 
 	\Bk{\sum_{z\in \mathbb{F}_{p}^*} \eta(z) \zeta_p^{z(f^{\star}(a)+g^{\star}(b))}
 	   + \sum_{z\in \mathbb{F}_{p}^*} \eta(-z) \zeta_p^{-z(f^{\star}(a)+g^{\star}(b))} }\\
 	&=(p-1) \varepsilon_f \varepsilon_g \sqrt{p}^{2m+s+t} 
 	 \sum_{z\in \mathbb{F}_{p}^*} \eta(z) \zeta_p^{z(f^{\star}(a)+g^{\star}(b))} \\
 	&=\begin{cases}
 	0 , & \textup{if }   f^{\star}(a)+g^{\star}(b)=0,\\
 	(p-1)\varepsilon_f \varepsilon_g \eta( f^{\star}(a)+g^{\star}(b) ) \sqrt{p}^{2m+s+t+1},  & \textup{if }   f^{\star}(a)+g^{\star}(b)  \ne  0.
 		\end{cases}
 		\end{align*}
 	 
 	$ (2) $	The second case is that $ l_f =p-1 $. 
 	
 	In this case, $ h^{p-1}=1 $ for every $ h\in \mathbb{F}_{p}^* $. By \eqref{eq:det2_1}, we have
 	\begin{align*}
 	\Delta_2 &=\varepsilon_f \varepsilon_g \sqrt{p}^{2m+s+t}
 	\sum_{z\in \mathbb{F}_{p}^*} \eta(z) \sigma_z\Bk{ \sum_{h\in S_q} \zeta_p^{ f^{\star}(a)+g^{\star}(b)} + \sum_{h\in N_{sq}} \zeta_p^{ f^{\star}(a)-g^{\star}(b)} }\\
 	&=\frac{p-1}{2} \varepsilon_f \varepsilon_g \sqrt{p}^{2m+s+t} 	\Bk{\sum_{z\in \mathbb{F}_{p}^*} \eta(z) \zeta_p^{z(f^{\star}(a)+g^{\star}(b))}
 		+ \sum_{z\in \mathbb{F}_{p}^*} \eta(z) \zeta_p^{z(f^{\star}(a)-g^{\star}(b))} }\\
  	&=\begin{cases}
 	0, & \textup{if }   f^{\star}(a)=g^{\star}(b)=0,\\
 	\frac{p-1}{2} \varepsilon_f \varepsilon_g  \eta( 2f^{\star}(a)  ) \sqrt{p}^{2m+s+t+1} , & \textup{if }   f^{\star}(a)\ne 0,g^{\star}(b)=-f^{\star}(a) ,\\
 	\frac{p-1}{2}\varepsilon_f \varepsilon_g \eta(2 f^{\star}(a)  )\sqrt{p}^{2m+s+t+1}, &\textup{if }   f^{\star}(a)\ne 0,g^{\star}(b)=f^{\star}(a) ,\\
 \frac{p-1}{2}\varepsilon_f \varepsilon_g \Bk{\eta( f^{\star}(a)+g^{\star}(b))+  \eta( f^{\star}(a)-g^{\star}(b) ) }\sqrt{p}^{2m+s+t+1}, &	\textup{otherwise}.\\
 	\end{cases}
 	\end{align*}

 	$ (3) $	The last case is that $ l_f =2 $ and it is distinguished between two subcases.
 	
 \textbf{Subcase (a)}:	If $ p \equiv 1 \pmod 8 $, then $ -1 \in C_0^{(4,p)} $. So
 	from \eqref{eq:det2_1},  
 	\begin{align*}
 	\Delta_2 &=\varepsilon_f \varepsilon_g \sqrt{p}^{2m+s+t}
 	\sum_{z\in \mathbb{F}_{p}^*} \eta(z) \sigma_z\Bk{ \sum_{h\in S_q} \zeta_p^{ h^2f^{\star}(a)+g^{\star}(b)} + \sum_{h\in N_{sq}} \zeta_p^{ h^2 f^{\star}(a)-g^{\star}(b)} }\\
 	&=\varepsilon_f \varepsilon_g \sqrt{p}^{2m+s+t}
 	\sum_{z\in \mathbb{F}_{p}^*} \eta(z) \sigma_z\Bk{ \sum_{h\in S_q} \zeta_p^{ h^2f^{\star}(a)+g^{\star}(b)} + \sum_{h\in N_{sq}} \zeta_p^{-( h^2 f^{\star}(a)+g^{\star}(b))} }\\
 &=\varepsilon_f \varepsilon_g \sqrt{p}^{2m+s+t}\Bk{
 \sum_{z\in \mathbb{F}_{p}^*} \eta(z) \sum_{h\in S_q} \zeta_p^{z( h^2f^{\star}(a)+g^{\star}(b))} +  \sum_{z\in \mathbb{F}_{p}^*} \eta(-z) \sum_{h\in N_{sq}} \zeta_p^{-z( h^2f^{\star}(a)+g^{\star}(b))}}.
   \end{align*}
   Replacing $ -z $ by $ z $ in the last double sum above, we obtain
 	\begin{align*}
 	\Delta_2 
 	&=\varepsilon_f \varepsilon_g \sqrt{p}^{2m+s+t} 
 	\sum_{z\in \mathbb{F}_{p}^*} \eta(z) \sum_{h\in \mathbb{F}_p^*} \zeta_p^{z( h^2f^{\star}(a)+g^{\star}(b))} \\
 	&=\varepsilon_f \varepsilon_g \sqrt{p}^{2m+s+t} 
 	\sum_{z\in \mathbb{F}_{p}^*} \eta(z) \zeta_p^{z g^{\star}(b)}  \sum_{h\in \mathbb{F}_p^*} \zeta_p^{z  h^2f^{\star}(a) }\\
 &=\begin{cases}
 0, & \textup{if }   f^{\star}(a)=g^{\star}(b)=0,\\
(p-1) \varepsilon_f \varepsilon_g  \eta( g^{\star}(b)  ) \sqrt{p}^{2m+s+t+1},  & \textup{if }   f^{\star}(a)= 0,g^{\star}(b)\ne 0 ,\\
(p-1) \varepsilon_f \varepsilon_g  \eta( f^{\star}(a)  ) \sqrt{p}^{2m+s+t+1} , & \textup{if }   f^{\star}(a)\ne 0,g^{\star}(b) = 0 ,\\
-(p-1)\varepsilon_f \varepsilon_g \bk{\eta( f^{\star}(a))+\eta(g^{\star}(b)) }\sqrt{p}^{2m+s+t+1}, &	\textup{otherwise}.\\
 \end{cases}	
 \end{align*}
 	
 \textbf{Subcase (b)}: 	If $ p \equiv 5 \pmod 8 $, then $ -1 \in C_2^{(4,p)} $. So
 	from \eqref{eq:det2_1},  
  	\begin{align*}
  	\Delta_2 &=\varepsilon_f \varepsilon_g \sqrt{p}^{2m+s+t}
  	\sum_{z\in \mathbb{F}_{p}^*} \eta(z) \sigma_z\Bk{ \sum_{h\in S_q} \zeta_p^{ h^2f^{\star}(a)+g^{\star}(b)} + \sum_{h\in N_{sq}} \zeta_p^{ h^2 f^{\star}(a)-g^{\star}(b)} }\\
  	&=\varepsilon_f \varepsilon_g \sqrt{p}^{2m+s+t}
  	\sum_{z\in \mathbb{F}_{p}^*} \eta(z) \sigma_z\Bk{ \sum_{h\in S_q} \zeta_p^{ h^2f^{\star}(a)+g^{\star}(b)} + \sum_{h\in S_q} \zeta_p^{-( h^2 f^{\star}(a)+g^{\star}(b))} }\\
  	&=2\varepsilon_f \varepsilon_g \sqrt{p}^{2m+s+t}
  		\sum_{z\in \mathbb{F}_{p}^*} \eta(z) \sigma_z \Bk{\sum_{h\in S_q} \zeta_p^{  h^2f^{\star}(a)+g^{\star}(b) } }\\
  	&=2\varepsilon_f \varepsilon_g \sqrt{p}^{2m+s+t}\sum_{h\in S_q}
  	\sum_{z\in \mathbb{F}_{p}^*} \eta(z)   \zeta_p^{ z( h^2f^{\star}(a)+g^{\star}(b)) }  .
  	\end{align*} 

Assume that $ f^{\star}(a)g^{\star}(b)\ne 0 $. If $\frac{g^{\star}(b)}{f^{\star}(a)} \in C_2^{(4,p)} $, then the equation $ h^2f^{\star}(a)+g^{\star}(b)=0  $ has exactly two solutions $ h_1,h_2  $ in $ S_q$, where $ h_2=-h_1 $. Otherwise if $\frac{g^{\star}(b)}{f^{\star}(a)} \notin C_2^{(4,p)} $, then the inequality $ h^2f^{\star}(a)+g^{\star}(b)\ne 0  $ holds for all $ h   $ in $ S_q$. Consequently, if $ f^{\star}(a)g^{\star}(b)\ne 0 $, then
  	\begin{align*}
  	\Delta_2   
  	&=2\varepsilon_f \varepsilon_g \sqrt{p}^{2m+s+t+1}\sum_{h\in S_q}  \eta( h^2f^{\star}(a)+g^{\star}(b))   \\
	&= \varepsilon_f \varepsilon_g \sqrt{p}^{2m+s+t+1}\sum_{h\in \mathbb{F}_p^* }  \eta( h^4 f^{\star}(a)+g^{\star}(b))   \\
	&=\varepsilon_f \varepsilon_g \sqrt{p}^{2m+s+t+1}\eta(f^{\star}(a))\Bk{I_4\Bk{\frac{g^{\star}(b)}{f^{\star}(a)}}-\eta\Bk{\frac{g^{\star}(b)}{f^{\star}(a)}}},
  	\end{align*}
  	where $ I_4 $ is determined from Lemma \ref{lm:companion}.	
Thus we conclude that
    \begin{align*}
    \Delta_2 
    =\begin{cases}
    0, & \textup{if }   f^{\star}(a)=g^{\star}(b)=0,\\
    (p-1) \varepsilon_f \varepsilon_g  \eta( g^{\star}(b)  ) \sqrt{p}^{2m+s+t+1},  & \textup{if }   f^{\star}(a)= 0,g^{\star}(b)\ne 0 ,\\
    (p-1) \varepsilon_f \varepsilon_g  \eta( f^{\star}(a)  ) \sqrt{p}^{2m+s+t+1},  & \textup{if }   f^{\star}(a)\ne 0,g^{\star}(b) = 0 ,\\
   \varepsilon_f \varepsilon_g \sqrt{p}^{2m+s+t+1}\eta(f^{\star}(a))\Bk{I_4\Bk{\frac{g^{\star}(b)}{f^{\star}(a)}}-\eta\Bk{\frac{g^{\star}(b)}{f^{\star}(a)}}}, &	\textup{otherwise}.
    \end{cases}	
    \end{align*}

 	The desired conclusion then follows from \eqref{eq:N0}, completing the proof.
 	\end{proof}

 \begin{lemma} \label{lem:N0_even}
 Let $f,\,g \in \rm{WRP}$ with $ l_g= \frac{p-1}{2} $. Suppose that $s+t$ is even and $(a,b) \in \mathbb{F}_q^2 \backslash \{(0,0)\} $. 
 	For $ (a,b)  \notin  \mathcal{S}_f  \times \mathcal{S}_g $,	we have $ N_0 = p^{2m-2} +(p-1)\varepsilon_f \varepsilon_g  \sqrt{p}^{2m+s+t-4} $, and for $ (a,b)  \in  \mathcal{S}_f  \times \mathcal{S}_g $, we have the following assertions. 
 	When $l_f=\frac{p-1}{2} $, we have 
 	\begin{align*}
 	N_0=\begin{cases}
 p^{2m-2} +(p-1)\varepsilon_f \varepsilon_g  \sqrt{p}^{2m+s+t-2} , & \textup{if }   f^{\star}(a)+g^{\star}(b)=0,\\
 p^{2m-2},  & \textup{if }   f^{\star}(a)+g^{\star}(b)  \ne  0.
 	\end{cases}
 	\end{align*}
 	When $l_f= p-1 $, we have 	
 	\begin{align*}
 	N_0=\begin{cases}
  p^{2m-2} +(p-1)\varepsilon_f \varepsilon_g  \sqrt{p}^{2m+s+t-2} , & \textup{if }   f^{\star}(a)=g^{\star}(b)=0,\\
 p^{2m-2} +	\frac{p-1}{2}  \varepsilon_f \varepsilon_g    \sqrt{p}^{2m+s+t-2} , & \textup{if }   f^{\star}(a)\ne 0,g^{\star}(b)=-f^{\star}(a) \\
 	&\textup{or }   f^{\star}(a)\ne 0,g^{\star}(b)=f^{\star}(a) ,\\
 p^{2m-2}, &	\textup{otherwise}.
 	\end{cases}
 	\end{align*}
 	When $l_f= 2 $ and $ p \equiv 1 \pmod 8 $, we have 	
 	\begin{align*}
 	N_0 
 =\begin{cases}
  p^{2m-2} +(p-1)\varepsilon_f \varepsilon_g  \sqrt{p}^{2m+s+t-2}, & \textup{if }   f^{\star}(a)=g^{\star}(b)=0,\\
  p^{2m-2} +2 \varepsilon_f \varepsilon_g  \sqrt{p}^{2m+s+t-2}, &	\textup{if } f^{\star}(a) g^{\star}(b) \in S_q,\\
  p^{2m-2} ,  & \textup{otherwise}.
 \end{cases}
 	\end{align*} 
 	When $l_f= 2 $ and $ p \equiv 5 \pmod 8 $, we have 	
 	\begin{align*}
 	N_0 
 	=\begin{cases}
 	 p^{2m-2} +(p-1)\varepsilon_f \varepsilon_g  \sqrt{p}^{2m+s+t-2}, & \textup{if }   f^{\star}(a)=g^{\star}(b)=0,\\
 	 p^{2m-2} + 4 \varepsilon_f \varepsilon_g \sqrt{p}^{2m+s+t-2 },  & \textup{if }   \frac{g^{\star}(b)}{f^{\star}(a)} \in C_2^{(4,p)} ,\\
 	 p^{2m-2}  ,   &	\textup{otherwise}.
 	\end{cases}
 	\end{align*} 
 \end{lemma}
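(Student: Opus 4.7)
The plan is to follow the same orthogonality decomposition used for Lemma \ref{lem:N0_odd}, writing
\begin{align*}
N_0 = p^{2m-2} + p^{-2}(\Delta_1 + \Delta_2),
\end{align*}
with $\Delta_1 = \sum_{z\in\mathbb{F}_p^*} \sigma_z(\widehat{\chi}_f(0)\widehat{\chi}_g(0))$ and $\Delta_2 = \sum_{z,h\in\mathbb{F}_p^*} \sigma_z(\widehat{\chi}_f(ha)\widehat{\chi}_g(hb))$. The decisive arithmetic change from the previous lemma is that $\eta^{s+t}$ is now the trivial character of $\mathbb{F}_p^*$, so $\sum_{z\in\mathbb{F}_p^*} \eta^{s+t}(z) = p-1$ rather than $0$. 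In particular, for $f,g \in \mathrm{WRP}$ one immediately gets $\Delta_1 = (p-1)\varepsilon_f\varepsilon_g \sqrt{p}^{\,2m+s+t}$, which by itself yields the stated value of $N_0$ on the complement $(a,b)\notin\mathcal{S}_f\times\mathcal{S}_g$, since there $\Delta_2$ still vanishes by Lemma \ref{lemc} exactly as in the odd-parity case.

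When $(a,b)\in\mathcal{S}_f\times\mathcal{S}_g$, I will interchange the order of summation in $\Delta_2$ to get
\begin{align*}
\Delta_2 = \varepsilon_f\varepsilon_g \sqrt{p}^{\,2m+s+t} \sum_{h\in\mathbb{F}_p^*} \sum_{z\in\mathbb{F}_p^*} \zeta_p^{\,z(h^{l_f}f^\star(a)+\eta(h)g^\star(b))},
\end{align*}
and use that the inner sum over $z$ equals $p-1$ when the bracketed exponent is zero and $-1$ otherwise. Thus the entire computation collapses to counting $N := \#\{h\in\mathbb{F}_p^*:\, h^{l_f}f^\star(a)+\eta(h)g^\star(b)=0\}$, after which one obtains the clean formula $\Delta_1+\Delta_2 = pN\varepsilon_f\varepsilon_g \sqrt{p}^{\,2m+s+t}$, and therefore $N_0 = p^{2m-2} + N\,\varepsilon_f\varepsilon_g\sqrt{p}^{\,2m+s+t-2}$. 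The cases $l_f=(p-1)/2$ and $l_f=p-1$ are then straightforward: in the first, the exponent collapses to $\eta(h)(f^\star(a)+g^\star(b))$, so $N$ is either $p-1$ or $0$; in the second the exponent becomes $f^\star(a)+\eta(h)g^\star(b)$, so $N$ takes the three values $p-1$, $(p-1)/2$, or $0$ according as both, exactly one, or neither of $f^\star(a)\pm g^\star(b)$ vanishes, producing the three displayed branches.

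The main obstacle is the case $l_f = 2$, where the equation $h^2 f^\star(a)+\eta(h)g^\star(b)=0$ is genuinely biquadratic. Assuming for the moment that both dual values are nonzero, I will split $N$ according to whether $h\in S_q$ or $h\in N_{sq}$; on each half the equation reads $h^2 = \mp g^\star(b)/f^\star(a)$, and since $-1\in S_q$ the two square roots of any $c\in S_q$ lie in the same half, both in $S_q$ iff $c\in C_0^{(4,p)}$ and both in $N_{sq}$ iff $c\in C_2^{(4,p)}$. Hence the $S_q$-half contributes $2$ iff $-g^\star(b)/f^\star(a)\in C_0^{(4,p)}$ and the $N_{sq}$-half contributes $2$ iff $g^\star(b)/f^\star(a)\in C_2^{(4,p)}$. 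Using $-1\in C_0^{(4,p)}$ when $p\equiv 1\pmod 8$ merges the two conditions into the single cleaner statement $g^\star(b)/f^\star(a)\in S_q$ (equivalently $f^\star(a)g^\star(b)\in S_q$), yielding $N=2$; while $-1\in C_2^{(4,p)}$ when $p\equiv 5\pmod 8$ makes both halves fire simultaneously exactly when $g^\star(b)/f^\star(a)\in C_2^{(4,p)}$, yielding $N=4$. The boundary sub-branches where $f^\star(a)$ or $g^\star(b)$ vanishes are disposed of by direct inspection of the exponent: if $f^\star(a)=g^\star(b)=0$ then $N=p-1$, and if exactly one of the two is zero then $N=0$. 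Plugging each $N$ into $N_0 = p^{2m-2} + N\,\varepsilon_f\varepsilon_g\sqrt{p}^{\,2m+s+t-2}$ then recovers the stated four branches for each parity of $p\pmod 8$.
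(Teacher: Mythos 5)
Your proposal is correct and follows essentially the same route as the paper: the same orthogonality decomposition $N_0=p^{2m-2}+p^{-2}(\Delta_1+\Delta_2)$, the same observation that $\eta^{s+t}$ is now trivial so $\Delta_1=(p-1)\varepsilon_f\varepsilon_g\sqrt{p}^{2m+s+t}$, and the same case analysis on $l_f$ with the quartic-class argument for $l_f=2$. The only difference is cosmetic but pleasant: you package everything into the single root count $N=\#\{h\in\mathbb{F}_p^*: h^{l_f}f^{\star}(a)+\eta(h)g^{\star}(b)=0\}$ and the uniform identity $\Delta_1+\Delta_2=pN\varepsilon_f\varepsilon_g\sqrt{p}^{2m+s+t}$, whereas the paper evaluates each character sum branch by branch; your values of $N$ reproduce every entry of the paper's tables (e.g.\ $N=4$ gives the paper's $\Delta_2=(3p+1)\varepsilon_f\varepsilon_g\sqrt{p}^{2m+s+t}$ when $p\equiv 5\pmod 8$ and $g^{\star}(b)/f^{\star}(a)\in C_2^{(4,p)}$).
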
 	
 \begin{proof}
The proof is completed in a manner analogous to the previous lemma by noting that $ 2\mid s+t $. From \eqref{eq:N0} and \eqref{eq:det2},	 		
 	\begin{align*}
 	N_0 = p^{2m-2} + p^{-2}(\Delta_1+\Delta_2)  ,  
 	\end{align*}
 	where  
 	\begin{align*}
 	\Delta_1 & =(p-1)\varepsilon_f \varepsilon_g \sqrt{p}^{2m+s+t},\\   
 	\Delta_2&= \sum_{z\in \mathbb{F}_{p}^*}\sum_{h\in \mathbb{F}_{p}^*} \sigma_z\Bk{\widehat{\chi}_f(ha)\widehat{\chi}_g(hb)}. 
 	\end{align*}
 We always have $ 	\Delta_2=0 $ unless $ (a,  b) \in  \mathcal{S}_f \times \mathcal{S}_g  $. Now we let $ (a,  b) \in  \mathcal{S}_f \times \mathcal{S}_g  $. Then the value of $ \Delta_2 $ in \eqref{eq:det2_1} is determined by distinguishing the cases of $l_f =2 $, $l_f= \frac{p-1}{2} $ and $l_f= p-1 $, respectively.
 	
 	$ (1) $	The first case we consider is $l_f= \frac{p-1}{2} $.  	
 	
 	It follows from \eqref{eq:det2_1} that 
 	\begin{align*}
 	\Delta_2
 	&=(p-1) \varepsilon_f \varepsilon_g \sqrt{p}^{2m+s+t} 
 	\sum_{z\in \mathbb{F}_{p}^*}  \zeta_p^{z(f^{\star}(a)+g^{\star}(b))} \\
 	&=\begin{cases}
 	(p-1)^2 \varepsilon_f \varepsilon_g \sqrt{p}^{2m+s+t}  , & \textup{if }   f^{\star}(a)+g^{\star}(b)=0,\\
 	- (p-1)\varepsilon_f \varepsilon_g \sqrt{p}^{2m+s+t},  & \textup{if }   f^{\star}(a)+g^{\star}(b)  \ne  0.
 	\end{cases}
 	\end{align*}
 	
 	$ (2) $	The second case is that $ l_f =p-1 $. 
 	
 	Again from \eqref{eq:det2_1}, we have
 	\begin{align*}
 	\Delta_2 
 	&=\frac{p-1}{2} \varepsilon_f \varepsilon_g \sqrt{p}^{2m+s+t} 	\Bk{\sum_{z\in \mathbb{F}_{p}^*}  \zeta_p^{z(f^{\star}(a)+g^{\star}(b))}
 		+ \sum_{z\in \mathbb{F}_{p}^*}   \zeta_p^{z(f^{\star}(a)-g^{\star}(b))} }\\
 	&=\begin{cases}
 	(p-1)^2 \varepsilon_f \varepsilon_g \sqrt{p}^{2m+s+t}, & \textup{if }   f^{\star}(a)=g^{\star}(b)=0,\\
 	\frac{p-1}{2}(p-2) \varepsilon_f \varepsilon_g    \sqrt{p}^{2m+s+t} , & \textup{if }   f^{\star}(a)\ne 0,g^{\star}(b)=-f^{\star}(a) \\
   &\textup{or }   f^{\star}(a)\ne 0,g^{\star}(b)=f^{\star}(a) ,\\
 -(p-1)\varepsilon_f \varepsilon_g \sqrt{p}^{2m+s+t}, &	\textup{otherwise}.
 	\end{cases}
 	\end{align*}
 	
 	$ (3) $	The last case is that $ l_f =2 $ and we need only consider two different subcases.
 	
 	\textbf{Subcase (a)}:	If $ p \equiv 1 \pmod 8 $, then  
 	from \eqref{eq:det2_1},  
 	\begin{align*}
 	\Delta_2 
 	&=\varepsilon_f \varepsilon_g \sqrt{p}^{2m+s+t} 
 	\sum_{z\in \mathbb{F}_{p}^*}   \sum_{h\in \mathbb{F}_p^*} \zeta_p^{z( h^2f^{\star}(a)+g^{\star}(b))} \\
 	&=\varepsilon_f \varepsilon_g \sqrt{p}^{2m+s+t} 
 	\sum_{z\in \mathbb{F}_{p}^*}  \zeta_p^{z g^{\star}(b)}  \sum_{h\in \mathbb{F}_p^*} \zeta_p^{z  h^2f^{\star}(a) }\\
 	&=\begin{cases}
 	(p-1)^2 \varepsilon_f \varepsilon_g \sqrt{p}^{2m+s+t}, & \textup{if }   f^{\star}(a)=g^{\star}(b)=0,\\
 	 (p+1)\varepsilon_f \varepsilon_g  \sqrt{p}^{2m+s+t }, &	\textup{if } f^{\star}(a) g^{\star}(b) \in S_q,\\
 	 - (p-1) \varepsilon_f \varepsilon_g    \sqrt{p}^{2m+s+t},  & \textup{otherwise}.
 	 \end{cases}	
 	\end{align*}
 	
 	\textbf{Subcase (b)}: 	If $ p \equiv 5 \pmod 8 $, then 
 	from \eqref{eq:det2_1},  
 	\begin{align*}
 	\Delta_2 
 	=2\varepsilon_f \varepsilon_g \sqrt{p}^{2m+s+t}\sum_{h\in S_q}
 	\sum_{z\in \mathbb{F}_{p}^*}   \zeta_p^{ z( h^2f^{\star}(a)+g^{\star}(b)) }  .
 	\end{align*} 
 The value of $ \Delta_2  $	is clear if $ f^{\star}(a)g^{\star}(b)= 0 $. We now assume that $ f^{\star}(a)g^{\star}(b)\ne 0 $. If $\frac{g^{\star}(b)}{f^{\star}(a)} \in C_2^{(4,p)} $, then the equation $ h^2f^{\star}(a)+g^{\star}(b)=0  $ has exactly two solutions $ h_1,h_2  $ in $ S_q$, where $ h_2=-h_1 $. Hence,
 	\begin{align*}
 	\Delta_2   
 	&=2\varepsilon_f \varepsilon_g \sqrt{p}^{2m+s+t } \Bk{2(p-1)-(\frac{p-1}{2}-2)}  \\
 	&=(3p+1)\varepsilon_f \varepsilon_g \sqrt{p}^{2m+s+t } .
 	\end{align*}	
 	 Otherwise if $\frac{g^{\star}(b)}{f^{\star}(a)} \notin C_2^{(4,p)} $, then the inequality $ h^2f^{\star}(a)+g^{\star}(b)\ne 0  $ holds for all $ h $ in $ S_q$.
 	 Thus  
 	\begin{align*}
 	 \Delta_2   
 	 &=2\varepsilon_f \varepsilon_g \sqrt{p}^{2m+s+t } \times \frac{p-1}{2}\times (-1)  \\
 	 &=-(p-1)\varepsilon_f \varepsilon_g \sqrt{p}^{2m+s+t } .
 	 \end{align*}
 	 So we conclude that
 	\begin{align*}
 	\Delta_2 
 	=\begin{cases}
 	(p-1)^2 \varepsilon_f \varepsilon_g \sqrt{p}^{2m+s+t}, & \textup{if }   f^{\star}(a)=g^{\star}(b)=0,\\
 	(3p+1)\varepsilon_f \varepsilon_g \sqrt{p}^{2m+s+t },  & \textup{if }   \frac{g^{\star}(b)}{f^{\star}(a)} \in C_2^{(4,p)} ,\\
 	-(p-1) \varepsilon_f \varepsilon_g   \sqrt{p}^{2m+s+t},   &	\textup{otherwise}.
 	\end{cases}	
 	\end{align*} 	
 	
 	The desired conclusion then follows from \eqref{eq:N0}, completing the proof.
 \end{proof}

 \begin{lemma} \label{lem:N0_e2}
 	Let $f,\,g \in \rm{WRPB}$ with $ l_g= \frac{p-1}{2} $. Suppose that $s+t$ is even and $(a,b) \in \mathbb{F}_q^2 \backslash \{(0,0)\} $. 
 	For $ (a,b)  \notin  \mathcal{S}_f  \times \mathcal{S}_g $, we have $ N_0 = p^{2m-2} $, and for $ (a,b)  \in  \mathcal{S}_f  \times \mathcal{S}_g $, we have the following assertions. 
 	When $l_f=\frac{p-1}{2} $, we have 
 	\begin{align*}
 	N_0=\begin{cases}
 p^{2m-2}+	(p-1)^2 \varepsilon_f \varepsilon_g \sqrt{p}^{2m+s+t-4}  , & \textup{if }   f^{\star}(a)+g^{\star}(b)=0,\\
 p^{2m-2}	- (p-1)\varepsilon_f \varepsilon_g \sqrt{p}^{2m+s+t-4},  & \textup{if }   f^{\star}(a)+g^{\star}(b)  \ne  0.
 	\end{cases}
 	\end{align*}
 	When $l_f= p-1 $, we have 	
 	\begin{align*}
 	N_0=\begin{cases}
 p^{2m-2}+(p-1)^2 \varepsilon_f \varepsilon_g \sqrt{p}^{2m+s+t-4}, & \textup{if }   f^{\star}(a)=g^{\star}(b)=0,\\
 p^{2m-2}+\frac{p-1}{2}(p-2) \varepsilon_f \varepsilon_g    \sqrt{p}^{2m+s+t-4} , & \textup{if }   f^{\star}(a)\ne 0,g^{\star}(b)=-f^{\star}(a) \\
&\textup{or }   f^{\star}(a)\ne 0,g^{\star}(b)=f^{\star}(a) ,\\
 p^{2m-2}-(p-1)\varepsilon_f \varepsilon_g \sqrt{p}^{2m+s+t-4}, &	\textup{otherwise}.
\end{cases} 
 	\end{align*}
 	When $l_f= 2 $ and $ p \equiv 1 \pmod 8 $, we have 	
 	\begin{align*}
 	N_0 
 	=\begin{cases}
 	p^{2m-2} +	(p-1)^2 \varepsilon_f \varepsilon_g  \sqrt{p}^{2m+s+t-4}, & \textup{if }   f^{\star}(a)=g^{\star}(b)=0,\\
 	p^{2m-2}+	(p+1)\varepsilon_f \varepsilon_g  \sqrt{p}^{2m+s+t-4}, &	\textup{if } f^{\star}(a) g^{\star}(b) \in S_q,\\
 	p^{2m-2}- (p-1) \varepsilon_f \varepsilon_g   \sqrt{p}^{2m+s+t-4},  & \textup{otherwise}.
 	\end{cases}
 	\end{align*} 
 	When $l_f= 2 $ and $ p \equiv 5 \pmod 8 $, we have 	
 	\begin{align*}
 	N_0 
 	=\begin{cases}
 	p^{2m-2} +(p-1)^2 \varepsilon_f \varepsilon_g  \sqrt{p}^{2m+s+t-4}, & \textup{if }   f^{\star}(a)=g^{\star}(b)=0,\\
 p^{2m-2} +	(3p+1)\varepsilon_f \varepsilon_g  \sqrt{p}^{2m+s+t-4},  & \textup{if }   \frac{g^{\star}(b)}{f^{\star}(a)} \in C_2^{(4,p)} ,\\
 p^{2m-2} 	-(p-1) \varepsilon_f \varepsilon_g  \sqrt{p}^{2m+s+t-4},   &	\textup{otherwise}.
 	\end{cases}
 	\end{align*} 
 \end{lemma}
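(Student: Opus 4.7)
The plan is to mirror the computation in the proof of Lemma \ref{lem:N0_even}, since the only structural distinction between $\rm{WRP}$ and $\rm{WRPB}$ functions appears in the value of the Walsh transform at zero and thus affects only the term $\Delta_1$. I would begin with orthogonality of the additive characters of $\mathbb{F}_p$, exactly as in the proofs of Lemmas \ref{lem:N0_odd} and \ref{lem:N0_even}, to obtain
\[
N_0 = p^{2m-2} + p^{-2}(\Delta_1 + \Delta_2),
\]
with $\Delta_1 = \sum_{z \in \mathbb{F}_p^*} \sigma_z(\widehat{\chi}_f(0)\widehat{\chi}_g(0))$ and $\Delta_2 = \sum_{z,h \in \mathbb{F}_p^*} \sigma_z(\widehat{\chi}_f(ha)\widehat{\chi}_g(hb))$.

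The decisive observation is that $f,g \in \rm{WRPB}$ implies $\widehat{\chi}_f(0) = \widehat{\chi}_g(0) = 0$, so $\Delta_1 = 0$; this is the only place where the balanced hypothesis enters, and it accounts for the entire shift in the formulas relative to Lemma \ref{lem:N0_even}. For $\Delta_2$, if $(a,b) \notin \mathcal{S}_f \times \mathcal{S}_g$ then Lemma \ref{lemc} gives $(ha, hb) \notin \mathcal{S}_f \times \mathcal{S}_g$ for every $h \in \mathbb{F}_p^*$, so $\Delta_2 = 0$ and $N_0 = p^{2m-2}$, matching the first claim of the lemma. When $(a,b) \in \mathcal{S}_f \times \mathcal{S}_g$, Lemmas \ref{lem:Walsh-f} and \ref{lem:dual} convert $\Delta_2$ to
\[
\Delta_2 = \varepsilon_f \varepsilon_g \sqrt{p}^{2m+s+t} \sum_{z \in \mathbb{F}_p^*} \eta^{s+t}(z)\, \sigma_z\!\left(\sum_{h \in \mathbb{F}_p^*} \zeta_p^{h^{l_f} f^{\star}(a) + h^{l_g} g^{\star}(b)}\right),
\]
and since $s+t$ is even we have $\eta^{s+t}(z) \equiv 1$. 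The inner sum is then evaluated by a case split on $l_f \in \{(p-1)/2,\, p-1,\, 2\}$, using $h^{(p-1)/2} = \pm 1$ according to quadratic residuosity, $h^{p-1} = 1$, and, for $l_f = 2$, a Gauss-type evaluation via Lemma \ref{lem:expo sum}. These are precisely the three computations already carried out in Lemma \ref{lem:N0_even}, so they can be reused directly.

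The one technically delicate step, inherited from Lemma \ref{lem:N0_even}, is the $l_f = 2$ branch, which must be split according to whether $-1 \in C_0^{(4,p)}$ (i.e., $p \equiv 1 \pmod 8$) or $-1 \in C_2^{(4,p)}$ (i.e., $p \equiv 5 \pmod 8$). In the latter subcase one must count the $h \in S_q$ solving $h^2 f^{\star}(a) + g^{\star}(b) = 0$, which is zero or two depending on whether $g^{\star}(b)/f^{\star}(a) \in C_2^{(4,p)}$. Once each subcase of $\Delta_2$ is tabulated, substituting $\Delta_1 = 0$ into $N_0 = p^{2m-2} + p^{-2}\Delta_2$ yields the stated formulas. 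As a sanity check, each entry should differ from the corresponding entry of Lemma \ref{lem:N0_even} by exactly $-(p-1)\varepsilon_f \varepsilon_g \sqrt{p}^{2m+s+t-4}$, which is precisely the contribution of $p^{-2}\Delta_1$ that is absent in the balanced setting.
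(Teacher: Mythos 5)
Your proposal is correct and takes essentially the same route as the paper: the paper's own proof likewise observes that $\Delta_1=\sum_{z\in\mathbb{F}_p^*}\sigma_z\bigl(\widehat{\chi}_f(0)\widehat{\chi}_g(0)\bigr)=0$ because $f,g\in\rm{WRPB}$, and then reuses the values of $\Delta_2$ already computed in the proof of Lemma \ref{lem:N0_even}. Your closing sanity check --- that each entry differs from the corresponding entry of Lemma \ref{lem:N0_even} by exactly $p^{-2}\Delta_1=(p-1)\varepsilon_f\varepsilon_g\sqrt{p}^{2m+s+t-4}$ --- is consistent with the stated tables.
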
 	 
 \begin{proof}
 	Notation that $ \Delta_1 =	\sum_{z\in \mathbb{F}_{p}^*}\sigma_z\bk{\widehat{\chi}_f(0)\widehat{\chi}_g (0)} =0 $ for $f,\,g \in \rm{WRPB}$. From \eqref{eq:N0}, $ N_0= p^{2m-2} + p^{-2}(\Delta_1+\Delta_2) = p^{2m-2} + p^{-2} \Delta_2   $, where $ \Delta_2  $ is determined in Lemma \ref{lem:N0_even}. This completes the proof. 
 \end{proof}

 \subsection{Weight distributions of $C_{D_{f,g}}$ from two weakly regular plateaued functions}   
 
In this subsection, we will deal with the weight distributions of $ C_{D_{f,g}} $ defined by \eqref{eq:code} and \eqref{def:D-fg}.
Its length is denoted by $ n $ and is given in Lemma \ref{lem:length}.
We will show the main results in the following theorems explicitly. For abbreviation, we write $\tau=2m+s+t$ and $\gamma =2m-s-t$.
 
 \begin{theorem}\label{th:s+todd}
Suppose that $f,\,g \in \rm{WRP}$ or $f,\,g \in \rm{WRPB}$ with $ l_g= \frac{p-1}{2} $. Let $s+t$ be odd. 
 	If $l_f=\frac{p-1}{2}$, then $C_{D_{f,g}}$ is a three-weight $[p^{2m-1}-1,2m]$ linear code with its weight distribution listed in Table \ref{odd1}. 
 	If $l_f= p-1 $, then $C_{D_{f,g}}$ is a five-weight $[p^{2m-1}-1,2m]$ linear code with its weight distribution listed in Table \ref{odd2}. 
  If $ l_f =2 $ and $ p\equiv 1 \pmod 8 $, then $C_{D_{f,g}}$ is a five-weight $[p^{2m-1}-1,2m]$ linear code with its weight distribution listed in Table \ref{odd3}. If $ l_f =2 $ and $ p\equiv 5 \pmod 8 $, then $C_{D_{f,g}}$ is a $[p^{2m-1}-1,2m]$ linear code with its weight distribution listed in Table \ref{odd4}. 
  \end{theorem}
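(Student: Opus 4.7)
The plan is to apply the standard identity
\[
\text{wt}(\mathbf{c}(a,b)) \;=\; n \;-\; \#\{(x,y)\in D_{f,g}:\text{Tr}(ax+by)=0\} \;=\; n - N_0(a,b) + 1,
\]
where the $+1$ comes from the fact that $f(0)+g(0)=0$ puts the origin inside the counting set $N_0$ of \eqref{N0} but excludes it from $D_{f,g}$. Since the length $n$ is furnished by Lemma \ref{lem:length} and all values of $N_0$ in the odd parity regime are tabulated in Lemma \ref{lem:N0_odd}, each admissible value of $N_0$ immediately gives an admissible weight. For each of the four sub-cases ($l_f\in\{\tfrac{p-1}{2},p-1\}$ and $l_f=2$ with $p\equiv 1$ or $5\pmod 8$) I will read off the weight list from Lemma \ref{lem:N0_odd} and then turn to counting frequencies.

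For the frequency of the weight coming from $(a,b)\notin\mathcal{S}_f\times\mathcal{S}_g$, I will use $\#\mathcal{S}_f=p^{m-s}$, $\#\mathcal{S}_g=p^{m-t}$, together with a case distinction on whether $0$ lies in $\mathcal{S}_f,\mathcal{S}_g$ (WRP) or not (WRPB), so that the count of such $(a,b)$ with $(a,b)\ne(0,0)$ is determined exactly. For pairs $(a,b)\in\mathcal{S}_f\times\mathcal{S}_g$, I will use the factorisation
\[
\#\{(a,b)\in\mathcal{S}_f\times\mathcal{S}_g : f^{\star}(a)=c_1,\ g^{\star}(b)=c_2\} \;=\; \mathcal{N}_f(c_1)\,\mathcal{N}_g(c_2),
\]
and evaluate this product by Lemma \ref{lem:N_f}. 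When only the sum $f^{\star}(a)+g^{\star}(b)$ matters, I will collapse the product to $\mathcal{T}(c)$ from Lemma \ref{lem:T}, summing over $c=0$, $c\in S_q$, $c\in N_{sq}$ as dictated by Lemma \ref{lem:N0_odd}.

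The case $l_f=\frac{p-1}{2}$ is the cleanest: three weights, with multiplicities given directly by $\mathcal{T}(0)$, $\sum_{c\in S_q}\mathcal{T}(c)$, $\sum_{c\in N_{sq}}\mathcal{T}(c)$. The case $l_f=p-1$ produces five weights, and the joint condition that both $f^{\star}(a)\pm g^{\star}(b)$ lie in $S_q$ (resp.\ $N_{sq}$) is precisely $B_{S_q}$ (resp.\ $B_{N_{sq}}$) from Lemma \ref{lem:S_q-N_{sq}}; the remaining cases $g^{\star}(b)=\pm f^{\star}(a)$ reduce to $\sum_{c\in S_q}\mathcal{N}_f(c)\mathcal{N}_g(\pm c)$ and its $N_{sq}$ counterpart, which are again evaluated via Lemma \ref{lem:N_f}. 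For $l_f=2$, $p\equiv 1\pmod 8$, the situation is parallel (with $\eta(f^{\star}(a))$ and $\eta(g^{\star}(b))$ playing the role of the signs); for $p\equiv 5\pmod 8$, the expression involves the companion sum $I_4$ of Lemma \ref{lm:companion}, so I will group pairs $(a,b)$ by the cyclotomic coset of $g^{\star}(b)/f^{\star}(a)$ modulo $C_2^{(4,p)}$ and exhibit the corresponding multiplicities.

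For the dimension, I will argue that $\dim C_{D_{f,g}}=2m$ by showing the map $(a,b)\mapsto\mathbf{c}(a,b)$ is injective: this follows once we verify $N_0(a,b)<n+1$ for every $(a,b)\ne(0,0)$, which is a direct inspection of Lemma \ref{lem:N0_odd} since every listed value of $N_0$ is strictly smaller than the length $p^{2m-1}-1$ furnished by Lemma \ref{lem:length} (here we use crucially that $s+t$ is odd, which removes the $\varepsilon_f\varepsilon_g$ correction from $n$). As a consistency check at the end, I will verify $\sum_{w>0} A_w = p^{2m}-1$ and the first moment identity $\sum_w wA_w = n(p-1)p^{2m-1}$, guaranteeing that the tabulated distribution is complete. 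The main obstacle is the bookkeeping in the $l_f=p-1$ case, where the five weights mix contributions from $\mathcal{N}_f\mathcal{N}_g$ along the diagonals $g^{\star}(b)=\pm f^{\star}(a)$ with the joint sign counts $B_{S_q},B_{N_{sq}}$; and, to a lesser extent, the consolidation of the $I_4$-dependent frequencies in the $p\equiv 5\pmod 8$ sub-case into the compact form of the corresponding table.
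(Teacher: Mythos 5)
Your proposal follows the paper's own route essentially verbatim: the weight formula $\texttt{wt}(\textbf{c}(a,b))=n+1-N_0$ combined with Lemma \ref{lem:length} for the length and Lemma \ref{lem:N0_odd} for the weight values, then frequencies obtained by splitting on $(a,b)\in\mathcal{S}_f\times\mathcal{S}_g$ and counting via $\mathcal{N}_f(c_1)\mathcal{N}_g(c_2)$ (Lemma \ref{lem:N_f}), $\mathcal{T}(c)$ (Lemma \ref{lem:T}), and $B_{S_q},B_{N_{sq}}$ (Lemma \ref{lem:S_q-N_{sq}}) in exactly the cases the paper uses them. Your explicit injectivity check for the dimension and the moment-identity consistency checks are sensible additions the paper leaves implicit, but they do not change the argument.
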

 	\begin{proof}
 		From Lemma \ref{lem:length}, the length is $n= p^{2m-1}-1$. Let $(a,b) \in \mathbb{F}_q^2 \backslash \{(0,0)\}$, the weight of nonzero codewords $ \textbf{c}(a,b) $ is denoted by $\texttt{wt}(\textbf{c}(a,b))$. It follows that \[\texttt{wt}(\textbf{c}(a,b))=n+1 - N_0  ,
 		\] where $N_0$ is given by Lemma \ref{lem:N0_odd}.
 	To be more precise, for each $  (a,  b) \notin  \mathcal{S}_f \times \mathcal{S}_g $, we have 
 	\begin{align*}	
 	\texttt{wt}(\textbf{c}(a,b))=(p-1)p^{2m-2}.  
 	\end{align*}
     
 For each $  (a,  b) \in  \mathcal{S}_f \times \mathcal{S}_g $, there are four different cases when $ \texttt{wt}(\textbf{c}(a,b))\ne (p-1)p^{2m-2} $.		
 	
 	$ (1) $	When $l_f=\frac{p-1}{2}$, we have 
 		\begin{align*}
 	\texttt{wt}(\textbf{c}(a,b))=\begin{cases} 
 		(p-1)\bk{p^{2m-2}-  \varepsilon_f \varepsilon_g  \sqrt{p }^{\tau-3} }  ,  & \frac{p-1}{2}\mathcal{T}(i) \textup{ times},\\
 		(p-1)\bk{p^{2m-2}+  \varepsilon_f \varepsilon_g  \sqrt{p }^{\tau-3} },  & \frac{p-1}{2}\mathcal{T}(j) \textup{ times},
 		\end{cases}
 		\end{align*}
 		where $ \mathcal{T}(i) $ and $ \mathcal{T}(j) $ are computed in Lemma \ref{lem:T} for $ i \in S_q$ and $ j \in N_{sq}$. This leads to the weight distribution in Table \ref{odd1}.

 	$ (2) $	When $l_f= p-1 $, we have 	
 		\begin{align*}
 	\texttt{wt}(\textbf{c}(a,b)) =\begin{cases}
 	(p-1)\bk{p^{2m-2}-\frac{1}{2} \varepsilon_f \varepsilon_g  \eta( 2   ) \sqrt{p }^{\tau-3}},  & E_1 \textup{ times} ,\\
 	(p-1)\bk{p^{2m-2}+\frac{1}{2} \varepsilon_f \varepsilon_g  \eta( 2   ) \sqrt{p }^{\tau-3}}, & E_2 \textup{ times} ,\\
 	(p-1)\bk{p^{2m-2}-  \varepsilon_f \varepsilon_g    \sqrt{p }^{\tau-3}}, &
 		B_{S_q} \textup{ times},\\
 	(p-1)\bk{p^{2m-2}+ \varepsilon_f \varepsilon_g   \sqrt{p }^{\tau-3}} , &
 		B_{N_{sq}} \textup{ times}, 
 		\end{cases}
 		\end{align*}
 	where the numbers $ B_{S_q}  $ and $  B_{N_{sq}}  $ are computed in Lemma \ref{lem:S_q-N_{sq}}, and
 \begin{align*}
 	E_1 & = \#\{ (a,  b) \in  \mathcal{S}_f \times \mathcal{S}_g : f^{\star}(a)\in S_q,g^{\star}(b)=\pm f^{\star}(a) \} 	=(p-1)\mathcal{N}_{f}(i) \mathcal{N}_{g}(i),\\
 	E_2 & = \#\{ (a,  b) \in  \mathcal{S}_f \times \mathcal{S}_g :f^{\star}(a)\in N_{sq},g^{\star}(b)=\pm f^{\star}(a) \} 	=(p-1)\mathcal{N}_{f}(j) \mathcal{N}_{g}(j),
 \end{align*}
 for $ i \in S_q$, $ j \in N_{sq}$, and $ \mathcal{N}_{f} $ and $ \mathcal{N}_{g} $ are given in Lemma \ref{lem:N_f}. The weight distribution in Table \ref{odd2} is then established.

 $ (3) $ When $l_f= 2 $ and $ p \equiv 1 \pmod 8 $, we have 	
 \begin{align*}
 	\texttt{wt}(\textbf{c}(a,b))
 	=\begin{cases}
 	(p-1)\bk{p^{2m-2}-  \varepsilon_f \varepsilon_g \sqrt{p}^{\tau-3} } ,  & E_3 \textup{ times},\\
 	(p-1)\bk{p^{2m-2}+  \varepsilon_f \varepsilon_g \sqrt{p}^{\tau-3} } ,  & E_4 \textup{ times},\\
 	(p-1)\bk{p^{2m-2}+2  \varepsilon_f \varepsilon_g \sqrt{p}^{\tau-3} } , & \frac{(p-1)^2}{4}  \mathcal{N}_{f}(i)\mathcal{N}_{g}(i) \textup{ times} , \\
 	(p-1)\bk{p^{2m-2}-2  \varepsilon_f \varepsilon_g \sqrt{p}^{\tau-3} } , & \frac{(p-1)^2}{4}  \mathcal{N}_{f}(j)\mathcal{N}_{g}(j) \textup{ times}, 
 		\end{cases}	
 		\end{align*} 
 where
 	\begin{align*}		
 E_3 & = \#\{ (a,  b) \in  \mathcal{S}_f \times \mathcal{S}_g : f^{\star}(a)= 0,g^{\star}(b)\in S_q\textup{ or } g^{\star}(b)= 0,f^{\star}(a)\in S_q\}\\
 &=\frac{p-1}{2} \bk{\mathcal{N}_{f}(0)\mathcal{N}_{g}(i)+\mathcal{N}_{f}(i)\mathcal{N}_{g}(0)}, \\
 E_4 & = \#\{ (a,  b) \in  \mathcal{S}_f \times \mathcal{S}_g : f^{\star}(a)= 0,g^{\star}(b)\in N_{sq}\textup{ or }  g^{\star}(b)= 0, f^{\star}(a)\in N_{sq}\}\\
 &=\frac{p-1}{2} \bk{\mathcal{N}_{f}(0)\mathcal{N}_{g}(j)+\mathcal{N}_{f}(j)\mathcal{N}_{g}(0)},
 	\end{align*} 
 for $ i \in S_q$ and $ j \in N_{sq}$. Thus we get the weight distribution listed in Table \ref{odd3}.

 	$ (4) $	When $l_f= 2 $ and $ p \equiv 5 \pmod 8 $, we have 	
 		\begin{align*}
 	\texttt{wt}(\textbf{c}(a,b))
 		=\begin{cases}
 	(p-1)\bk{p^{2m-2}-  \varepsilon_f \varepsilon_g \sqrt{p}^{\tau-3} } ,  & E_3 \textup{ times},\\
 	(p-1)\bk{p^{2m-2}+  \varepsilon_f \varepsilon_g \sqrt{p}^{\tau-3} },  & E_4 \textup{ times},\\
 		\makecell{ (p-1)p^{2m-2}- \varepsilon_f \varepsilon_g \sqrt{p}^{\tau-3} \eta(u)\Bk{I_4\bk{\frac{v}{u}}-\eta\bk{\frac{v}{u}}}     \\
 		\textup{ \qquad \qquad	for all }  u,v \in \mathbb{F}_p^* , } & \mathcal{N}_{f}(u)\mathcal{N}_{g}(v) \textup{ times} .
 		\end{cases}	
 		\end{align*} 
 	The weight distribution of this case is summarized in Table \ref{odd4}. 		
 	\end{proof}
 	
 	\begin{table}[htbp]
 		\centering
 		\caption[short text]{The weight distribution of $C_{D_{f,g}}$ in Theorem \ref{th:s+todd} when $ l_f=\frac{p-1}{2}$.}
 		\label{odd1}
 		\begin{tabular}{ll}
 			\hline
 			weight & frequency \\ \hline
 			$ 0 $ & $1 $ \\
 			$ (p-1)p^{2m-2}$  & $p^{2m}-(p-1)p^{\gamma-1}-1 \qquad $ \\
 			$ (p-1)\bk{p^{2m-2}-     \sqrt{p }^{\tau-3} } $ & $\frac{p-1}{2}\bk{p^{\gamma-1}+  \sqrt{p }^{\gamma-1}} $\\
 			$ (p-1)\bk{p^{2m-2}+     \sqrt{p }^{\tau-3} } $  & $ \frac{p-1}{2}\bk{p^{\gamma-1}-  \sqrt{p }^{\gamma-1}} $ \\
 			\hline
 		\end{tabular}
 	\end{table}

 	\begin{table}[htbp]
 	\centering
 	\caption[short text] {The weight distribution of $C_{D_{f,g}}$ in Theorem \ref{th:s+todd} when $ l_f =p-1$.}
 		\label{odd2}
 		\begin{tabular}{ll}
 			\hline
 			weight & frequency \\ \hline
 				$ 0 $ & $1 $ \\
 			$ (p-1)p^{2m-2}$  & $p^{2m}-1-E_1-E_2-B_{S_q}-B_{N_{sq}} $ \\
 			$ (p-1)\bk{p^{2m-2}-\frac{1}{2} \varepsilon_f \varepsilon_g  \eta( 2   ) \sqrt{p }^{\tau-3}} $
 			& $ E_1$ \\
 			$(p-1)\bk{p^{2m-2}+\frac{1}{2} \varepsilon_f \varepsilon_g  \eta( 2   ) \sqrt{p }^{\tau-3}}$
 			&$ E_2 $\\
 			$(p-1)\bk{p^{2m-2}-  \varepsilon_f \varepsilon_g    \sqrt{p }^{\tau-3}}$  & $ B_{S_q} $ \\
 		$ 	(p-1)\bk{p^{2m-2}+ \varepsilon_f \varepsilon_g   \sqrt{p }^{\tau-3}} $ & $ B_{N_{sq}} $ \\
 			\hline
 		\end{tabular}
  	\end{table}

 	\begin{table}[htbp]
 	\centering
 	\caption[short text]{The weight distribution of $C_{D_{f,g}}$ in Theorem \ref{th:s+todd} when $ l_f =2 $ and $ p\equiv 1 \pmod 8 $.}
 		\label{odd3}
 		\begin{tabular}{ll}
 			\hline
 			weight & frequency \\ \hline
 				$ 0 $ & $1 $ \\
 			$ (p-1)p^{2m-2}$  & $p^{2m} -1- E_3 -E_4 -\frac{(p-1)^2}{4} \bk{ \mathcal{N}_{f}(i)\mathcal{N}_{g}(i)+\mathcal{N}_{f}(j)\mathcal{N}_{g}(j)}$ \\
 			$ (p-1)\bk{p^{2m-2}-  \varepsilon_f \varepsilon_g \sqrt{p}^{\tau-3} }  $
 			& $ E_3$\\
 			$ (p-1)\bk{p^{2m-2}+  \varepsilon_f \varepsilon_g \sqrt{p}^{\tau-3} }  $
 			& $E_4$\\
 			$ (p-1)\bk{p^{2m-2}+2  \varepsilon_f \varepsilon_g \sqrt{p}^{\tau-3} }  $  & $ \frac{(p-1)^2}{4}  \mathcal{N}_{f}(i)\mathcal{N}_{g}(i) $\\
  			$ (p-1)\bk{p^{2m-2}-2  \varepsilon_f \varepsilon_g \sqrt{p}^{\tau-3} }  $  & $ \frac{(p-1)^2}{4}  \mathcal{N}_{f}(j)\mathcal{N}_{g}(j) $ \\
  			\hline
 		\end{tabular}

	\end{table}

 \begin{table}[htbp]
 	\centering
 	\caption[short text]{The weight distribution of $C_{D_{f,g}}$ in Theorem \ref{th:s+todd} when $ l_f =2 $ and $ p\equiv 5 \pmod 8 $.}
 	\label{odd4}
 	\begin{tabular}{ll}
 		\hline
 		weight & frequency \\ \hline
 		$ 0 $ & $1 $ \\
 		$ (p-1)p^{2m-2}$  & $p^{2m} -1-E_3-E_4 -\sum_{u,v \in \mathbb{F}_p^*} \mathcal{N}_{f}(u)\mathcal{N}_{g}(v) $ \\
 		$ (p-1)\bk{p^{2m-2}-  \varepsilon_f \varepsilon_g \sqrt{p}^{\tau-3} }  $
 		& $ E_3 $\\
 		$ (p-1)\bk{p^{2m-2}+  \varepsilon_f \varepsilon_g \sqrt{p}^{\tau-3} }  $
 		& $ E_4 $\\
 		\makecell{$ (p-1)p^{2m-2}- \varepsilon_f \varepsilon_g \sqrt{p}^{\tau-3} \eta(u)\Bk{I_4\bk{\frac{v}{u}}-\eta\bk{\frac{v}{u}}}    $ \\
 			for all $ u,v \in \mathbb{F}_p^* $ } & $    \mathcal{N}_{f}(u)\mathcal{N}_{g}(v) $ \\ 
 		\hline
 	\end{tabular}
 \end{table}
 \begin{theorem}\label{th:s+teven1}
  Suppose that $f,\,g \in \rm{WRP}$ with $ l_g= \frac{p-1}{2} $. Let $s+t$ be even. 
 If $l_f=\frac{p-1}{2} $, then $C_{D_{f,g}}$ is a three-weight $[n,2m]$ linear code with its weight distribution listed in Table \ref{evenc1}. If $l_f= p-1 $, then $C_{D_{f,g}}$ is a four-weight $[n,2m]$ linear code with its weight distribution listed in Table \ref{evenc2}. Otherwise if $ l_f =2 $, then $C_{D_{f,g}}$ is a four-weight $[n,2m]$ linear code with its weight distribution listed in Table \ref{evenc3} when $ p \equiv 1 \pmod 8 $, and in Table \ref{evenc4} when $ p \equiv 5 \pmod 8 $. Here we set $ n= p^{2m-1}-1+(p-1) \varepsilon_f \varepsilon_g \sqrt{p}^{\tau-2} $ for brevity.
\end{theorem}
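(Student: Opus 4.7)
The plan is to mirror the argument for Theorem~\ref{th:s+todd}, substituting Lemma~\ref{lem:N0_even} (which governs the case $f,g\in\mathrm{WRP}$ with $s+t$ even) in place of Lemma~\ref{lem:N0_odd}. First, I would read off the length from Lemma~\ref{lem:length}: since $2\mid s+t$ and $p\equiv 1\pmod 4$ (so $p^*=p$), we obtain $n=p^{2m-1}-1+(p-1)\varepsilon_f\varepsilon_g\sqrt{p}^{\tau-2}$. Unlike the odd case, the term $\Delta_1$ no longer vanishes, which is precisely what gives both the extra summand in $n$ and the nontrivial value of $N_0$ when $(a,b)\notin\mathcal{S}_f\times\mathcal{S}_g$.

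Next, for every nonzero codeword I write $\texttt{wt}(\mathbf{c}(a,b))=n+1-N_0$ and split on whether $(a,b)\in\mathcal{S}_f\times\mathcal{S}_g$. For $(a,b)\notin\mathcal{S}_f\times\mathcal{S}_g$, Lemma~\ref{lem:N0_even} gives the common value $N_0=p^{2m-2}+(p-1)\varepsilon_f\varepsilon_g\sqrt{p}^{\tau-4}$, producing a single weight $(p-1)p^{2m-2}+(p-1)^2\varepsilon_f\varepsilon_g\sqrt{p}^{\tau-4}$ with multiplicity $p^{2m}-p^{\gamma}$ (the complement of $\mathcal{S}_f\times\mathcal{S}_g$ in $\mathbb{F}_q^2$). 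For $(a,b)\in\mathcal{S}_f\times\mathcal{S}_g\setminus\{(0,0)\}$, I would read the $N_0$ values straight from Lemma~\ref{lem:N0_even}, case by case according to $l_f\in\{(p-1)/2,\,p-1,\,2\}$ (and, when $l_f=2$, the residue of $p$ modulo $8$), and compute the corresponding weights.

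The frequencies come from Lemmas~\ref{lem:N_f} and~\ref{lem:T}. For $l_f=(p-1)/2$ the only discriminator is whether $f^{\star}(a)+g^{\star}(b)=0$, so the counts are $\mathcal{T}(0)-1$ and $p^{\gamma}-\mathcal{T}(0)$, respectively (subtracting $1$ for the zero codeword $(0,0)$, which accounts for the weight~$0$ entry of frequency $1$). For $l_f=p-1$ the extra subdivisions $g^{\star}(b)=\pm f^{\star}(a)$ give frequencies built from $\mathcal{N}_f(0)\mathcal{N}_g(0)$ and $\sum_{c\in\mathbb{F}_p^*}\mathcal{N}_f(c)\mathcal{N}_g(\pm c)$, with the remainder filling the generic weight. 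For $l_f=2$ and $p\equiv 1\pmod 8$ the subdivision $f^{\star}(a)g^{\star}(b)\in S_q$ gives the appropriate count via $\sum_{u\in\mathbb{F}_p^*}\bigl(\mathcal{N}_f(u)\mathcal{N}_g(u)+\mathcal{N}_f(-u)\mathcal{N}_g(u)\bigr)$; for $p\equiv 5\pmod 8$ the analogous enumeration uses $g^{\star}(b)/f^{\star}(a)\in C_2^{(4,p)}$.

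The main obstacle I anticipate is bookkeeping rather than ideas: I need to verify that the weight arising from $(a,b)\notin\mathcal{S}_f\times\mathcal{S}_g$ is genuinely distinct from all of the weights arising inside $\mathcal{S}_f\times\mathcal{S}_g$, so that the code really has the stated number of weights. This reduces to comparing the coefficients of $\varepsilon_f\varepsilon_g$: the ``outside'' weight contributes $(p-1)^2\sqrt{p}^{\tau-4}$, whereas the ``inside'' weights contribute multiples of $\sqrt{p}^{\tau-2}=p\sqrt{p}^{\tau-4}$ (or of $(p-2)\sqrt{p}^{\tau-2}$, etc.), and a quick inspection shows none of these coincide for $p\geqslant 5$. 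Once this is checked, I would also verify the sum of all frequencies equals $p^{2m}$ as a consistency check, and then collect the data into Tables~\ref{evenc1}--\ref{evenc4} exactly as in the odd-case proof.
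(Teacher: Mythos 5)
Your proposal follows essentially the same route as the paper: length from Lemma \ref{lem:length}, weights via $\texttt{wt}(\textbf{c}(a,b))=n+1-N_0$ with $N_0$ read from Lemma \ref{lem:N0_even}, a split on whether $(a,b)\in\mathcal{S}_f\times\mathcal{S}_g$, and frequencies from Lemmas \ref{lem:N_f} and \ref{lem:T}; your counts for $l_f=\frac{p-1}{2}$ and $l_f=p-1$ agree with the paper's $\mathcal{T}(0)-1$, $p^{\gamma}-\mathcal{T}(0)=(p-1)\mathcal{T}(c)$ and $F_1$ (using $\mathcal{N}_g(-c)=\mathcal{N}_g(c)$, valid since $p\equiv 1\pmod 4$). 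One slip worth fixing: for $l_f=2$ and $p\equiv 1\pmod 8$, the set $\{(a,b):f^{\star}(a)g^{\star}(b)\in S_q\}$ has size $\frac{(p-1)^2}{4}\bigl(\mathcal{N}_f(i)\mathcal{N}_g(i)+\mathcal{N}_f(j)\mathcal{N}_g(j)\bigr)$ for $i\in S_q$, $j\in N_{sq}$ (a sum over all pairs $(u,v)$ with $uv\in S_q$), whereas the expression $\sum_{u\in\mathbb{F}_p^*}\bigl(\mathcal{N}_f(u)\mathcal{N}_g(u)+\mathcal{N}_f(-u)\mathcal{N}_g(u)\bigr)$ you wrote only counts the pairs with $g^{\star}(b)=\pm f^{\star}(a)$ and is off by a factor of $\frac{p-1}{4}$; the consistency check you propose (frequencies summing to $p^{2m}$) would catch this.
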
 
 
 \begin{proof}
 The length of the code $C_{D_{f,g}}$ comes from Lemma \ref{lem:length}. 
For $(a,b) \in \mathbb{F}_q^2 \backslash \{(0,0)\}$, the weight $ \texttt{wt}(\textbf{c}(a,b))=n+1 - N_0  $ can be obtained from Lemma \ref{lem:N0_even}.
 To be more explicit, when $  (a,  b) \notin  \mathcal{S}_f \times \mathcal{S}_g $, we have 
 \begin{align*}	
 \texttt{wt}(\textbf{c}(a,b))=(p-1)\bk{p^{2m-2}+ (p-1)\varepsilon_f \varepsilon_g  \sqrt{p}^{\tau-4} }.  
 \end{align*}
By Lemma \ref{lem:car}, the frequency of such codewords equals $ p^{2m}-p^{\gamma} $ since $ f,\,g \in \rm{WRP} $.
When $  (a,  b) \in  \mathcal{S}_f \times \mathcal{S}_g \backslash \{(0,0)\} $, we will discuss the following four different cases.		
 
 $ (1) $	When $l_f=\frac{p-1}{2}$, we have 
 \begin{align*}
 \texttt{wt}(\textbf{c}(a,b))=\begin{cases} 
  (p-1)p^{2m-2},  & \mathcal{T}(0)-1 \textup{ times} , \\
  (p-1)\bk{p^{2m-2}+\varepsilon_f \varepsilon_g \sqrt{p}^{\tau-2}},   &  (p-1)\mathcal{T}(c) \textup{ times} , 
 \end{cases}
 \end{align*}
 where $ \mathcal{T}(0) $ and $ \mathcal{T}(c) $ are given in Lemma \ref{lem:T} for $ c \ne 0 $. This gives the weight distribution in Table \ref{evenc1}. 		
 
 $ (2) $ When $l_f= p-1 $, we have 	
 \begin{align*}
 \texttt{wt}(\textbf{c}(a,b)) =\begin{cases}
 	(p-1)p^{2m-2},  & \mathcal{N}_{f}(0)\mathcal{N}_{g}(0)-1  \textup{ times}, \\
 	 (p-1)\bk{p^{2m-2}+\frac{1}{2}\varepsilon_f \varepsilon_g \sqrt{p}^{\tau-2}} , &  F_1 \textup{ times},\\
  (p-1)\bk{p^{2m-2}+ \varepsilon_f \varepsilon_g \sqrt{p}^{\tau-2}} , &  F_2 \textup{ times}, 
 \end{cases}
 \end{align*}
 where we define  
 \begin{align*}
 F_1 & = \#\{ (a,  b) \in  \mathcal{S}_f \times \mathcal{S}_g :f^{\star}(a)\ne 0,g^{\star}(b)=\pm f^{\star}(a) \} 	=2\sum_{c\in \mathbb{F}_p^*} \mathcal{N}_{f}(c)\mathcal{N}_{g}(c)   ,\\
 F_2 & = p^{\gamma }-\mathcal{N}_{f}(0)\mathcal{N}_{g}(0)-F_1.
 \end{align*} 
 Thus we obtain the weight distribution in Table \ref{evenc2}.

 $ (3) $ When $l_f= 2 $ and $ p \equiv 1 \pmod 8 $, we have 	
 \begin{align*}
 \texttt{wt}(\textbf{c}(a,b))
 =\begin{cases}
(p-1)p^{2m-2},  & \mathcal{N}_{f}(0)\mathcal{N}_{g}(0)-1  \textup{ times}, \\
(p-1) p^{2m-2}+(p-3)\varepsilon_f \varepsilon_g \sqrt{p}^{\tau-2} , &  F_3 \textup{ times},  \\
 (p-1)\bk{p^{2m-2}+ \varepsilon_f \varepsilon_g \sqrt{p}^{\tau-2}} , &  F_4 \textup{ times},
 \end{cases}	
 \end{align*} 
 where
 \begin{align*}		
 F_3 & = \#\{ (a,  b) \in  \mathcal{S}_f \times \mathcal{S}_g : f^{\star}(a) g^{\star}(b) \in S_q\}
 =\frac{(p-1)^2}{4}(\mathcal{N}_{f}(i)\mathcal{N}_{g}(i)+\mathcal{N}_{f}(j)\mathcal{N}_{g}(j)),   \\
 F_4 & = p^{\gamma }-\mathcal{N}_{f}(0)\mathcal{N}_{g}(0)-F_3,
 \end{align*} 
 for $ i \in S_q$ and $ j \in N_{sq}$. This implies the weight distribution listed in Table \ref{evenc3}.

 $ (4) $ When $l_f= 2 $ and $ p \equiv 5 \pmod 8 $, we get 	
 \begin{align*}
 \texttt{wt}(\textbf{c}(a,b))
 =\begin{cases}
 (p-1)p^{2m-2},  & \mathcal{N}_{f}(0)\mathcal{N}_{g}(0)-1  \textup{ times}, \\
 (p-1) p^{2m-2}+(p-5)\varepsilon_f \varepsilon_g \sqrt{p}^{\tau-2}, &   F_5 \textup{ times}, \\
 (p-1)\bk{p^{2m-2}+ \varepsilon_f \varepsilon_g \sqrt{p}^{\tau-2}} , &   F_6 \textup{ times},  
 \end{cases}	
 \end{align*} 
 where we write 
  \begin{align*}		
  F_5 & = \#\{ (a,  b) \in  \mathcal{S}_f \times \mathcal{S}_g :  \frac{g^{\star}(b)}{f^{\star}(a)} \in C_2^{(4,p)}\} \\
  &=\frac{(p-1)^2}{8}(\mathcal{N}_{f}(i)\mathcal{N}_{g}(i)+\mathcal{N}_{f}(j)\mathcal{N}_{g}(j))=\frac{1}{2} F_3,   \\
  F_6 & = p^{\gamma }-\mathcal{N}_{f}(0)\mathcal{N}_{g}(0)-\frac{1}{2} F_3,
  \end{align*} 
  for $ i \in S_q$ and $ j \in N_{sq}$. Thus the result in Table \ref{evenc4} is derived. 	
 \end{proof}

 	\begin{table}[htbp]
 	\centering
 	\caption[short text] {The weight distribution of $C_{D_{f,g}}$ in Theorem \ref{th:s+teven1} when $ l_f=\frac{p-1}{2}$.}
 		\label{evenc1}
 		\begin{tabular}{ll}
 			\hline
 			weight & frequency \\ \hline
 				$ 0 $ & $1 $ \\
 			$ (p-1)p^{2m-2}$  & $p^{\gamma-1}+(p-1) \varepsilon_f \varepsilon_g\sqrt{p}^{\gamma-2}-1 $ \\
 			$ (p-1)\bk{p^{2m-2}+\varepsilon_f \varepsilon_g \sqrt{p}^{\tau-2}} $ & $(p-1)(p^{\gamma-1}- \varepsilon_f \varepsilon_g\sqrt{p}^{\gamma-2})$\\
 			$(p-1)\bk{p^{2m-2}+ (p-1)\varepsilon_f \varepsilon_g  \sqrt{p}^{\tau-4} }$  & $ p^{2m}-p^{\gamma} $ \\
 			\hline
 		\end{tabular}
 	\end{table}
 	\begin{table}[htbp]
 	\centering
 	\caption[short text] {The weight distribution of $C_{D_{f,g}}$ in Theorem \ref{th:s+teven1} when $ l_f=p-1$. }\label{evenc2}
 		\begin{tabular}{ll}
 			\hline
 			weight & frequency \\ \hline
 				$ 0 $ & $1 $ \\
 			$ (p-1)p^{2m-2}$  & $\mathcal{N}_{f}(0)\mathcal{N}_{g}(0)-1 $ \\
 			$ (p-1)\bk{p^{2m-2}+\frac{1}{2}\varepsilon_f \varepsilon_g \sqrt{p}^{\tau-2}} $ & $F_1$\\
 			$ (p-1)\bk{p^{2m-2}+ \varepsilon_f \varepsilon_g \sqrt{p}^{\tau-2}} $ & $ p^{\gamma }-\mathcal{N}_{f}(0)\mathcal{N}_{g}(0)-F_1  $\\
 			$(p-1)\bk{p^{2m-2}+ (p-1)\varepsilon_f \varepsilon_g  \sqrt{p}^{\tau-4} }$  & $ p^{2m}-p^{\gamma} $ \\
 			\hline
 		\end{tabular}
 	\end{table}
 	
 \begin{table}[htbp]
 \centering
 \caption[short text] {The weight distribution of $C_{D_{f,g}}$ in Theorem \ref{th:s+teven1} when $l_f= 2 $ and $ p \equiv 1 \pmod 8 $. }
 \label{evenc3}
 \begin{tabular}{ll}
 \hline
 weight & frequency \\ \hline
 	$ 0 $ & $ 1 $ \\
 	$ (p-1)p^{2m-2}$  & $\mathcal{N}_{f}(0)\mathcal{N}_{g}(0)-1 $ \\
 	$ (p-1) p^{2m-2}+(p-3)\varepsilon_f \varepsilon_g \sqrt{p}^{\tau-2} $ & $ F_3 $\\
 	$ (p-1)\bk{p^{2m-2}+ \varepsilon_f \varepsilon_g \sqrt{p}^{\tau-2}} $ & $ p^{\gamma }-\mathcal{N}_{f}(0)\mathcal{N}_{g}(0)-F_3$\\
 	$(p-1)\bk{p^{2m-2}+ (p-1)\varepsilon_f \varepsilon_g  \sqrt{p}^{\tau-4} }$  & $ p^{2m}-p^{\gamma} $ \\ 	
 	\hline
 	\end{tabular}
 \end{table}

 \begin{table}[htbp]
 \centering
 \caption[short text] {The weight distribution of $C_{D_{f,g}}$ in Theorem \ref{th:s+teven1} when $ l_f=2$ and $ p \equiv 5 \pmod 8 $. }
 	\label{evenc4}
 	\begin{tabular}{ll}
 	\hline
 	weight & frequency \\ \hline
 	$ 0 $ & $ 1 $ \\
 	$ (p-1)p^{2m-2}$  & $\mathcal{N}_{f}(0)\mathcal{N}_{g}(0)-1 $ \\
 	$ (p-1) p^{2m-2}+(p-5)\varepsilon_f \varepsilon_g \sqrt{p}^{\tau-2} $ & $ \frac{1}{2} F_3 $\\
 	$ (p-1)\bk{p^{2m-2}+ \varepsilon_f \varepsilon_g \sqrt{p}^{\tau-2}} $ & $ p^{\gamma }-\mathcal{N}_{f}(0)\mathcal{N}_{g}(0)-\frac{1}{2} F_3 $\\
 	$(p-1)\bk{p^{2m-2}+ (p-1)\varepsilon_f \varepsilon_g  \sqrt{p}^{\tau-4} }$  & $ p^{2m}-p^{\gamma} $ \\ 
 		\hline
 	\end{tabular}
 \end{table}

 \begin{theorem}\label{th:s+tevenB}
 Suppose that $f,\,g \in \rm{WRPB}$ with $ l_g= \frac{p-1}{2} $. Let $s+t$ be even.  
If $l_f=\frac{p-1}{2} $, then $C_{D_{f,g}}$ is a three-weight $[p^{2m-1}-1,2m]$ linear code with its weight distribution listed in Table \ref{evenB1}. If $ l_f =p-1$, then $C_{D_{f,g}}$ is a four-weight $[p^{2m-1}-1,2m]$ linear code with its weight distribution listed in Table \ref{evenB2}. Otherwise if $ l_f =2 $, then $C_{D_{f,g}}$ is a four-weight $[p^{2m-1}-1,2m]$ linear code with its weight distribution listed in Table \ref{evenB3} when $ p \equiv 1 \pmod 8 $, and in Table \ref{evenB4} when $ p \equiv 5 \pmod 8 $. 
 \end{theorem}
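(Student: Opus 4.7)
The plan is to mirror the strategy of Theorem \ref{th:s+teven1}, but feed in the balanced-case values of $N_0$ supplied by Lemma \ref{lem:N0_e2} instead of those of Lemma \ref{lem:N0_even}. By Lemma \ref{lem:length}, the defining set $D_{f,g}$ has cardinality $n = p^{2m-1}-1$ for $f,g \in \mathrm{WRPB}$ irrespective of the parity of $s+t$, so the length claim is immediate. The dimension is $2m$ as usual (and will be confirmed at the end once one checks the zero codeword appears only once, i.e.\ the weight $0$ has frequency $1$). For any $(a,b)\in\mathbb{F}_q^2\setminus\{(0,0)\}$ the weight is $\texttt{wt}(\textbf{c}(a,b)) = n+1-N_0 = p^{2m-1}-N_0$, so everything reduces to tabulating $N_0$ and counting how often each value is attained.

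First I would split according to whether $(a,b)\in \mathcal{S}_f\times\mathcal{S}_g$. When $(a,b)\notin \mathcal{S}_f\times\mathcal{S}_g$, Lemma \ref{lem:N0_e2} gives $N_0=p^{2m-2}$, so the corresponding weight is $(p-1)p^{2m-2}$; its frequency is $p^{2m}-1-(\#\mathcal{S}_f\cdot\#\mathcal{S}_g-1) = p^{2m}-p^{\gamma}$ by Lemma \ref{lem:car}. When $(a,b)\in\mathcal{S}_f\times\mathcal{S}_g$, I would then split into the four subcases according to $l_f\in\{\tfrac{p-1}{2},p-1,2\}$ and, in the case $l_f=2$, further according to $p\bmod 8$, exactly as in the statement. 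In each subcase Lemma \ref{lem:N0_e2} supplies the possible values of $N_0$, and one only has to count the $(a,b)\in \mathcal{S}_f\times\mathcal{S}_g$ producing each value.

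For the counting, I would reuse the same combinatorial quantities that carried Theorem \ref{th:s+teven1}: the case $l_f=\frac{p-1}{2}$ is handled by Lemma \ref{lem:T} applied to the three buckets $f^{\star}(a)+g^{\star}(b)=0$ and $\ne 0$; the case $l_f=p-1$ is handled by $\mathcal{N}_f(0)\mathcal{N}_g(0)$ together with $F_1=2\sum_{c\in\mathbb{F}_p^*}\mathcal{N}_f(c)\mathcal{N}_g(c)$ counting the pairs with $g^{\star}(b)=\pm f^{\star}(a)\ne 0$; the case $l_f=2$ with $p\equiv 1\pmod 8$ uses $F_3=\#\{(a,b):f^{\star}(a)g^{\star}(b)\in S_q\}=\frac{(p-1)^2}{4}(\mathcal{N}_f(i)\mathcal{N}_g(i)+\mathcal{N}_f(j)\mathcal{N}_g(j))$; and the case $l_f=2$ with $p\equiv 5\pmod 8$ uses $\frac{1}{2}F_3$ arising from the condition $g^{\star}(b)/f^{\star}(a)\in C_2^{(4,p)}$. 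In all four subcases the remaining pairs in $\mathcal{S}_f\times\mathcal{S}_g$ contribute the weight $(p-1)p^{2m-2}$ plus a small correction read off from Lemma \ref{lem:N0_e2}, and their count is $p^{\gamma}-\mathcal{N}_f(0)\mathcal{N}_g(0)$ minus the already-counted cases.

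The main technical nuisance, rather than a genuine obstacle, will be the arithmetic simplification of the weight values: Lemma \ref{lem:N0_e2} yields corrections of the form $(p-1)^2\,\varepsilon_f\varepsilon_g\sqrt{p}^{\tau-4}$, $(p-2)\tfrac{p-1}{2}\,\varepsilon_f\varepsilon_g\sqrt{p}^{\tau-4}$, $(p+1)\,\varepsilon_f\varepsilon_g\sqrt{p}^{\tau-4}$, $(3p+1)\,\varepsilon_f\varepsilon_g\sqrt{p}^{\tau-4}$ and $-(p-1)\,\varepsilon_f\varepsilon_g\sqrt{p}^{\tau-4}$, which, once subtracted from $n+1-p^{2m-2}=(p-1)p^{2m-2}$ and rescaled, collapse to the clean weights listed in Tables \ref{evenB1}--\ref{evenB4}. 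Checking that the frequencies in each table sum to $p^{2m}$ (equivalently, verifying the zero-weight count is exactly $1$, which forces the dimension to equal $2m$) will serve as a consistency test at the end of the argument.
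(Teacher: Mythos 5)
Your overall route is the same as the paper's: the printed proof of this theorem is a two-line deferral to the argument of Theorem \ref{th:s+teven1}, with Lemma \ref{lem:N0_e2} substituted for Lemma \ref{lem:N0_even}, and your plan reproduces that faithfully, including the correct choice of counting tools (Lemmas \ref{lem:car}, \ref{lem:N_f}, \ref{lem:T}, \ref{lem:length}) for each subcase of $l_f$ and the observation that the corrections $(p-1)^2$, $\frac{p-1}{2}(p-2)$, $(p+1)$, $(3p+1)$, $-(p-1)$ times $\varepsilon_f\varepsilon_g\sqrt{p}^{\tau-4}$ collapse into the tabulated weights.

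There is, however, one concrete slip, and it is precisely the single point the paper's proof makes before deferring: for $f,\,g\in\mathrm{WRPB}$ one has $0\notin\mathcal{S}_f$ and $0\notin\mathcal{S}_g$, hence $(0,0)\notin\mathcal{S}_f\times\mathcal{S}_g$. Your frequency count $p^{2m}-1-(\#\mathcal{S}_f\cdot\#\mathcal{S}_g-1)=p^{2m}-p^{\gamma}$ for the codewords with $(a,b)\notin\mathcal{S}_f\times\mathcal{S}_g$ subtracts the $1$ accounting for $(0,0)$ from the wrong bucket: all $p^{\gamma}$ pairs of $\mathcal{S}_f\times\mathcal{S}_g$ are already nonzero, so the complement contains $p^{2m}-1-p^{\gamma}$ nonzero pairs, and the weight $(p-1)p^{2m-2}$ must have frequency $p^{2m}-p^{\gamma}-1$, as in Tables \ref{evenB1}--\ref{evenB4}. (Correspondingly, no bucket inside $\mathcal{S}_f\times\mathcal{S}_g$ loses a pair: the relevant frequencies are $\mathcal{T}(0)$ and $\mathcal{N}_{f}(0)\mathcal{N}_{g}(0)$ rather than $\mathcal{T}(0)-1$ and $\mathcal{N}_{f}(0)\mathcal{N}_{g}(0)-1$ as in Theorem \ref{th:s+teven1}; you did state those correctly.) As written, your two counts together with the zero codeword total $p^{2m}+1$ pairs, one too many; the consistency check you propose at the end would have exposed this, but the fix should be made explicit since distinguishing where $(0,0)$ lives is the only genuine difference between this theorem and Theorem \ref{th:s+teven1}.
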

 \begin{proof}
 Notation that $   (0,0)  $	is not in $ \mathcal{S}_f \times \mathcal{S}_g  $ since $f,\,g \in \rm{WRPB}$. This theorem can be derived in the same way as Theorem \ref{th:s+teven1} by using Lemmas \ref{lem:car}, \ref{lem:N_f}, \ref{lem:T}, \ref{lem:length} and \ref{lem:N0_e2}. We omit the details here.
 \end{proof}

 \begin{table}[htbp]
 \centering
 \caption[short text] {The weight distribution of $C_{D_{f,g}}$ in Theorem \ref{th:s+tevenB} when $ l_f= \frac{p-1}{2}$. }
 	\label{evenB1}
 	\begin{tabular}{ll}
 		\hline
 		weight & frequency \\ \hline
 			$ 0 $ & $1 $ \\
$ (p-1)\bk{p^{2m-2}-(p-1) \varepsilon_f \varepsilon_g \sqrt{p}^{\tau-4}} $  
& $p^{\gamma-1}+(p-1) \varepsilon_f \varepsilon_g\sqrt{p}^{\gamma-2} $ \\
 $ (p-1)\bk{p^{2m-2}+\varepsilon_f \varepsilon_g \sqrt{p}^{\tau-4}} $
  & $ (p-1)\bk{p^{\gamma-1}-  \varepsilon_f \varepsilon_g \sqrt{p}^{\gamma-2}} $\\
 	$(p-1)p^{2m-2}$  & $ p^{2m}-p^{\gamma}-1 $ \\
 		\hline
 	\end{tabular}
 \end{table}
 
 \begin{table}[htbp]
 	\centering
 	\caption[short text] {The weight distribution of $C_{D_{f,g}}$ in Theorem \ref{th:s+tevenB} when $ l_f= p-1$. }
 	\label{evenB2}
 	\begin{tabular}{ll}
 		\hline
 		weight & frequency \\ \hline
 		$ 0 $ & $1 $ \\
 		$ (p-1)\bk{p^{2m-2}-(p-1) \varepsilon_f \varepsilon_g \sqrt{p}^{\tau-4}} $  
 		& $\mathcal{N}_{f}(0)\mathcal{N}_{g}(0) $ \\
 		$ (p-1)\bk{p^{2m-2}-\frac{p-2}{2} \varepsilon_f \varepsilon_g \sqrt{p}^{\tau-4}} $  
 		& $F_1 $ \\
 		$ (p-1)\bk{p^{2m-2}+\varepsilon_f \varepsilon_g \sqrt{p}^{\tau-4}} $
 		& $ p^{\gamma}-\mathcal{N}_{f}(0)\mathcal{N}_{g}(0)-F_1  $\\
 		$(p-1)p^{2m-2}$  & $ p^{2m}-p^{\gamma}-1 $ \\
 		\hline
 	\end{tabular}
 \end{table}  

 \begin{table}[htbp]
 	\centering
 	\caption[short text] {The weight distribution of $C_{D_{f,g}}$ in Theorem \ref{th:s+tevenB} when $ l_f=2$ and $ p \equiv 1 \pmod 8 $. }
 	\label{evenB3}
 	\begin{tabular}{ll}
 		\hline
 		weight & frequency \\ \hline
 		$ 0 $ & $1 $ \\
 		$ (p-1)\bk{p^{2m-2}-(p-1) \varepsilon_f \varepsilon_g \sqrt{p}^{\tau-4}} $  
 		& $\mathcal{N}_{f}(0)\mathcal{N}_{g}(0) $ \\
 		$ (p-1)p^{2m-2}-(p+1) \varepsilon_f \varepsilon_g \sqrt{p}^{\tau-4} $  
 		& $F_3 $ \\
 		$ (p-1)\bk{p^{2m-2}+\varepsilon_f \varepsilon_g \sqrt{p}^{\tau-4}} $
 		& $ p^{\gamma}-\mathcal{N}_{f}(0)\mathcal{N}_{g}(0)-F_3 $\\
 		$(p-1)p^{2m-2}$  & $ p^{2m}-p^{\gamma}-1 $ \\
 		\hline
 	\end{tabular}
 \end{table}
 
  \begin{table}[htbp]
  	\centering
  	\caption[short text] {The weight distribution of $C_{D_{f,g}}$ in Theorem \ref{th:s+tevenB} when $ l_f= 2$ and $ p \equiv 5 \pmod 8 $. }
  	\label{evenB4}
  	\begin{tabular}{ll}
  		\hline
  		weight & frequency \\ \hline
  		$ 0 $ & $1 $ \\
 		$ (p-1)\bk{p^{2m-2}-(p-1) \varepsilon_f \varepsilon_g \sqrt{p}^{\tau-4}} $  
 		& $\mathcal{N}_{f}(0)\mathcal{N}_{g}(0) $ \\
 		$ (p-1)p^{2m-2}-(3p+1) \varepsilon_f \varepsilon_g \sqrt{p}^{\tau-4} $  
 		& $ \frac{1}{2} F_3 $ \\
 		$ (p-1)\bk{p^{2m-2}+\varepsilon_f \varepsilon_g \sqrt{p}^{\tau-4}} $
 		& $ p^{\gamma}-\mathcal{N}_{f}(0)\mathcal{N}_{g}(0)-\frac{1}{2} F_3 $\\
 		$(p-1)p^{2m-2}$  & $ p^{2m}-p^{\gamma}-1 $ \\
  		\hline
  	\end{tabular}
  \end{table}

\begin{remark}
	In Theorems \ref{th:s+todd}, \ref{th:s+teven1} and \ref{th:s+tevenB}, we dealt with the code $C_{D_{f,g}}$ for $f,g  \in \rm{WRP}$ or $f,g  \in \rm{WRPB}$ with $l_f \in \{2,\frac{p-1}{2},p-1\}$ and $ l_g= \frac{p-1}{2} $, where $ p\equiv 1 \pmod 4 $. 
	However, some of the results coincide with the known ones in the literature. Specifically, when $f,g  \in \rm{WRP}$, the weight distributions in Tables \ref{odd1}, \ref{evenc1} and \ref{evenc3} coincide with the results of Tables 3, 4 and 5 in \cite{Sinak2022}, respectively. If we set $ t=s $ in Tables \ref{evenc1} and \ref{evenc3}, then we get the results of Theorem 4 in \cite{cao2021}.
	When $f,g  \in \rm{WRPB}$, the weight distributions in Tables \ref{evenB1} and \ref{evenB3} coincide with the results of Tables 6 and 7 in \cite{Sinak2022}, respectively. 
\end{remark}
  
 \subsection{The punctured code}
 
 In the following, we study the punctured code from $C_{D_{f,g}}$ by deleting some coordinates of each codeword. As we can see from Tables \ref{odd1}, \ref{odd3}, \ref{evenc1} and \ref{evenB1}, the length and each nonzero Hamming weight have $ p-1 $ as a common divisor. This suggests that they can be punctured into shorter ones. 
 Let $f  \in \rm{WRP}$ or $f  \in \rm{WRPB}$. For any $ x  \in \mathbb{F}_q $, we obtain $ f(x)=0 $ if and only if $ f( z x)=0 $  for all $ z \in \mathbb{F}_p^* $, since $ f( z x)=z^h f(x) $ for an even integer $ h $ with $ \gcd(h-1,p-1)=1 $. Thus we can select a subset $ \overline{D}_{f,g} = \{\overline{(x,y)}:(x,y) \in D_{f,g} \}$ from $ D_{f,g} $ in \eqref{def:D-fg}, such that $ \bigcup_{z \in \mathbb{F}_p^*} z \overline{D}_{f,g} =  D_{f,g}$ forms a partition of $  D_{f,g} $. Hence, we get the punctured code $ C_{\overline{D}_{f,g}} $ from $C_{D_{f,g}}$. Moreover, the code $ C_{\overline{D}_{f,g}} $ is projective since the minimum distance of its dual $ C_{\overline{D}_{f,g}}^\perp $ is at least $ 3 $ as checked in \cite{cao2021}. We can also find some optimal codes when they meet certain specific conditions.

 The following results related to the weight distributions of $ C_{\overline{D}_{f,g}} $ follow directly from Tables \ref{odd1}, \ref{odd3}, \ref{evenc1} and \ref{evenB1}, respectively. Remember that $\tau=2m+s+t$ and $\gamma =2m-s-t$.

\begin{corollary}\label{cor:odd}
Suppose that $f,\,g \in \rm{WRP}$ or $f,\,g \in \rm{WRPB}$ with $ l_g= \frac{p-1}{2} $. Let $s+t$ be odd and $ \bar{d}^\perp $ be the minimum distance of $ C_{\overline{D}_{f,g}}^\perp $. Then $ \bar{d}^\perp \geqslant 3 $. Moreover, if $l_f=\frac{p-1}{2}$, then $C_{\overline{D}_{f,g}}$ is a three-weight $[\frac{p^{2m-1}-1}{p-1},2m]$ linear code with its weight distribution listed in Table \ref{odd1-1}, and if $ l_f =2 $ and $ p\equiv 1 \pmod 8 $, then $C_{\overline{D}_{f,g}}$ is a five-weight $[\frac{p^{2m-1}-1}{p-1},2m]$ linear code with its weight distribution listed in Table \ref{odd3-1}, where $ E_3 $ and $ E_4 $ are computed in Theorem \ref{th:s+todd}. 
\end{corollary}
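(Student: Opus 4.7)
The plan is to derive the corollary from Theorem \ref{th:s+todd} by exploiting the diagonal scalar action of $\mathbb{F}_p^*$ on the defining set $D_{f,g}$. First I would verify that $D_{f,g}$ is $\mathbb{F}_p^*$-invariant: using condition (2) in the WRP/WRPB definition, one sees $f(zx)+g(zy)=z^{h}(f(x)+g(y))=0$ whenever $(x,y)\in D_{f,g}$, so $z(x,y)\in D_{f,g}$ for every $z\in\mathbb{F}_p^*$. Since $(x,y)\neq(0,0)$, the stabilizer is trivial and every orbit has size exactly $p-1$. Invoking Lemma \ref{lem:length} with $s+t$ odd (so $\#D_{f,g}=p^{2m-1}-1$), this gives $\#\overline{D}_{f,g}=\frac{p^{2m-1}-1}{p-1}$, the length asserted in the corollary.

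Next I would compare weights between $C_{D_{f,g}}$ and $C_{\overline{D}_{f,g}}$. For any $(a,b)\in\mathbb{F}_q^{2}\setminus\{(0,0)\}$ and any representative $(x,y)\in\overline{D}_{f,g}$, linearity of the trace yields $\textup{Tr}(a(zx)+b(zy))=z\,\textup{Tr}(ax+by)$ for all $z\in\mathbb{F}_p^*$, so the $p-1$ coordinates of $\textbf{c}(a,b)$ lying on a common orbit vanish simultaneously. Denoting by $\overline{\textbf{c}}(a,b)$ the punctured codeword, this gives $\texttt{wt}(\textbf{c}(a,b))=(p-1)\,\texttt{wt}(\overline{\textbf{c}}(a,b))$, and in particular $\textbf{c}(a,b)=\textbf{0}$ iff $\overline{\textbf{c}}(a,b)=\textbf{0}$. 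Consequently $\dim C_{\overline{D}_{f,g}}=\dim C_{D_{f,g}}=2m$, and dividing every nonzero weight in Table \ref{odd1} (resp.\ Table \ref{odd3}) by $p-1$ while preserving frequencies yields Table \ref{odd1-1} when $l_f=\frac{p-1}{2}$ (resp.\ Table \ref{odd3-1} when $l_f=2$ and $p\equiv 1\pmod 8$).

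For the projectivity claim $\bar{d}^{\perp}\geqslant 3$, I would argue column-wise. Fixing an $\mathbb{F}_p$-basis $\{(\alpha_i,\beta_i):1\leqslant i\leqslant 2m\}$ of $\mathbb{F}_q^{2}$, the generator matrix of $C_{\overline{D}_{f,g}}$ has the column at $(x,y)\in\overline{D}_{f,g}$ given by $(\textup{Tr}(\alpha_i x+\beta_i y))_{i=1}^{2m}$. No such column is zero, since $(x,y)\neq(0,0)$ and non-degeneracy of the trace pairing furnishes some $(\alpha_i,\beta_i)$ with $\textup{Tr}(\alpha_i x+\beta_i y)\neq 0$. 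Moreover, two columns at distinct positions $(x_1,y_1),(x_2,y_2)$ are $\mathbb{F}_p$-proportional precisely when $(x_1,y_1)=z(x_2,y_2)$ for some $z\in\mathbb{F}_p^*$, which contradicts $\overline{D}_{f,g}$ being a transversal of the $\mathbb{F}_p^*$-orbits in $D_{f,g}$. Hence no column is zero and no two are proportional, equivalently $\bar{d}^{\perp}\geqslant 3$.

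The main obstacle is the very first step: strictly speaking, the WRP/WRPB conditions allow different homogeneity exponents $h_f,h_g$ for $f$ and $g$, and genuine closure of $D_{f,g}$ under the diagonal $\mathbb{F}_p^*$-action requires $z^{h_f}=z^{h_g}$ on $\mathbb{F}_p^*$ (or a pointwise cancellation from $f(x)=0$). Since the paper's construction of $\overline{D}_{f,g}$ presupposes this invariance, one must confirm the implicit compatibility (for instance, that $h_f\equiv h_g\pmod{p-1}$ for the functions at hand); once this is granted, the remainder of the proof is a mechanical application of Theorem \ref{th:s+todd}.
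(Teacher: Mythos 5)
Your route is the same as the paper's: the paper builds $\overline{D}_{f,g}$ as a transversal of the diagonal $\mathbb{F}_p^*$-orbits on $D_{f,g}$, observes that the length and every nonzero weight are divided by $p-1$ with frequencies preserved, and quotes \cite{cao2021} for $\bar d^\perp\geqslant 3$ (your explicit no-zero-column / no-proportional-columns argument is a sound substitute for that citation). The steps you actually carry out --- orbit size exactly $p-1$, the identity $\texttt{wt}(\textbf{c}(a,b))=(p-1)\,\texttt{wt}(\overline{\textbf{c}}(a,b))$, preservation of the dimension $2m$, and the projectivity --- are all correct \emph{given} that the orbit partition exists.

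The difficulty is the caveat you relegate to your last paragraph: it is not a formality that can simply be ``granted'', and it decides the fate of the second half of the corollary. The homogeneity exponent is pinned down modulo $p-1$ by the index: the computation underlying Lemma \ref{lem:dual} gives $(l_f-1)(h_f-1)\equiv 1\pmod{p-1}$, so $h_f\equiv h_g\pmod{p-1}$ holds exactly when $l_f=l_g$. In the first case of the corollary, $l_f=l_g=\frac{p-1}{2}$, hence $f(zx)+g(zy)=z^{h}\bigl(f(x)+g(y)\bigr)$ and your argument is complete. In the second case, $p\equiv 1\pmod 8$ forces $p\geqslant 17$ and $l_f=2\neq\frac{p-1}{2}=l_g$; for instance $p=17$ yields $h_f\equiv 2$ and $h_g\equiv 1+7^{-1}\equiv 8\pmod{16}$, so for any $(x,y)\in D_{f,g}$ with $f(x)\neq 0$ and $z$ a primitive root one gets $f(zx)+g(zy)=f(x)(z^{2}-z^{8})\neq 0$. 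Thus $D_{f,g}$ is not a union of diagonal orbits, no transversal $\overline{D}_{f,g}$ with $\bigcup_{z\in\mathbb{F}_p^*}z\overline{D}_{f,g}=D_{f,g}$ exists, and the weight-division step has nothing to act on. Your proof therefore establishes the Table \ref{odd1-1} statement but not the Table \ref{odd3-1} statement; this gap is inherited from the paper itself, which asserts the partition exists after checking only the behaviour of $f(x)=0$ under scaling, without verifying that the two homogeneity exponents agree.
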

 	\begin{table}[htbp]
 		\centering
 		\caption[short text]{The weight distribution of $C_{\overline{D}_{f,g}}$ in Corollary \ref{cor:odd} when $ l_f=\frac{p-1}{2}$.}
 		\label{odd1-1}
 		\begin{tabular}{ll}
 			\hline
 			weight & frequency \\ \hline
 			$ 0 $ & $1 $ \\
 			$ p^{2m-2}$  & $p^{2m}-(p-1)p^{\gamma-1}-1 \qquad $ \\
 			$  p^{2m-2}-     \sqrt{p }^{\tau-3}   $ & $\frac{p-1}{2}\bk{p^{\gamma-1}+  \sqrt{p }^{\gamma-1}} $\\
 			$  p^{2m-2}+     \sqrt{p }^{\tau-3}  $  & $ \frac{p-1}{2}\bk{p^{\gamma-1}-  \sqrt{p }^{\gamma-1}} $ \\
 			\hline
 		\end{tabular}
 	\end{table}
 	
 	 	\begin{table}[htbp]
 	 		\centering
 	 		\caption[short text]{The weight distribution of $C_{\overline{D}_{f,g}}$ in Corollary \ref{cor:odd} when $ l_f =2 $ and $ p\equiv 1 \pmod 8 $.}
 	 		\label{odd3-1}
 	 		\begin{tabular}{ll}
 	 			\hline
 	 			weight & frequency \\ \hline
 	 			$ 0 $ & $1 $ \\
 	 			$ p^{2m-2}$  & $p^{2m} -1- E_3 -E_4 -\frac{(p-1)^2}{4} \bk{ \mathcal{N}_{f}(i)\mathcal{N}_{g}(i)+\mathcal{N}_{f}(j)\mathcal{N}_{g}(j)} $ \\
 	 			$ p^{2m-2}-  \varepsilon_f \varepsilon_g \sqrt{p}^{\tau-3}  $
 	 			& $ E_3 $\\
 	 			$ p^{2m-2}+  \varepsilon_f \varepsilon_g \sqrt{p}^{\tau-3} $
 	 			& $ E_4 $\\
 	 			$ p^{2m-2}+2  \varepsilon_f \varepsilon_g \sqrt{p}^{\tau-3}   $  & $ \frac{(p-1)^2}{4}  \mathcal{N}_{f}(i)\mathcal{N}_{g}(i) $ \\
 	 			$  p^{2m-2}-2  \varepsilon_f \varepsilon_g \sqrt{p}^{\tau-3}    $  & $ \frac{(p-1)^2}{4}  \mathcal{N}_{f}(j)\mathcal{N}_{g}(j) $ \\
 	 			\hline
 	 		\end{tabular} 	 		
 	 	\end{table}
 	 	
\begin{corollary}\label{cor:evenU}
  Suppose that $f,\,g \in \rm{WRP}$ with $ l_f=l_g= \frac{p-1}{2} $ and $s+t$ is even. Let $ \bar{d}^\perp $ be the minimum distance of $ C_{\overline{D}_{f,g}}^\perp $. Then $ \bar{d}^\perp \geqslant 3 $ and   $C_{\overline{D}_{f,g}}$ is a three-weight $[\frac{p^{2m-1}-1}{p-1}+ \varepsilon_f \varepsilon_g \sqrt{p}^{\tau-2},2m]$ linear code with its weight distribution listed in Table \ref{evenc1-1}. Moreover, the code $C_{\overline{D}_{f,g}}$ achieves Griesmer bound if $ \tau=4 $ and $ \varepsilon_f \varepsilon_g=-1 $.	
\end{corollary}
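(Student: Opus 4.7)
The plan is to reduce the statement to Theorem \ref{th:s+teven1} (specifically Table \ref{evenc1}) via the puncturing map introduced just before Corollary \ref{cor:odd}, and then carry out a small direct calculation for the Griesmer claim. The argument has four steps.

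First, the length. By construction $\overline{D}_{f,g}$ contains exactly one representative of each $\mathbb{F}_{p}^{*}$-orbit in $D_{f,g}$, so $\#\overline{D}_{f,g} = \#D_{f,g}/(p-1)$. Lemma \ref{lem:length} (with $s+t$ even and $f,g\in\rm{WRP}$) then gives $\frac{p^{2m-1}-1}{p-1} + \varepsilon_f\varepsilon_g\sqrt{p}^{\tau-2}$, which matches the claimed length.

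Second, the weights. By $\mathbb{F}_p$-linearity of the trace, $\textup{Tr}(a(zx)+b(zy)) = z\,\textup{Tr}(ax+by)$ for every $z \in \mathbb{F}_p^{*}$, so a coordinate of $\mathbf{c}(a,b)$ at $(\bar{x},\bar{y}) \in \overline{D}_{f,g}$ vanishes if and only if every coordinate in its orbit vanishes. Hence $\texttt{wt}$ on $D_{f,g}$ equals $(p-1)$ times $\texttt{wt}$ on $\overline{D}_{f,g}$, and Table \ref{evenc1-1} is obtained simply by dividing the three nonzero weights of Table \ref{evenc1} by $p-1$. Since all three punctured weights are positive, the evaluation map $(a,b)\mapsto \mathbf{c}(a,b)\vert_{\overline{D}_{f,g}}$ has the same kernel as the unpunctured one, so the dimension remains $2m$.

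Third, the bound $\bar{d}^{\perp} \geqslant 3$. The orbit selection makes the columns of a generator matrix of $C_{\overline{D}_{f,g}}$ pairwise non-proportional, ruling out dual codewords of weight one or two. This is exactly the argument used in \cite{cao2021}, which I will cite rather than repeat.

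Finally, the Griesmer optimality. The condition $\tau = 2m+s+t = 4$ together with $m \geqslant 2$ and $s,t \geqslant 0$ forces $m=2$ and $s=t=0$, so $f,g$ are weakly regular bent over $\mathbb{F}_{p^2}$. Feeding $\tau=4$ and $\varepsilon_f\varepsilon_g=-1$ into Table \ref{evenc1-1} yields parameters $[p^{2}+1,\,4,\,p^{2}-p]$ (the weights becoming $p^2,\,p^2-p,\,p^2-p+1$, the smallest being $p^2-p$). A direct computation gives
\[
\sum_{i=0}^{3}\Bgk{\left\lceil\frac{p^{2}-p}{p^{i}}\right\rceil} = (p^{2}-p) + (p-1) + 1 + 1 = p^{2}+1,
\]
which equals the length; the Griesmer bound is therefore attained. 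The only step that requires genuine calculation is this last ceiling sum, and I expect it to be the mildest of obstacles.
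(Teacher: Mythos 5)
Your proposal is correct and follows essentially the route the paper intends: the paper states that Corollary \ref{cor:evenU} ``follows directly'' from Table \ref{evenc1} via the orbit-partition puncturing and cites \cite{cao2021} for $\bar{d}^\perp \geqslant 3$, and your four steps (length from Lemma \ref{lem:length} divided by $p-1$, weights divided by $p-1$ with the kernel argument for the dimension, the citation for projectivity, and the Griesmer ceiling sum $(p^2-p)+(p-1)+1+1=p^2+1$ for the $[p^2+1,4,p^2-p]$ case) are exactly the details being elided. The computation is accurate, including the observation that $\tau=4$ forces $m=2$, $s=t=0$.
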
 
 	\begin{table}[htbp]
 		\centering
 		\caption[short text] {The weight distribution of $C_{\overline{D}_{f,g}}$ in Corollary \ref{cor:evenU}.}
 		\label{evenc1-1}
 		\begin{tabular}{ll}
 		\hline
 		weight & frequency \\ \hline
 		$ 0 $ & $1 $ \\
 		$  p^{2m-2}$  & $p^{\gamma-1}+(p-1) \varepsilon_f \varepsilon_g\sqrt{p}^{\gamma-2}-1 $ \\
 		$  p^{2m-2}+\varepsilon_f \varepsilon_g \sqrt{p}^{\tau-2}  $ & $(p-1)(p^{\gamma-1}- \varepsilon_f \varepsilon_g\sqrt{p}^{\gamma-2})$\\
 		$ p^{2m-2}+ (p-1)\varepsilon_f \varepsilon_g  \sqrt{p}^{\tau-4} $  & $ p^{2m}-p^{\gamma} $ \\
 		\hline
 		\end{tabular}
 	\end{table}
 	
 \begin{corollary}\label{cor:evenB}
 	Suppose that $f,\,g \in \rm{WRPB}$ with $ l_f=l_g= \frac{p-1}{2} $ and $s+t$ is even. Let $ \bar{d}^\perp $ be the minimum distance of $ C_{\overline{D}_{f,g}}^\perp $. Then $ \bar{d}^\perp \geqslant 3 $ and $C_{\overline{D}_{f,g}}$ is a three-weight $[\frac{p^{2m-1}-1}{p-1},2m]$ linear code with its weight distribution listed in Table \ref{evenB1-1}. 
 \end{corollary} 
 \begin{table}[htbp]
 	\centering
 	\caption[short text] {The weight distribution of $C_{\overline{D}_{f,g}}$ in Corollary \ref{cor:evenB}. }
 	\label{evenB1-1}
 	\begin{tabular}{ll}
 		\hline
 		weight & frequency \\ \hline
 		$ 0 $ & $1 $ \\
 		$  p^{2m-2}-(p-1) \varepsilon_f \varepsilon_g \sqrt{p}^{\tau-4}  $  
 		& $p^{\gamma-1}+(p-1) \varepsilon_f \varepsilon_g\sqrt{p}^{\gamma-2} $ \\
 		$  p^{2m-2}+\varepsilon_f \varepsilon_g \sqrt{p}^{\tau-4}  $
 		& $ (p-1)\bk{ p^{\gamma-1}-  \varepsilon_f \varepsilon_g \sqrt{p}^{\gamma-2}} $\\
 		$ p^{2m-2}$  & $ p^{2m}-p^{\gamma}-1 $ \\
 		\hline
 	\end{tabular}
 \end{table}
  
  \begin{example}
  	Let $ f,g: \mathbb{F}_{5^4} \rightarrow \mathbb{F}_5  $ be defined as $f(x)=\textup{Tr}(x^6)$ and $g(y)= \textup{Tr}(y^{26}-y^2)$. Then $ f,\,g \in \rm{WRP} $ with $ s=t=2 $, $ \varepsilon_f=-1$, $ \varepsilon_g=1 $ and $ l_f=l_g=2 $. Their Walsh transforms satisfy $ \widehat{\chi}_f(\alpha) \in \{0,-5^3 \zeta_5^{f^{\star}(\alpha)}\} $ and $ \widehat{\chi}_g(\beta) \in \{0,5^3 \zeta_5^{g^{\star}(\beta)}\} $, where $\alpha, \beta \in \mathbb{F}_{5^4} $ and $ f^{\star}(0)=g^{\star}(0)=0 $. Hence $C_{D_{f,g}}$ is a three-weight code with parameters $[65624,8,50000]$ and the weight enumerator $ 1+ 520 z^{50000}+ 390000 z^{52500}+ 104 z^{62500 } $. 
  	Its punctured code $ C_{\overline{D}_{f,g}} $ has parameters $[16406,8,12500]$ and the weight enumerator $ 1+ 520 z^{12500}+ 390000 z^{13125}+ 104 z^{15625} $.  
  \end{example}
  
  \begin{example}
  	Let $ f,g: \mathbb{F}_{5^3} \rightarrow \mathbb{F}_5  $ be defined as $f(x)=\textup{Tr}(x^{6}+x^2)$ and $g(y)= \textup{Tr}(\theta y^{6}+ \theta^3 y^2)$ for a primitive element $ \theta $ of $ \mathbb{F}_{5^3}^* $. Then $  f,\,g \in \rm{WRP} $ with $ s=0 $, $t=1 $, $ \varepsilon_f=-1 $, $ \varepsilon_g=1 $, $ l_f=l_g=2 $, $ \widehat{\chi}_f(\alpha) \in \{-\sqrt{5}^3 \zeta_5^{f^{\star}(\alpha)}\} $ and $ \widehat{\chi}_g(\beta) \in \{0,5^2 \zeta_5^{g^{\star}(\beta)}\} $, where $ \alpha,\beta \in \mathbb{F}_{5^3} $ and $ f^{\star}(0)=g^{\star}(0)=0 $. Actually, the function $ f $ is quadratic bent and its Walsh transform satisfies $ |\widehat{\chi}_f(\alpha)|^2 =125 $. Hence $C_{D_{f,g}}$ is a three-weight code with parameters $[3124,6,2400]$ and the weight enumerator $ 1+ 1300 z^{2400}+ 13124 z^{2500}+ 1200 z^{2600 } $. 
  	Its punctured code $ C_{\overline{D}_{f,g}} $ has parameters $[781,6,600]$ and the weight enumerator $ 1+ 1300 z^{600}+ 13124 z^{625}+ 1200 z^{650 } $. 
  \end{example}
  
  \begin{example}
  	Let $ f,g: \mathbb{F}_{5^2} \rightarrow \mathbb{F}_5  $ be defined as $f(x)=\textup{Tr}( x^2)$ and $g(y)= \textup{Tr}(\theta y^{2}- \theta  y^6)$ for a primitive element $ \theta $ of $ \mathbb{F}_{5^2}^* $. Then $  f,g  $ are quadratic bent functions in the set $ \rm{WRP} $, with $ s=t=0 $, $ \varepsilon_f=-1 $, $ \varepsilon_g=1 $, $ l_f=l_g=2 $, $ \widehat{\chi}_f(\alpha) \in \{-5 \zeta_5^{f^{\star}(\alpha)} \} $ and $ \widehat{\chi}_g(\beta) \in \{ 5  \zeta_5^{g^{\star}(\beta)} \} $, where $ \alpha,\beta \in \mathbb{F}_{5^2} $ and $ f^{\star}(0)=g^{\star}(0)=0 $. Then the code $C_{D_{f,g}}$ is a two-weight code with parameters $[104,4,80]$ and the weight enumerator $ 1+ 520 z^{80}+ 104 z^{100} $. 
  	Its punctured code $ C_{\overline{D}_{f,g}} $ has parameters $[26,4,20]$ and the weight enumerator $ 1+ 520 z^{20}+ 104 z^{25} $. The punctured code is optimal with respect to the Griesmer bound.
  \end{example}
  
 \section{Minimality of the codes and their applications}
 
 Any linear code can be applied to design secret sharing schemes by considering the access structure. However, the access structure based on a linear code is usually very complicated, and only can be determined exactly in several specific cases. One such case is when the code is minimal.

A linear code $C$ over $ \mathbb{F}_p $ is called minimal if every nonzero codeword $ \textbf{c} $ of $C$ solely covers its scalar multiples $ z\textbf{c} $ for $ z \in \mathbb{F}_p^* $. In 1998, Ashikhmin and Barg \cite{ab1998} provided a well-known criteria for minimal linear codes.
 \begin{lemma}\label{lem11}(Ashikhmin-Barg Bound \cite{ab1998})
 Let $C$ be a linear code over $ \mathbb{F}_p $. Then all nonzero codewords of $C$ are minimal, provided that
 	\begin{equation*}
 	\dfrac{w_{min}}{w_{max}}>\dfrac{p-1}{p}, 
 	\end{equation*}
 	where $w_{min}$ and $w_{max}$ stand for the minimum and maximum nonzero weights in $C$, respectively.
 \end{lemma}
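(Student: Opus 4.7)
The plan is to prove the contrapositive: if some nonzero codeword $\textbf{c}$ covers a nonzero codeword $\textbf{c}'$ that is not a scalar multiple of $\textbf{c}$, then one must have $w_{min}/w_{max}\leqslant (p-1)/p$. Write $S=\{i:c_i\ne 0\}$ and $S'=\{i:c'_i\ne 0\}$, so that the covering hypothesis becomes $S'\subseteq S$. The auxiliary family I would work with is $\{\textbf{c}-z\textbf{c}':z\in\mathbb{F}_p^*\}$; by the non-proportionality assumption each of these is a nonzero codeword, hence its weight is at least $w_{min}$.

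The key step is a coordinate-wise computation of $\texttt{wt}(\textbf{c}-z\textbf{c}')$. For $i\notin S$ both $c_i$ and $c'_i$ vanish; for $i\in S\setminus S'$ the $i$-th coordinate equals $c_i\ne 0$; and for $i\in S'$ the $i$-th coordinate vanishes exactly when $z=c_i/c'_i$. Setting $N_z=\#\{i\in S':c_i/c'_i=z\}$, this yields $\texttt{wt}(\textbf{c}-z\textbf{c}')=\texttt{wt}(\textbf{c})-N_z$. The crucial observation is that $c_i\ne 0$ for every $i\in S'$ (since $S'\subseteq S$), so the ratio $c_i/c'_i$ is never zero; consequently $\sum_{z\in\mathbb{F}_p^*}N_z=|S'|=\texttt{wt}(\textbf{c}')$, and the sum involves exactly $p-1$ terms.

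Finally I would combine the two weight bounds. From $\texttt{wt}(\textbf{c}-z\textbf{c}')\geqslant w_{min}$ one gets $N_z\leqslant \texttt{wt}(\textbf{c})-w_{min}\leqslant w_{max}-w_{min}$, and summing over $z\in\mathbb{F}_p^*$ produces $\texttt{wt}(\textbf{c}')\leqslant(p-1)(w_{max}-w_{min})$. Since also $\texttt{wt}(\textbf{c}')\geqslant w_{min}$, this delivers $w_{min}\leqslant(p-1)(w_{max}-w_{min})$, which rearranges to $w_{min}/w_{max}\leqslant (p-1)/p$, contradicting the hypothesis of the lemma.

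There is essentially no serious obstacle; the entire argument is a short double count once the family $\textbf{c}-z\textbf{c}'$ is introduced. The only subtle bookkeeping is the exclusion of $z=0$ from the sum, which is guaranteed by $S'\subseteq S$ and is precisely what produces the factor $p-1$ responsible for the shape of the bound.
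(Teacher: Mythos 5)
Your argument is correct and complete: the paper itself states this lemma as a cited result from Ashikhmin and Barg without reproducing a proof, and your double count over the family $\mathbf{c}-z\mathbf{c}'$, $z\in\mathbb{F}_p^*$, is precisely the standard (original) argument for this bound. The one point worth being careful about --- that $c_i\neq 0$ for every $i$ in the support of $\mathbf{c}'$, so that each such coordinate dies for exactly one \emph{nonzero} $z$ and the sum $\sum_{z\in\mathbb{F}_p^*}N_z$ accounts for all of $\mathrm{wt}(\mathbf{c}')$ --- is exactly the point you flagged and justified, so there is no gap.
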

 
 Now we will show under what circumstances the constructed linear codes are minimal according to Lemma \ref{lem11}.
 
 \begin{theorem}
 We have the following bounds on parameters of the code $C_{D_{f,g}}$.
 
$ (1) $	The linear codes described in Tables \ref{odd1},\ref{odd2} and \ref{odd3} are minimal provided when $\varepsilon_f \varepsilon_g \in \{ \pm 1 \} $ and $2m-s-t \geqslant 5$. 
 	
$  	(2) $ The linear codes described in Tables \ref{evenc1}, \ref{evenc2} \ref{evenc3} and \ref{evenc4} are minimal provided when $\varepsilon_f \varepsilon_g = 1 $ and $2m-s-t \geqslant 4$, or $\varepsilon_f \varepsilon_g =  -1$ and $2m-s-t \geqslant 6$.
 	
$ (3) $ The linear codes described in Tables \ref{evenB1}, \ref{evenB2}, \ref{evenB3} and \ref{evenB4} are minimal provided when $\varepsilon_f \varepsilon_g \in \{ \pm 1 \} $ and $2m-s-t \geqslant 4$.	
 \end{theorem}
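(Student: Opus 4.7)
The plan is to apply the Ashikhmin--Barg criterion (Lemma \ref{lem11}) directly in each of the three cases: identify $w_{\min}$ and $w_{\max}$ from the relevant weight tables of Section 4 and verify the strict inequality $w_{\min}/w_{\max} > (p-1)/p$. Writing $\gamma = 2m-s-t$, each such inequality will reduce to a comparison between a power of $\sqrt{p}$ and a small polynomial in $p$; the parity of $\gamma$, which is odd in case (1) and even in cases (2) and (3), is what produces the different thresholds $5$, $4$, and $6$.

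For case (1), every nonzero weight in Tables \ref{odd1}--\ref{odd3} has the form $(p-1)\bigl(p^{2m-2} + c\,\varepsilon_f\varepsilon_g \sqrt{p}^{\tau-3}\bigr)$ with $|c|\le 2$, the extreme being attained by the last two rows of Table \ref{odd3}. Cross-multiplication turns the Ashikhmin--Barg condition into $\sqrt{p}^{\gamma-1}>2(2p-1)$. Since $p\equiv 1\pmod 4$ forces $p\ge 5$, and $\gamma$ odd with $\gamma\ge 5$ yields $\sqrt{p}^{\gamma-1}\ge p^2\ge 25 > 4p-2$, the conclusion is immediate.

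For case (2), where $f,g\in\mathrm{WRP}$ and $s+t$ is even, rewriting every nonzero weight in Tables \ref{evenc1}--\ref{evenc4} as $(p-1)p^{2m-2}+K\sqrt{p}^{\tau-4}$ shows that the coefficients $K$ lie in $\{0,\pm\tfrac{p}{2},\pm p(p-3),\pm p(p-5),\pm(p-1)^2,\pm p(p-1)\}\varepsilon_f\varepsilon_g$, so the true extremes are $w_1=(p-1)p^{2m-2}$ and $w_2=(p-1)\bigl(p^{2m-2}+\varepsilon_f\varepsilon_g\sqrt{p}^{\tau-2}\bigr)$. If $\varepsilon_f\varepsilon_g=1$ then $w_{\min}=w_1$, $w_{\max}=w_2$, and the criterion reduces to $\sqrt{p}^{\gamma-2}>p-1$, valid for $\gamma\ge 4$; if $\varepsilon_f\varepsilon_g=-1$ the roles swap and the condition becomes $\sqrt{p}^{\gamma}>\sqrt{p}^{\tau+2-2m+2}$, equivalent to $\gamma>4$, which, combined with the even parity of $\gamma$, forces $\gamma\ge 6$.

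For case (3), with $f,g\in\mathrm{WRPB}$, the same rewriting applied to Tables \ref{evenB1}--\ref{evenB4} shows that the extreme weights are $(p-1)\bigl(p^{2m-2}-(p-1)\varepsilon_f\varepsilon_g\sqrt{p}^{\tau-4}\bigr)$ and $(p-1)\bigl(p^{2m-2}+\varepsilon_f\varepsilon_g\sqrt{p}^{\tau-4}\bigr)$; here one must first verify that the auxiliary coefficients $p+1$ and $3p+1$ from Tables \ref{evenB3} and \ref{evenB4} do not exceed $(p-1)^2$ in magnitude, which reduces to $p(p-3)\ge 0$ and $p(p-5)\ge 0$ and holds for all admissible $p$. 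The Ashikhmin--Barg criterion then collapses to $\sqrt{p}^{\gamma}>p^2-1$ or $\sqrt{p}^{\gamma}>p^2-p+1$ according to the sign of $\varepsilon_f\varepsilon_g$, and both inequalities hold at $\gamma=4$. The only genuine subtlety in the whole proof is the parity correction in case (2) with $\varepsilon_f\varepsilon_g=-1$: the raw bound gives $\gamma>4$, and one must remember that $\gamma$ is constrained to be even to jump to $\gamma\ge 6$.
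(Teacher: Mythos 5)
Your proof is correct and follows the only natural route---a direct application of the Ashikhmin--Barg criterion (Lemma \ref{lem11}) to upper and lower bounds on the extreme nonzero weights read off from each table; the paper in fact states this theorem without printing a proof, and your argument is clearly the intended one, including the key parity observation that in case (2) with $\varepsilon_f\varepsilon_g=-1$ the raw bound $\gamma>4$ must be promoted to $\gamma\geqslant 6$ because $s+t$ even forces $\gamma$ even. Three small slips in your intermediate computations should be repaired, though none affects any conclusion: in case (1) the chain $p^{2}\geqslant 25>4p-2$ is false for $p\geqslant 7$ (you want $p^{2}>4p-2$ directly, which holds for every $p\geqslant 5$ since $p^2-4p+2=(p-2)^2-2>0$); in case (2) the coefficient contributed by the weight $(p-1)\bigl(p^{2m-2}+\tfrac12\varepsilon_f\varepsilon_g\sqrt{p}^{\tau-2}\bigr)$ of Table \ref{evenc2} is $\pm\tfrac{p(p-1)}{2}$, not $\pm\tfrac{p}{2}$ (it is still dominated by $p(p-1)$, so your identification of the extremes stands); and in the $\varepsilon_f\varepsilon_g=-1$ subcase of case (2) the displayed exponent $\tau+2-2m+2$ is garbled, although the equivalence you assert is right, since the criterion there reads $p^{2m-2}>\sqrt{p}^{\,\tau}$, i.e.\ $4m-4>2m+s+t$, i.e.\ $\gamma>4$.
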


\begin{remark}
	Our punctured codes $C_{\overline{D}_{f,g}}$ are minimal for almost all cases.
\end{remark}
 
 It should be noticed that the minimum distance of $C_{D_{f,g}}^\perp$ equals $ 2 $ since there are two linearly dependent entries in each codeword in $C_{D_{f,g}}$. So under the framework stated in \cite{ding2003}, the minimal codes described in Theorems \ref{th:s+todd}, \ref{th:s+teven1} and \ref{th:s+tevenB} can be employed to construct high democratic secret sharing schemes with new parameters. The punctured codes are projective and minimal, as we have discussed previously. So they are also suitable for secret sharing schemes. The projective three-weight codes in Tables \ref{odd1-1}, \ref{evenc1-1} and \ref{evenB1-1} can be applied to design association schemes \cite{Calderbank1984}.

\section{Conclusion}
 The paper studied the construction of linear codes using defining set from two weakly regular plateaued functions with index $ \frac{p-1}{2} $ for $ p \equiv 1 \pmod 4 $, and hence, this is an extension of the results in \cite{cao2021}, \cite{Sinak2022} and \cite{wu2020}. The punctured codes were also investigated and we found optimal codes among them. Moreover, our codes are suitable for designing association schemes and secret sharing schemes.


\end{document}